\documentclass[sigconf]{acmart}

\usepackage{amsmath,amsthm,mathrsfs}
\usepackage[ruled,vlined]{algorithm2e}
\usepackage{dsfont}
\usepackage{algorithmic}
\usepackage{graphicx}
\usepackage{textcomp}
\usepackage{xcolor}
\usepackage{thmtools}

\usepackage{float}
\usepackage{booktabs}
\usepackage{multirow}
\usepackage{array}
\usepackage{threeparttable}
\usepackage{siunitx}

\begin{document}

\title{On the Reachability Problem for One-Dimensional Thin Grammar Vector Addition Systems}

\author{Chengfeng Xue}
\email{xcf123@sjtu.edu.cn}
\affiliation{
  \institution{Shanghai Jiao Tong University}
  \city{Shanghai}
  \state{}
  \country{China}
}

\author{Yuxi Fu}
\email{fu-yx@cs.sjtu.edu.cn}
\affiliation{
  \institution{Shanghai Jiao Tong University}
  \city{Shanghai}
  \state{}
  \country{China}
}

\begin{abstract}
    Vector addition systems with states (VASS) are a classic model in concurrency theory. Grammar vector addition systems (GVAS), equivalently, pushdown VASS, extend VASS by using a context-free grammar to control addition. In this paper, our main focus is on the reachability problem for one-dimensional thin GVAS (thin 1-GVAS), a structurally restricted yet expressive subclass. By adopting the index measure for complexity, and by generalizing the decomposition technique developed in the study of VASS reachability to grammar-generated derivation trees of GVAS, an effective integer programming system is established for a thin 1-GVAS. 
    In this way, a nondeterministic algorithm with $\mathbf{F}_{2k}$ complexity is obtained for the reachability of thin 1-GVAS with index $k$, yielding a tighter upper bound than the previous one.
\end{abstract}

\begin{CCSXML}
<ccs2012>
<concept>
<concept_id>10003752.10003753.10003761</concept_id>
<concept_desc>Theory of computation~Concurrency</concept_desc>
<concept_significance>500</concept_significance>
</concept>
<concept>
<concept_id>10003752.10003753.10003754</concept_id>
<concept_desc>Theory of computation~Computability</concept_desc>
<concept_significance>500</concept_significance>
</concept>
</ccs2012>
\end{CCSXML}

\keywords{vector addition systems, context-free grammar, grammar vector addition systems, reachability}

\maketitle
\iffalse
%%%%%%%%%%%%%%%%%%%%%%%%%
\newtheorem{theorem}{Theorem}[section]
\newtheorem{lemma}[theorem]{Lemma}
\newtheorem{definition}[theorem]{Definition}
\newtheorem{proposition}[theorem]{Proposition}
\newtheorem{corollary}[theorem]{Corollary}

\theoremstyle{definition}
\newtheorem{remark}[theorem]{Remark}
\newtheorem{example}[theorem]{Example}
%%%%%%%%%%%%%%%%%%%%%%%%%
\fi

\newcommand{\mrm}[1]{\mathrm{#1}}

\newcommand{\norm}[1]{\left\Vert{#1}\right\Vert}
\newcommand{\norminf}[1]{\left\Vert{#1}\right\Vert_{\infty}}
\newcommand{\PSPACE}{{\textbf{PSPACE}}}
\newcommand{\EXPSPACE}{{\textbf{EXPSPACE}}}
\newcommand{\EXP}{{\mrm{exp}}}
\newcommand{\TWOEXP}{{\mrm{2\text{-}exp}}}

\newcommand{\NP}{{\textbf{NP}}}
\newcommand{\FF}{\mathbf{F}}
\newcommand{\FFF}{\mathscr{F}}

\newcommand{\bbn}{\mathbb{N}}
\newcommand{\bbnw}{\bbn_{\omega}}
\newcommand{\bbnww}{\bbn_{\pm \omega}}
\newcommand{\bbz}{\mathbb{Z}}
\newcommand{\bbq}{\mathbb{Q}}
\newcommand{\para}[1]{\vspace{1mm}\textbf{#1}}
\newcommand{\Chi}{\mathcal{X}}
\newcommand{\Rho}{\mathcal{P}}
\newcommand{\defeq}{\stackrel{\scriptscriptstyle{\mathrm{def}}}{=}}

\renewcommand{\Vec}[1]{\mathbf{#1}}
\renewcommand{\vec}[1]{\mathbf{#1}}
\newcommand{\tl}[1]{\widetilde{#1}}
\newcommand{\sol}{\mathrm{sol}}
\newcommand{\solmin}{\mathrm{sol}_\mathrm{min}}
\newcommand{\solh}{\mathrm{Sol}_{H}}
\newcommand{\solt}{\mathrm{sol}_\mathcal{T}}
\newcommand{\ccf}{\theta_1(\CC)}
\newcommand{\ccb}{\theta_2(\CC)}
\newcommand{\ccr}{\theta_3(\CC)}

\newcommand{\gdim}[1]{\mathrm{dim}{(#1)}}
\newcommand{\rank}[1]{\mathrm{rank}{(#1)}}

\newcommand{\reach}{\mrm{\mathbf{Reach}}}
\newcommand{\cover}{\mrm{\mathbf{Cover}}}

\newcommand{\size}[1]{\mathrm{size}{(#1)}}
\newcommand{\poly}[1]{\mathrm{poly}{(#1)}}
\newcommand{\sub}[1]{\mathrm{sub}{(#1)}}
\newcommand{\ppath}[1]{\mathrm{path}{(#1)}}
\newcommand{\leaf}[1]{\mathrm{Leaf}{(#1)}}
\newcommand{\seg}[1]{\mathrm{seg}{(#1)}}
\newcommand{\cyc}[1]{\mathrm{cyc}{(#1)}}
\newcommand{\subdiv}[1]{\mathrm{Div}{(#1)}}
\newcommand{\subdivs}[1]{\mathrm{Div}'{(#1)}}

\newcommand{\indeg}[1]{\mrm{deg_{in}}{(#1)}}
\newcommand{\outdeg}[1]{\mrm{deg_{out}}{(#1)}}

\newcommand{\CC}{\mathcal{C}}
\newcommand{\PG}{\tl{G}}
\newcommand{\GG}{\mathcal{G}}
\newcommand{\EE}{\mathcal{E}}

\newcommand{\TT}{\mathcal{T}}
\newcommand{\NTT}{\mathcal{T}'}

\newcommand{\KT}{\mathcal{KT}}
\newcommand{\NKT}{\mathcal{KT}'}

\newcommand{\constr}{\mrm{\mathbf{Constr}}}
\newcommand{\ortho}{\mrm{\mathbf{Ortho}}}
\newcommand{\clean}{\mrm{\mathbf{Clean}}}
\newcommand{\decompA}{\mrm{\mathbf{Decomp_A}}}
\newcommand{\decompC}{\mrm{\mathbf{Decomp_C}}}
\newcommand{\refA}{\mrm{\mathbf{Ref_A}}}
\newcommand{\refC}{\mrm{\mathbf{Ref_C}}}

\newcommand{\Bounded}{\mrm{B}}
\newcommand{\RhoB}{\Rho_{\mrm{\Bounded}}}
\newcommand{\Pump}{\mrm{Pump}(\CC)}

\newcommand{\RhoSCC}{\Rho_{\mrm{SCC}}}
\newcommand{\RhoL}{\Rho_{\mrm{L}}}
\newcommand{\RhoR}{\Rho_{\mrm{R}}}
\newcommand{\rhoD}{\rho_{\mrm{D}}}
\newcommand{\ChiSCC}{\Chi_{\mrm{SCC}}}
\newcommand{\ChiL}{\Chi_{\mrm{L}}}
\newcommand{\ChiR}{\Chi_{\mrm{R}}}
\newcommand{\SigmaL}{\Sigma_{\mrm{L}}}
\newcommand{\SigmaR}{\Sigma_{\mrm{R}}}
\newcommand{\lP}{\vec l_{\mrm{src}}}
\newcommand{\lQ}{\vec l_{\mrm{tgt}}}
\newcommand{\rP}{\vec r_{\mrm{src}}}
\newcommand{\rQ}{\vec r_{\mrm{tgt}}}
\newcommand{\lPC}{\lP^\CC}
\newcommand{\lQC}{\lQ^\CC}
\newcommand{\rPC}{\rP^\CC}
\newcommand{\rQC}{\rQ^\CC}

\newcommand{\capture}{\rhd}

\section{Introduction}
\label{section:intro}
Vector addition systems with states (VASS) are a fundamental model in concurrency theory. The most widely studied problem is the reachability problem, addressing the issue of whether an initial configuration can reach a target configuration via a nonnegative run. 

The reachability problem for VASS was recently shown to be Ackermann-complete~\cite{DBLP:conf/icalp/FuYZ24_F_d_KLMST,DBLP:conf/stoc/CzerwinskiLLLM19_not_elementary,DBLP:conf/focs/CzerwinskiO21_ackermannian_complete,DBLP:conf/fsttcs/CzerwinskiJ0LO23_Lowerbound, DBLP:conf/lics/LerouxS19_KLMST}. For VASS parameterized by dimension $d$ ($d$-VASS), the reachability problem is in $\FF_d$ \cite{DBLP:conf/icalp/FuYZ24_F_d_KLMST}, where $\FF_d$ denotes the fast-growing hierarchy defined in \cite{DBLP:journals/toct/Schmitz16_Hierarchy}. The reachability problem is solved using the \emph{KLM algorithm}, which was first introduced in the 1980s~\cite{Kosaraju,LAMBERT199279,mayr1981} and continuously refined over decades~\cite{DBLP:conf/icalp/FuYZ24_F_d_KLMST,DBLP:conf/lics/LerouxS19_KLMST,Leroux2015demystifying}. Regarding the lower bound, Czerwi\'nski et al.\ proved the $\FF_d$ hardness for $(2d+3)$-dimensional VASS through iterative refinements~\cite{DBLP:conf/focs/CzerwinskiO21_ackermannian_complete,DBLP:conf/fsttcs/CzerwinskiJ0LO23_Lowerbound}. Although \PSPACE{}-completeness \cite{DBLP:conf/concur/Zheng25_geo_2_d} has been established for 2-VASS, and 2-\EXPSPACE{} upper bound \cite{DBLP:conf/icalp/CzerwinskiJ0O25_3VASS} has been obtained for 3-VASS, the  gap between the upper and bounds remains huge for higher-dimensional VASS reachability problems.

Many variants of VASS have been investigated. 
Grammar vector addition systems (GVAS) are vector addition systems controlled by context-free grammars (CFGs). A $d$-dimensional GVAS  ($d$-GVAS) is a CFG whose terminals are vectors in $\bbz^d$. Pushdown VASS (PVASS) extends VASS with a pushdown stack. By the classical equivalence between pushdown automata and CFGs, GVAS and PVASS are equivalent. The coverability, boundedness and reachability problems of GVAS have been studied. Leroux, Sutre and Totzke established an \textbf{EXPTIME} upper bound for the boundedness problem of 1-GVAS \cite{DBLP:conf/rp/LerouxST15_Boundedness}, and an \EXPSPACE{} upper bound for the coverability of 1-GVAS \cite{DBLP:conf/icalp/LerouxST15_Coverability}. Bizi{\`{e}}re and Czerwi\'nski showed the decidability of the reachability of 1-GVAS \cite{DBLP:conf/stoc/BiziereC25_BC25}, and later, Guttenberg, Keskin and Meyer proved the decidability of $d$-GVAS reachability for arbitrary dimensions \cite{DBLP:journals/corr/abs-2504-05015_PVASS_decidable}. 

There is a clear dividing line between the hardness of different GVASes, known as \emph{thinness}. A GVAS is \emph{thin}, if no nonterminal can produce more than one copy of itself, i.e., there exists no derivation of the form $X\Rightarrow \alpha X\beta X\gamma$. Thinness is closely related to the \emph{index}, introduced by Atig and Ganty~\cite{DBLP:conf/fsttcs/AtigG11_AG11}. A GVAS is $k$-indexed, if in every derivation, there exists a derivation in which at most $k$ nonterminals appear simultaneously. A GVAS is thin if and only if it is finitely indexed. The significance of thinness lies in the fact that many studies reduce general GVAS to thin GVAS \cite{DBLP:conf/stoc/BiziereC25_BC25,DBLP:conf/icalp/LerouxST15_Coverability}. 
Atig and Ganty also proved the decidability of the reachability of thin GVAS~\cite{DBLP:conf/fsttcs/AtigG11_AG11}.

The best-known upper bound for thin GVAS is derived from \cite{DBLP:conf/lics/GuttenbergCL25_CmeVASS}. Guttenberg, Czerwi\'nski and Lasota generalized the edge relations of VASS from vector additions to reachability relations, and introduced a universal algorithm to approximate the reachability sets of such extended VASS. As a consequence, the reachability problem for thin $d$-GVAS indexed by $k$ is in $\FF_{4kd+2k-4d}$ via reduction to their extended model. For thin 1-GVAS, the upper bound is $\FF_{6k-4}$. Furthermore, the reduction from 1-GVAS to thin 1-GVAS proposed in \cite{DBLP:conf/stoc/BiziereC25_BC25} implies that the reachability problem for general 1-GVAS is Ackermannian. Whether there exists a better upper bound for thin 1-GVAS (and consequently for 1-GVAS), or even an elementary one, remains open.

Compared to VASS, the reachability problem for GVAS presents several significant challenges. Firstly, it is a folklore that the reachability problem for $d$-GVAS can be reduced to the coverability problem for $(d+1)$-GVAS. When investigating GVAS reachability, simple properties of coverability, such as the existence of small coverability certificates for $d\geq 2$, are not available. Secondly, the finite reachability set of $d$-GVAS is hyper-Ackermannian \cite{DBLP:conf/csl/LerouxPS14_Hyperackermannian}. An Ackermannian reachability set can already be constructed for thin 1-GVAS. Hence, any method relying on brute-force enumeration of reachability certificates (e.g., the KLM algorithm) cannot overcome the Ackermannian barrier, even for thin 1-GVAS.

\emph{Our contribution.} In the present paper, we study the reachability problem for thin 1-GVAS. Our main contribution is the following theorem.

\begin{restatable}{theorem}{TheoremTwok}
    The reachability problem for thin 1-GVAS indexed by $k$ is in $\FF_{2k}$.
    \label{theorem:2k}
\end{restatable}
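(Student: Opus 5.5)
We plan to prove Theorem~\ref{theorem:2k} by generalizing the KLM decomposition from runs of VASS to derivation trees of a thin 1-GVAS. The first step is to put a thin grammar of index $k$ into a normal form. In this form every nonterminal $X$ lies in a strongly connected component of the "produces" relation. Because the grammar is thin, each derivation of $X$ from $X$ is linear: it has the shape $X\Rightarrow \alpha X\beta$, where $\alpha,\beta$ derive only lower-level nonterminals. The components are ordered in a DAG whose depth is tied to the index $k$.

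A derivation tree then unfolds along its \emph{spine}. At each level, the spine is a path through one component, and it carries a left context (effects produced before the recursive child) and a right context (effects produced after it). So a level of the grammar behaves like a pair of coupled one-dimensional VASS cycles, one for the left side and one for the right side. The side subtrees hanging off the spine are handled recursively, one level lower.

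The second step is to define, for each level, a \emph{decomposition} analogous to a KLM sequence. It consists of a sequence of components. Each component records its source and target counters on both the left and right sides, with entries in $\bbn_\omega$, together with an integer programming system. In that system the variables count how often each production is used and fix the boundary values. The constraints are the Euler/Kirchhoff conditions for spines, plus the linear effect equations in which left and right effects are summed.

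We then define a \emph{perfectness} condition in the usual two parts. First, every unbounded variable must be pumpable by a witness that keeps counters nonnegative on both sides. Second, every $\omega$ coordinate must be coverable from the source, and backward coverable from the target. With these in place we prove the characteristic lemma: a perfect decomposition has a solution of its integer program exactly when a real derivation tree exists. The proof follows the usual pump-and-compensate argument, except that pumping a spine cycle adds $\alpha$ on the left and $\beta$ on the right at the same time. This simultaneity is where dimension one is essential. A single counter lets us bound the values needed before pumping becomes positive, and lets us repair a left deficit by pumping a right-positive cycle.

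The third step is a refinement procedure. Whenever a decomposition is not perfect, we replace it by finitely many decompositions that are strictly smaller in a well-founded rank. The rank must be chosen so that every refinement strictly decreases it; we index it by level and by the number of unbounded spine variables, so that each of the $k$ levels contributes two coordinates (left and right). By the length-function theorems for lexicographic ordinals of size $\omega^{2k}$, the procedure then runs in $\FF_{2k}$.

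The main obstacle is the refinement step for a non-pumpable component. We must show that if no simultaneous pumping witness exists, then one of the counters is bounded by a primitive-recursive value in the current parameters. Once such a bound holds, that component can be unfolded into bounded sub-components of strictly lower rank, and an Ackermannian blow-up occurs only once per ordinal coordinate. We expect to prove this bound using coverability for 1-GVAS at lower levels, taken inductively with elementary bounds in the style of~\cite{DBLP:conf/icalp/LerouxST15_Coverability}, combined with small-solution bounds for integer programs.

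The refinement must also keep the number of components controlled, and each decomposition must have a nondeterministically guessable size. Together with the NP check of the integer program at the leaves of the refinement, this yields a nondeterministic $\FF_{2k}$ algorithm, and the $\FF_{2k}$ class is closed under the nondeterminism used.
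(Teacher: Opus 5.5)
Your architecture matches the paper's: KLM-style decomposition of derivation trees along spines with a left and a right counter, pumping via 1-GVAS coverability, and a guessed certificate checked by integer programming. But the step meant to deliver $\FF_{2k}$ does not go through, and since decidability and $\FF_{6k-4}$ are already known, that step is the content of the theorem.

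\emph{The count of rank coordinates.} With $2k$ coordinates and exponential control, Lemma~\ref{lemma:bad_seq} gives length functions in $\FFF_{2+2k-1}=\FFF_{2k+1}$. Under the paper's conversion $\FFF_{2k-1}\mapsto\FF_{2k}$, that only yields $\FF_{2k+2}$. With merely primitive-recursive control $f\in\FFF_\gamma$, $\gamma>2$, you lose $\gamma-2$ further levels.

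The paper needs exactly $2k-2$ coordinates and $f\in\FFF_2$:
\begin{itemize}
\item The rank of a nontrivial component is $(\iota(\CC),\gdim{\CC})$.
\item $\iota(\CC)$ is the index of the \emph{side} grammars under $\RhoL,\RhoR$, so it lies in $[1,k-1]$; index $k$ itself never occurs.
\item $\gdim{\CC}\in\{1,2\}$ is the dimension of the span of top-cycle effects in $\bbz^2$; $0$ is excluded by the most-simplified assumption.
\end{itemize}
If your second coordinate really counts unbounded variables, it is not bounded in terms of $k$. It is also not what the decompositions decrease. The paper's decrease arguments are specific to geometric dimension: a top cycle through a bounded rule cannot lie in the span of cycles avoiding it, and hard-encoding a bounded left counter makes the new components left-orthogonal, giving $2\mapsto1$ and $1\mapsto$ trivial.

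The non-pumpability threshold must likewise be $\exp(\size{\CC})$, not just primitive recursive. The paper obtains this from the relation-extended 2-VASS on $\ChiSCC$ labelled by the coverability functions $\delta_A$, the estimate $\delta_A(s)\ge s-\Delta$ with $\Delta\le\exp(\size{\CC})$, cycle removal, and fastest returning paths. It decides $\exp$-pumpability in \EXPSPACE{} via 1-GVAS coverability.

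\emph{Orthogonalization and recursion on the index.} This ingredient is missing from your proposal. Suppose every top cycle of a component has left effect $0$. This happens, e.g., with a spine rule $X\rightarrow AX$ where every complete $A$-derivation has displacement $0$ (say $A\rightarrow BC$, $B\rightarrow -1$, $C\rightarrow 1$). Then:
\begin{itemize}
\item The left counter is constant along the spine, and no cycle can raise it.
\item Whether the $A$-subtrees can be built nonnegatively at that fixed value is an exact reachability question for a grammar of smaller index.
\item Neither the integer program (which only sees $\bbz$-effects) nor pump-and-compensate can certify this.
\item No rank-decreasing refinement helps either: hard-encoding a counter that is already constant on cycles does not shrink the span of cycle effects.
\end{itemize}
The paper handles this with a rank-preserving cleaning step. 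It hard-encodes the uniform left configurations into the spine nonterminals and excludes that side from $\Pump$. It then replaces $\RhoL$ by certificate rules $A\rightarrow \vec l_Y-\vec l_X$. These are instances $\reach(\GG_A,\vec l_X,\vec l_Y)$ with $\iota(\GG_A)<k$, which Algorithm~\ref{alg:guessKT} checks recursively. The final bound is an induction on $k$ via $g(M)\cdot g_{k-1}(g(M))\in\FFF_{2k-1}$.

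Relatedly, the two sides must stay separate counters with separate effect equations, not summed. The left counter moves only with left-subtree displacements and the right one only with right-subtree displacements. So a right-positive cycle cannot repair a left deficit, and this two-counter structure is exactly where the factor $2$ in $2k$ comes from.
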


Since $k\geq 2$ for all nontrivial GVAS, the $\FF_{2k}$ upper bound improves upon the previous $\FF_{6k-4}$ result. To achieve this, we generalize the KLM decomposition for VASS to thin GVAS. The core idea of the KLM decomposition is as follows. A nonnegative run ($\bbn$-run) of VASS can be divided into many segments, each of which forms a strongly-connected component. For each such component, an integer programming system--the characteristic system--is constructed to describe the $\bbz$-run, where non-negativity is not ensured.
However, as long as the characteristic system satisfies a certain condition, called perfectness, one can construct an $\bbn$-run from a $\bbz$-run. To ensure perfectness, decompositions are performed on components iteratively. The termination of decomposition is guaranteed using a suitable \emph{ranking function}.

We apply the above approach to thin GVAS by dividing the derivation tree into exponentially many ``strongly-connected'' segments. To capture the derivation behavior of these segments, we introduce the \emph{KLM tree}, which can be applied to thin GVAS of arbitrary dimension. Furthermore, we define a ranking function that accounts for both the dimension and the index. In the light of the difficulty of the coverability problem and the pumping techniques for GVAS, we further restrict our attention to the one-dimensional case in decomposition step. We employ graph-theoretic techniques together with coverability results for thin 1-GVAS, and finally bound the length of the decomposition sequence by~$\FF_{2k}$, which yields our conclusion.

\emph{Outline.}
Section~\ref{section:preliminaries} introduces the notation and preliminaries.
Section~\ref{section:GVAS} presents thin GVAS and their index.
Section~\ref{section:KLM_tree} introduces KLM trees as a structural representation of reachability.
Section~\ref{section:decomposing_KLM_tree} develops refinement operations leading to perfect KLM trees.
Section~\ref{section:perfect_KLM_tree} establishes small perfect KLM trees as reachability certificates, from which our main result follows.

\section{Preliminaries}
\label{section:preliminaries}

We use $\bbn,\bbn_+,\bbz,\bbq$ to denote the set of natural numbers, positive natural numbers, integers, and rational numbers respectively. 
Let $\bbnw \defeq \bbn \cup \{+\omega\}, \bbnww \defeq \bbn \cup \{+\omega, -\omega\}$, where $\omega$ stands for unbounded element.
We postulate that $\forall x\in \bbn, -\omega <x<+\omega$ and $-\omega \pm x = -\omega, +\omega \pm x = +\omega$. Let $x,y\in \bbn$. 
We use $[x,y]$ to denote the number set $\{x,x+1,...,y\}$ if $y\geq x$. Otherwise, $[x,y] = \varnothing$. 
We use bold lowercase letters to denote vectors. 
For any $\Vec{v}\in \bbq^d$, let $\Vec{v}[i]$ be the $i$-th coordinate of $\Vec{v}$. Given $\Vec{v},\vec{u} \in \bbq^d$, we have $\vec{v}\leq \vec{u}$ if $\vec{v}[i]\leq \vec{u}[i]$ for every $i\in [1,d]$. 
We denote by $\norm{\Vec{v}}_1, \norm{\Vec{v}}_\infty$ the $l_1$-norm and the $l_\infty$-norm of $\Vec{v}$ respectively.

We use uppercase letters for sets. Given sets $S,T$, by $S \setminus T$ we denote the operator $S$ minus $T$. Unless specified otherwise, $|S|$ denotes the cardinality of the set $S$. Let $S \subseteq \bbz^d$ be a set of vectors. The $\bbq$-generated set of $S$ is:
\[\bbq(S) \defeq \left\{\sum_{\vec v_i\in S} \lambda_i \cdot \vec v_i\mid \lambda_i\in \bbq \right\}\]
which is also called the vector space spanned by $S$. The definitions of $\bbn(S), \bbz(S)$ are similar. The geometric dimension $\gdim{S}$ is defined as the dimension of the vector space $\bbq(S)$. If $S = \{\vec 0\}$ or $S = \varnothing$, the geometric dimension is $0$. Let $S\subseteq \bbz$ be a set of integers. We define $\sup S$ as the supremum of $S$. If $S$ is unbounded above, we write $\sup S = +\omega$.

In this paper, let $\poly{x}$ be any feasible polynomial function of $x$ according to the context. We abbreviate $2^{\poly{x}}$ to $\EXP(x)$. The fast-growing functions are given as follow. $F_1 : \bbn \rightarrow \bbn, F_1(n) \defeq 2n$ and $F_d :\bbn \rightarrow \bbn, F_d(n) \defeq F_{d-1}^{n} (1)$ for $d>1$. It is clear that $F_2(n) = 2^n$ and $F_3(n) = \mathrm{TOWER}(n)$. The Ackermann function is defined through diagonalization. $F_{\omega} :\bbn \rightarrow \bbn, F_{\omega}(n) \defeq F_{n}(n)$. We introduce the extended Grzegorczyk hierarchy \cite{DBLP:journals/toct/Schmitz16_Hierarchy}. The $d$-th level is the class of functions computable by deterministic Turing machines in $O(F_{d}^c(n))$ time where $n$ is the input size and $c$ is a constant. Formally, we have:
\[\FFF_d \defeq \bigcup_{c<\omega} \textbf{FDTIME}(F_d^c(n)).\]
We define the fast-growing hierarchy of decision problems. For $d\geq 2$, we have: \[\FF_d \defeq \bigcup_{p\in \FFF_{<d}} \textbf{DTIME}(F_d(p(n))).\] Clearly, $\FFF_d$ is closed under finite many compositions of functions $F_{\leq d}$. Moreover, in the definition of $\FF_d$, the choice of \textbf{DTIME} or \textbf{NTIME} is irrelevant for $d\geq 3$. Function $\exp(n)$ is in $\FFF_2$ and the complexity class \textbf{ELEMENTARY} is in $\FF_3$.

The fast-growing hierarchy provides length bounds for \emph{bad} sequences. 
Given a sequence $\vec x_1,...,\vec x_l$ with $\vec x_i\in \bbn^n$. The sequence is bad if there is no such pair $i<j$ with $\vec x_i\leq \vec x_j$. The sequence is $(f,m)$-controlled for function $f$ and natural number $m$ if $\norm{\vec x_i}_\infty \leq f^i(m)$.
The following lemma is from~\cite{5970250_controlled_seq} and the particular version stated below is from~\cite{DBLP:conf/lics/GuttenbergCL25_CmeVASS}.

\begin{lemma}[Bad sequences]
\label{lemma:bad_seq}
Let $f\in\FFF_\gamma$ be monotone and $f(x)\geq x$. The length function of any $(f,m)$-controlled bad sequence about $m$ is in $\FFF_{\gamma+n-1}$.
\end{lemma}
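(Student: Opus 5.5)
This lemma is the controlled bad sequence length theorem of~\cite{5970250_controlled_seq}. I would not reprove it from scratch. Instead I would recall the structure of the standard argument and check that it yields the stated level $\FFF_{\gamma+n-1}$ under the hypotheses that $f\in\FFF_\gamma$ is monotone with $f(x)\geq x$.

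The plan proceeds by induction on the dimension $n$, using residuals.

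\textbf{Base case $n=1$.} A bad sequence in $\bbn$ is strictly decreasing. Its length is therefore at most $x_1+1\leq f(m)+1$, which is in $\FFF_\gamma$.

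\textbf{Inductive step.} Given an $(f,m)$-controlled bad sequence $\vec x_1,\dots,\vec x_l$ in $\bbn^n$, I fix the first element $\vec x_1$. Every later $\vec x_j$ must satisfy $\vec x_j[i]<\vec x_1[i]$ for some coordinate $i$. This partitions the remaining sequence according to a witnessing coordinate $i$ and a value $c<\vec x_1[i]\leq f(m)$.

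Elements sharing the same pair $(i,c)$ form a bad sequence in $\bbn^{n-1}$, after deleting coordinate $i$. To keep the control function intact, the standard trick is to use disjoint sums of copies of $\bbn^{n-1}$ rather than splitting into subsequences. Concretely, the tail $\vec x_2,\dots$ is an $(f,f(m))$-controlled bad sequence in the well-quasi-order $\bigsqcup_{i\leq n,\,c<f(m)}\bbn^{n-1}$. One then invokes the general descent equation for the length function over polynomial normed wqos:
\[L_{\bbn^n}(m)=\max_{\vec x\le f(m)}\bigl(1+L_{\text{res}(\vec x)}(f(m))\bigr).\]

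The remaining work is an ordinal analysis. The residual wqos are measured by maximal order types below $\omega^{n}$. One shows that $L$ is bounded by a Cichoń-style function $h_{\alpha}$ with $h(x)=f(x)$ and $\alpha<\omega^n$. The Hardy/fast-growing comparison then places $h_{\omega^{n-1}\cdot k}$ in $\FFF_{\gamma+n-1}$.

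\textbf{Main obstacle.} The delicate step is this last comparison. One must verify that iterating an $\FFF_\gamma$ control function along ordinals below $\omega^{n}$ raises the level by exactly $n-1$, not $n$. The base level already absorbs the first dimension, because iterating $f$ finitely often (the $\omega^0$ part) stays in $\FFF_\gamma$ when $\gamma\geq2$ by closure under composition. Each additional power of $\omega$ costs one fast-growing level.

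Since the precise statement, with this indexing, is exactly Theorem~(length function) of~\cite{5970250_controlled_seq} as specialized in~\cite{DBLP:conf/lics/GuttenbergCL25_CmeVASS}, the proof in the paper amounts to citing it. My plan is to cite it after confirming that the hypotheses (monotone, inflationary, $\gamma\ge 2$ implicitly) match.
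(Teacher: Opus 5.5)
Citing the length function theorem is all the paper does for this lemma, so your overall plan matches it. The problem is the part you add to ``confirm'' the level $\gamma+n-1$: that step fails, and the level cannot be confirmed at all.

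You bound the length by $h_\alpha$ with $\alpha<\omega^n$ and then use $h_{\omega^{n-1}\cdot k}\in\FFF_{\gamma+n-1}$. That membership holds only for a \emph{fixed} $k$. In your own inductive step the tail lives in $\bigsqcup_{i\le n,\,c<f(m)}\bbn^{n-1}$, whose order type is $\omega^{n-1}\cdot nf(m)$. The multiplier therefore grows with the input, and the bound you actually obtain is $h_{\omega^{n-1}\cdot nf(m)}(f(m))$, which is in effect $h_{\omega^n}$. That function composes an $\FFF_{\gamma+n-1}$ function an input-dependent number of times, so it only lies in $\FFF_{\gamma+n}$. A level is saved only at $n=1$, where order type $\omega$ involves no iteration (your base case). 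From $n=1$ to $n=2$ the level jumps by two, and your remark about the $\omega^0$ part does not identify any genuine saving.

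In fact, with the paper's iterated convention $\norm{\vec x_i}_\infty\le f^i(m)$, the stated level is false already for $n=2$. Take $f=F_\gamma$, which is monotone, inflationary and in $\FFF_\gamma$, and take $m\ge 1$. Let $\vec x_1=(f(m),0)$. For $j=1,\dots,f(m)$, let block $j$ start at index $s_j$ (so $s_1=2$) and consist of $(f(m)-j,\,f^{s_j}(m)),(f(m)-j,\,f^{s_j}(m)-1),\dots,(f(m)-j,\,0)$.
\begin{itemize}
\item The sequence is bad: the first coordinate strictly drops from one block to the next, and the second coordinate strictly drops inside a block.
\item It is $(f,m)$-controlled, because $f$ is monotone and inflationary.
\item Since $s_{j+1}>f^{s_j}(m)\ge F_\gamma^{s_j}(1)=F_{\gamma+1}(s_j)$, its length is at least $F_{\gamma+1}^{f(m)}(1)=F_{\gamma+2}(f(m))$.
\end{itemize}
No function in $\FFF_{\gamma+1}=\FFF_{\gamma+n-1}$ bounds this. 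For instance, $\gamma=2$ and $f(x)=2^x$ give length at least $F_4(2^m)$, which is not in $\FFF_3$.

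So what the ordinal argument really yields, and what holds under the paper's definitions, is $\FFF_{\gamma+n}$. Citing a source cannot repair this unless the source uses a different control convention, so you should not take the indexing on trust. Correspondingly, the bound $\FFF_{2k-1}$ that the paper derives from this lemma in the proof of Theorem~\ref{theorem:existence of small perfect KT} would move up one level.
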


Pottier's Lemma~\cite{Pottier} provides a size bound for the minimal solutions of integer programming. The following statement (see \cite{DBLP:conf/icalp/CzerwinskiJ0O25_3VASS}, Lemma 4.) covers only the equality case. An integer programming system with inequalities can be transformed into equalities by introducing slack variables.
\begin{lemma}[Pottier]\label{lemma:pottier}
Suppose $\mathbf{A}\cdot \vec x = \vec b$ is a Diophantine linear equation system  with $n$ variables, $m$ equations.
If the absolute values of the coefficients is bounded by $N$, then $\norminf{\solmin}=O(nN)^m$ for every minimal solution $\solmin$.
\end{lemma}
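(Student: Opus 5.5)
The statement is the classical bound of Pottier, so the plan is to recall its standard proof. It has two stages: a reduction from the inhomogeneous system to a homogeneous one, and then a Steinitz-type pigeonhole argument bounding the minimal (Hilbert-basis) solutions of the homogeneous system.

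\emph{Homogenisation.} Consider the system $\mathbf{A}\vec x - \vec b\, y = \vec 0$ over $\bbn^{n+1}$. Its coefficients are still bounded by $N$, assuming, as intended, that $N$ also bounds the entries of $\vec b$. Let $\solmin$ be a minimal solution of $\mathbf{A}\vec x=\vec b$. I claim that $(\solmin,1)$ is a minimal nonzero solution of the homogeneous system. Suppose $(\solmin,1)=(\vec x_1,y_1)+(\vec x_2,y_2)$ with both summands nonzero homogeneous solutions. Then one summand, say the second, has $y_2=0$, and the other has $y_1=1$. So $\vec x_1$ solves $\mathbf{A}\vec x=\vec b$. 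Moreover $\vec x_2\neq \vec 0$, because $(\vec x_2,0)$ is nonzero. Hence $\vec x_1\lneq \solmin$, contradicting minimality. It therefore suffices to bound the norm of minimal nonzero solutions of a homogeneous system $\mathbf{A}'\vec z=\vec 0$ with $n'=n+1$ variables, $m$ rows and entries bounded by $N$. Before that, I delete linearly dependent rows. This does not change the solution set and gives $m\le n'$, so a bound of the form $O(mN)^m$ becomes $O(nN)^m$.

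\emph{Steinitz argument.} Let $\vec z$ be a minimal nonzero solution and let $L=\norm{\vec z}_1$. Write $\vec z$ as a sum of $L$ unit vectors $\vec e_{i_1},\dots,\vec e_{i_L}$. Their images $\vec a_j=\mathbf{A}'\vec e_{i_j}$ are columns of $\mathbf{A}'$, each with $\norminf{\vec a_j}\le N$, and they sum to $\vec 0$. By the Steinitz lemma, applied in $\bbq^m$ with the $\ell_\infty$ norm, these vectors can be reordered so that every partial sum has $\ell_\infty$-norm at most $mN$. The partial sums $\vec s_0=\vec 0,\vec s_1,\dots,\vec s_L=\vec 0$ then lie in the integer box $[-mN,mN]^m$, which has $(2mN+1)^m$ points. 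If $L>(2mN+1)^m$, then two indices $p<q$ with $(p,q)\neq(0,L)$ satisfy $\vec s_p=\vec s_q$. The unit vectors with positions in $(p,q]$ sum to a nonzero solution $\vec z'\lneq \vec z$, contradicting minimality. Hence
\[\norminf{\solmin}\le L\le (2mN+1)^m = O(nN)^m.\]

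\emph{Main obstacle.} The delicate step is the Steinitz reordering with a constant linear in $m$. If one prefers to avoid citing the Steinitz lemma, I would first try a greedy ordering: at each step, pick a remaining vector whose inner product with the current partial sum is nonpositive. Such a vector exists because the remaining vectors sum to minus the current partial sum. This only controls the Euclidean norm, giving a bound like $\sqrt{m}N$ up to constants, so the box becomes slightly larger. The count is still of the form $O(mN)^m$ after absorbing constants, or at worst $O(m^{3/2}N)^m$. To obtain exactly the stated shape, I would fall back on the sharp Steinitz constant $m$ for arbitrary norms (Grinberg--Sevast'yanov), which is what the $O(\cdot)$ in the lemma allows.
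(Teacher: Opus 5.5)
Your proof is correct, but it is not the argument the paper relies on: the paper gives no proof of this lemma and simply imports Pottier's bound (in the form stated by Czerwi\'nski et al.). Pottier's theorem bounds the minimal solutions of the homogenised system through the maximal row $\ell_1$-norm of the matrix, roughly $(1+(n+1)N)^{r}$ with $r$ the rank, and that is where the factor $n$ in $O(nN)^m$ comes from. Your route is to homogenise and then run the Steinitz reordering plus pigeonhole, essentially the Eisenbrand--Weismantel argument. It gives $\|\vec z\|_1\le (2rN+1)^r$, in which $n$ enters only through $r\le n+1$. So it is at least as strong and equally short, and its only external input is the Grinberg--Sevast'yanov form of the Steinitz lemma (constant equal to the dimension, for any norm). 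The homogenisation step, the collision argument among $\vec s_1,\dots,\vec s_L$ producing a proper nonzero subsolution, and the final estimate $(2rN+1)^r\le (2(n+1)N+1)^m=O(nN)^m$ are all fine. Just write $r$ instead of reusing $m$ after you delete the dependent rows.

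The one thing to fix is your \emph{main obstacle} paragraph, because the greedy ordering does not give a $\sqrt{m}\,N$ bound. Choosing $\vec v$ with $\langle \vec v,\vec s_k\rangle\le 0$ only yields $\|\vec s_{k+1}\|_2^2\le \|\vec s_k\|_2^2+\|\vec v\|_2^2$. Hence $\|\vec s_k\|_2\le \sqrt{km}\,N$, which grows with $k$. With a box of radius about $\sqrt{Lm}\,N$, the pigeonhole condition becomes $L>(2\sqrt{Lm}\,N+1)^m$. This cannot hold once $m\ge 2$, since the right-hand side is at least $4LmN^2$, so no contradiction is obtained. A Steinitz bound depending only on the dimension is therefore load-bearing. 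Cite it, or reproduce the Grinberg--Sevast'yanov vertex argument, and drop the greedy fallback.
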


\section{Grammar Vector Addition Systems}
\label{section:GVAS}
A \emph{$d$-dimensional grammar vector addition system} ($d$-GVAS) is a context-free grammar where all terminals are integer vectors. It is defined as the tuple: \[\GG\defeq(\Chi,\Sigma,\Rho,S_0),\] where $\Chi$ is a finite set of nonterminals, $\Sigma \subseteq \bbz^d$ is a finite set of terminal vectors, $\Rho\subseteq \Chi \times (\Sigma \cup\Chi^2 )$ is a finite set of production rules in Chomsky normal form, and $S_0\in \Chi$ is the starting symbol. 
The start symbol $S_0$ may may appear on the right side of a production rule.

We use uppercase letters $A,B,X,Y,\dots$ to denote nonterminals in $\Chi$. The distinction between nonterminals and sets is clear from the context. Terminal vectors in $\Sigma$ are denoted as  $\Vec{u},\Vec{v},\dots$. production rules in $\Rho$ are written as $X\rightarrow AB$ or $X\rightarrow \Vec{u}$. A derivation $X\Rightarrow \alpha$ is obtained through application of production rules where $\alpha \in \{\Chi \cup \Sigma\}^*$ is a string of terminals and nonterminals. A derivation is \emph{complete} if $\alpha \in \Sigma^*$.

A GVAS is \emph{proper} if every symbol can be derived from $S_0$, and every nonterminal has a complete derivation. For the rest of the paper, we restrict our attention to the proper GVAS. We will investigate subsets of $\GG$. For $X\in \Chi$, let $\GG_X = (\Chi', \Sigma',\Rho',X)$ be the proper sub-GVAS induced by $X$. $\Rho'$ collect all the production rules that can be derived from $X$, and $\Chi', \Sigma'$ are obtained accordingly.

\emph{The size of GVAS.} Given $\GG = (\Chi,\Sigma,\Rho,S_0)$, let $|\GG|$ be the total number of symbols and rules. Let $\norm{\GG} = \mrm{max}_{\vec u\in \Sigma} \norm{\vec u}_{1}$ be the largest $l_1$-norm of terminals. We define: \[\size{\GG} \defeq |\GG|{\cdot} \norm{\GG}.\]

\subsection{Thinness and Index}
In \cite{DBLP:conf/stoc/BiziereC25_BC25}, Bizi{\`{e}}re et al. distinguish GVAS by whether it is thin or branching. A GVAS is \emph{thin} if no nonterminal derives more than one copy of itself (i.e., there exists no derivation like $X\Rightarrow \alpha X \beta X \gamma$). The \emph{index} of CFG (similarly, GVAS) is introduced in \cite{DBLP:conf/fsttcs/AtigG11_AG11}. For a GVAS and a complete derivation $S_0\Rightarrow \alpha$, reordering the application of production rules gives different derivation sequences, like $S_0\rightarrow\alpha_1\rightarrow\alpha_2\rightarrow\dots\rightarrow\alpha$. A derivation sequence is \emph{optimal}, if it minimizes $\max\{\#_{\mrm{NT}}(\alpha_i)\}$, denoted by $k_{S_0\Rightarrow\alpha}$, where $\#_{\mrm{NT}}(\alpha_i)$ stands for number of nonterminals occurring in $\alpha_i$. The index of $\GG$ is denoted by $\iota(\GG)$, given by:
\[\iota(\GG) = \sup \left\{ k_{S_0\Rightarrow \alpha} \mid S_0\Rightarrow \alpha \text{ is optimal} \right\}.\]
Bizi{\`{e}}re et al. point out that every thin GVAS has a finite index \cite{DBLP:conf/stoc/BiziereC25_BC25}. In fact, a GVAS is thin iff it is finitely indexed. We are going to define thinness and the index in a graph-theoretic perspective. The equivalence of different definitions is proved in Appendix~\ref{section:appendix_GVAS}.

\emph{The production graph.} The production graph $\PG(V,E)$ of $\GG$ is a directed graph with $V = \Chi\cup \Sigma$. Every production rule of $X\rightarrow AB$ is mapped to a pair of edges $(X,A),(X,B)\in E$. Every rule of $X\rightarrow \Vec{u}$ is mapped to the edge $(X,\Vec{u})\in E$. Figure~\ref{fig:index_of_GVAS} illustrates the production graph. Consider the strongly-connected components (SCCs) in $\PG$. For the production rule $X\rightarrow AB$, one of the following holds:

\begin{enumerate}
    \item $X,A,B$ are in the same SCC. The rule is called \emph{nondegenerate}.
    \item Exactly one of $A,B$ is in the SCC of $X$. The rule is \emph{partially degenerate}. If $B$ shares the same SCC with $X$, the rule is \emph{left-degenerate}; it is \emph{right-degenerate} otherwise.
    \item Neither $A$ nor $B$ is in the SCC of $X$. The rule is \emph{fully degenerate}.
\end{enumerate}
If there exists any nondegenerate rule, every nonterminal in that SCC is capable of deriving two copies of itself, which violates the thinness. The production graph in Figure~\ref{fig:index_of_GVAS} contains no nondegenerate rules.

\begin{restatable}[Thinness]{definition}{DefinitionThinness}
    A GVAS $\GG$ is thin if $\Rho$ contains no nondegenerate production rules.
\end{restatable}

We abbreviate thin GVAS as TGVAS, and focus on TGVAS in the rest of the paper. For TGVAS $\GG$, we have two observations about the index based on $\PG$:

\begin{enumerate}
    \item For nonterminals $X,Y$ in the same SCC, the indices of their induced sub-GVASes are equal, since $X$ derives $Y$ and $Y$ derives $X$. That is, $\iota(\GG_X) = \iota(\GG_Y)$. For instance, there are three nontrivial SCCs $\{X,Y,Z\}, \{C,D\},  \{E,F\}$ in Figure~\ref{fig:index_of_GVAS}. Nonterminals in the same SCC share the same index.

    \item The index $\iota(\GG)$, as well as the index $\iota(\GG_X)$ for any $X\in \Chi$, is finite and computable through dynamic programming. We compute indices from lower SCCs up to the topmost SCC. The crux is that for $X\rightarrow AB$, we always expand first the nonterminal in a lower SCC, or with a smaller index. For example, in Figure~\ref{fig:index_of_GVAS}, given the right-degenerate rule $Z\rightarrow XF$ with $\iota(F)= 2$, it is immediate that $\iota(Z)\geq 3$ since one more nonterminal is required for symbol $X$.
\end{enumerate}

Denote the set of nonterminals in the topmost SCC of $\PG$ as $\Chi_{\mathrm{top}}(\PG)$. We define the index recursively (see Figure~\ref{fig:index_of_GVAS}):

\begin{restatable}[Index]{definition}{DefinitionIndex}
    Given a TGVAS $\GG$. If $\GG$ is trivial, $\iota(\GG)=1$. Otherwise, $\iota(\GG)$ is the smallest $k\in \bbn_+$ rendering true the following statements for all $X\in \Chi_{\mathrm{top}}(\PG)$.
    \begin{itemize}
        \item If there exists a left- (or right-) degenerate rule $X\rightarrow AY$ (or $X\rightarrow YA$), then $k\geq \iota(\GG_A)+1$.
        \item If there exists a fully-degenerate rule $X\rightarrow AB$ and $\iota(\GG_A)\neq \iota(\GG_B)$, then $k\geq \mathrm{max}\{\iota(\GG_A), \iota(\GG_B)\}$. If $\iota(\GG_A) =  \iota(\GG_B)$ , then $k\geq \iota(\GG_A) + 1$.
    \end{itemize}
    \label{definition:index}
\end{restatable}

\begin{figure}[!t]
    \begin{minipage}{0.60\columnwidth}
        \centering
        \includegraphics[width=0.95\columnwidth]{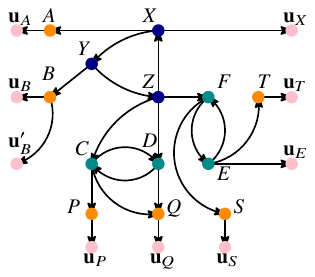}
    \end{minipage}
    \hfill
    \begin{minipage}{0.38\columnwidth}
        \centering
        The rule set $\Rho$ is:
        
        \vspace{2pt}
        \scriptsize
        %\tiny
        \begin{tabular}{c|c}
        $X\rightarrow AY$       & $A\rightarrow \vec u_A$      \\
        $Y\rightarrow BZ$       & $B\rightarrow \vec u_B$      \\
        $Z\rightarrow XF$       & $B\rightarrow \vec u_{B}'$   \\
        $Z\rightarrow CD$       & $P\rightarrow \vec u_P$      \\
        $C\rightarrow PD$       & $Q\rightarrow \vec u_Q$      \\
        $C\rightarrow PQ$       & $S\rightarrow \vec u_S$      \\
        $D\rightarrow CQ$       & $E\rightarrow \vec u_E$      \\
        $F\rightarrow SE$       & $T\rightarrow \vec u_T$      \\
        $E\rightarrow FT$       & $X\rightarrow \vec u_X$      \\
        \end{tabular}
        \includegraphics{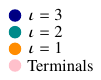}
    \end{minipage}
    \caption{The production graph and the index.}
    \Description{The production graph and the index of TGVAS $\GG$. The starting nonterminal is $S_0 = X$. There are three nontrivial SCCs: $\{X,Y,Z\}, \{C,D\},  \{E,F\}$. Nonterminals in the same SCC share the same index. There is no nondegenerate production rules, ensuring the thinness. We have $\iota(\GG) =\iota(\GG_X) =\iota(\GG_Y) =\iota(\GG_Z) = 3, \iota(\GG_C) =\iota(\GG_D)=2, \iota(\GG_E)=\iota(\GG_F) =2$ and so on.}
    \label{fig:index_of_GVAS}
\end{figure}

%\emph{Top:} The production graph and the index of TGVAS $\GG$. The starting nonterminal is $S_0 = X$. There are three nontrivial SCCs: $\{X,Y,Z\}, \{C,D\},  \{E,F\}$. Nonterminals in the same SCC share the same index. There is no nondegenerate production rules, ensuring the thinness. We have $\iota(\GG) =\iota(\GG_X) =\iota(\GG_Y) =\iota(\GG_Z) = 3, \iota(\GG_C) =\iota(\GG_D)=2, \iota(\GG_E)=\iota(\GG_F) =2$ and so on. \\
%\emph{Bottom:} Configurations of nonterminals in a complete derivation tree (left) and a segment (right). Given their root configurations, the computation orders differ. In the segment case, the configurations can also be obtained from the configuration $(1,4)$ in reverse order.

\subsection{Derivation Trees}
A derivation tree $\TT$ is a binary tree representing a derivation. Nodes in $\TT$ are denoted by lowercase letters $p,q,x,y,n,\dots$. Each node $p\in\TT$ is labeled with a symbol $\sigma(p) \in \Chi\cup\Sigma$. For every internal (non-leaf) node $p$, we have $\sigma(p)\in \Chi$, and one of the following cases applies:

\begin{figure}
    \centering
    \includegraphics[width=0.96\columnwidth]{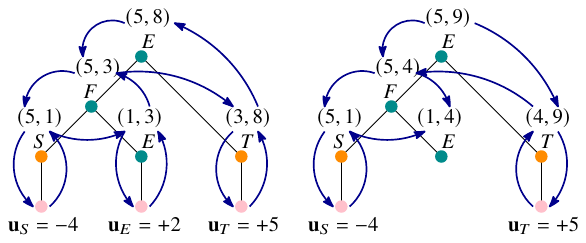}
    \caption{Configurations (terminal configurations omitted).}
    \Description{Configurations of nonterminals in a complete derivation tree (left) and a segment (right). Given their root configurations, the computation orders differ. In the segment case, the configurations can also be obtained from the configuration $(1,4)$ in reverse order.}
    \label{fig:conf}
\end{figure}

\begin{enumerate}
    \item $p$ has two children $c_1,c_2\in \TT$ and $\sigma(p)\rightarrow \sigma(c_1) \sigma(c_2) \in \Rho$. For simplicity, we refer to such situation as ``$p$ has the production rule $\sigma(p)\rightarrow \sigma(c_1)\sigma(c_2)$''.
    \item Or, $p$ has the rule $\sigma(p)\rightarrow \sigma(c_1)$ where $\sigma(c_1)\in \Sigma$ and $c_1$ is a leaf of $\TT$.
\end{enumerate}

The symbols of leaves in $\TT$ are not necessarily terminals. If they are all terminals, then $\TT$ is \emph{complete}. Otherwise, $\TT$ is \emph{incomplete}. We write $\TT_{X}$ to emphasize the root symbol being $X$. Given a derivation tree $\TT$, let $\Chi_{\TT},  \Sigma_{\TT}, \Rho_{\TT}$ denote the set of nonterminals, terminals and rules occurring in $\TT$. The subscript notation is extended to any reasonable sets or structures. The sum of all terminals in $\TT$ is the \emph{displacement} of $\TT$. Given two nodes $p,q\in\TT$. If $p$ is an ancestor of $q$, we denote $p\prec q$. The subtree rooted by node $p$ is $\sub{p}$.

\emph{Segments and cycles.} A \emph{segment} is an incomplete derivation tree with exactly one nonterminal leaf. Let the root be $p$ and the nonterminal leaf be $q$. The segment is denoted by $\seg{p,q}$. Given a complete derivation tree $\TT$ and $p,q\in\TT$ with $p\prec q$, the structure $(\sub{p} \setminus \sub{q}) \cup \{q\}$ is $\seg{p,q}$. If $\seg{p,q}$ satisfies $\sigma(p) = \sigma(q)$, it is a \emph{cycle}, denoted by $\cyc{p,q}$. A cycle is \emph{simple}, if it contains no strict subcycles. Intuitively, a cycle is the composition of simple cycles.

\emph{Configurations.} We define \emph{configurations} on complete derivation trees and segments. For every node $p$, it has a configuration tuple $(\vec l_p,\vec r_p)\in \bbn^d\times \bbn^d$. For a complete derivation tree $\TT$ rooted by $r$, $(\vec l_r, \vec r_r)$ is often fixed. Then, all configurations are determined as follows:

\begin{enumerate}
    \item Consider any internal node $p$. If $p$ has the rule $\sigma(p)\rightarrow \sigma(c_1)\sigma(c_2)$, then $\vec l_p = \vec l_{c_1}$, $\vec r_p = \vec r_{c_2}$ and $\vec r_{c_1} = \vec l_{c_2}$.
    \item If $p$ has the rule $\sigma(p)\rightarrow \sigma(c_1)$ and $\sigma(c_1) \in \Sigma$, then $\vec l_p = \vec l_{c_1}$, $\vec r_p = \vec r_{c_1}$ and $\vec l_{c_1} + \sigma(c_1) = \vec r_{c_1}$.
\end{enumerate}
The computation rules above also apply for segments, following a slightly different computational order illustrated in Figure~\ref{fig:conf}. Notice that for $\seg{x,y}$, fixing either $(\vec l_x,\vec r_x)$ or $(\vec l_y,\vec r_y)$ decides all other configurations.

A complete derivation tree $\TT$, or a segment $\seg{x,y}$ is \emph{nonnegative}, if all configurations are nonnegative, i.e., $(\vec l_p,\vec r_p)\geq (\vec 0,\vec 0)$ for every $p$. In the sequel, we adopt the \emph{most-simplified assumption} for derivation trees. That is, there do not exist nodes $p,q$ such that $p\prec q$, $\sigma(p)=\sigma(q)$ and $(\vec l_p,\vec r_p) = (\vec l_q, \vec r_q)$.

\subsection{Related Problems}

\emph{Reachability.} Given a $d$-TGVAS $\GG$, a pair of source and target vectors $(\vec s,\vec t)\in \bbn^d \times \bbn^d$, the reachability problem $\reach(\GG,\vec s,\vec t)$ asks whether there is a complete, nonnegative derivation tree $\TT_{S_0}$ with root configuration $(\vec s, \vec t)$.

\emph{Coverability.} Given $\GG$ and $(\vec s,\vec t)\in \bbn^d \times \bbn^d$, the coverability problem $\cover(\GG,\vec s,\vec t)$ asks whether there is a  complete, nonnegative derivation tree $\TT_{S_0}$ with root configuration $(\vec{s}, \vec t^\ast)$ with $\vec t^\ast \geq \vec t$.

The known complexities are stated in table \ref{tab:comparison}. The index is a key criterion to measure the hardness. We remark that the reachability for $k$-indexed $d$-TGVAS follows from the relation-extended VASS model proposed in~\cite{DBLP:conf/lics/GuttenbergCL25_CmeVASS} (See Appendix~\ref{section:appendix_GVAS} for the reduction). Using this model, the reachability for $k$-indexed 1-TGVAS is in $\FF_{6k-4}$. Moreover, our work relies on the fact that 1-GVAS coverability is in \EXPSPACE~\cite{DBLP:conf/icalp/LerouxST15_Coverability}. Our improved upper bound $\FF_{2k}$ is given by Theorem \ref{theorem:2k}, by generalizing the decomposition method for VASS~\cite{DBLP:conf/icalp/FuYZ24_F_d_KLMST,DBLP:conf/lics/LerouxS19_KLMST} to tree-shaped 1-TGVAS.

\begin{table}
\centering
\caption{Problems and their complexity}
\label{tab:comparison}

\begin{threeparttable}
\renewcommand{\arraystretch}{1.2}

\begin{tabular}{
    >{\centering\arraybackslash}p{2.8cm}
    >{\centering\arraybackslash}p{2.2cm}
    >{\centering\arraybackslash}p{2.2cm}
}
\toprule
\textbf{Model} & \textbf{Problem} & \textbf{Complexity}\\
\midrule
$k$-indexed 1-TGVAS & Reachability & $\FF_{2k}$ \tnote{a} \\
$k$-indexed $d$-TGVAS & Reachability & $\FF_{4kd+2k-4d}$ \cite{DBLP:conf/lics/GuttenbergCL25_CmeVASS} \\
$d$-GVAS & Reachability & Decidable \cite{DBLP:journals/corr/abs-2504-05015_PVASS_decidable} \\
1-GVAS & Coverability & \EXPSPACE \ \cite{DBLP:conf/icalp/LerouxST15_Coverability}\\
$d$-GVAS, $d\geq 2$ & Coverability & Decidable \tnote{b}\\

\bottomrule
\end{tabular}

\begin{tablenotes}
\item[a] The main result of this paper.
\item[b] By reducing $(d-1)$-reachability to $d$-coverability, the coverability problem inherits essentially the same hardness as the reachability.

\end{tablenotes}
\end{threeparttable}
\end{table}

\section{KLM Trees}
\label{section:KLM_tree}
The KLM decomposition algorithm for VASS \cite{DBLP:conf/icalp/FuYZ24_F_d_KLMST,DBLP:conf/lics/LerouxS19_KLMST} is based on KLM sequences, whose components encode the SCCs of VASS. Similarly, the components of a KLM tree capture the strong connectivity in derivation trees of TGVAS. We define the KLM tree for dimension $d$, in order to provide a generalization of the $\bbz$-reachability of $d$-TGVAS.

\emph{Strongly-connected segments.} Given $\seg{p,q}$ of a complete derivation tree, let $\ppath{p,q}$ be the path from $p$ to $q$. The production graph $\PG_{\seg{p,q}}$ is constructed according to rules used in $\seg{p,q}$. Let the topmost SCC of $\PG_{\seg{p,q}}$ be $\Chi_{\mrm{top}}(\PG_{\seg{p,q}})$. We define the strong connectivity on segments by:

\begin{definition}
    A segment $\seg{p,q}$ is \emph{strongly connected}, if for $n\in \ppath{p,q}$, it holds $\sigma(n)\in \Chi_{\mrm{top}}(\PG_{\seg{p,q}})$. That is, all nonterminals in $\ppath{p,q}$ lie in the same SCC.
\end{definition}

\subsection{Divisions}

Given a complete derivation tree $\TT$ rooted by $r$. The following procedure produces a {\em division} of $\TT$.

\begin{enumerate}
    \item If $\TT$ only contains a terminal, the procedure terminates.
    \item Find some $p\in \TT$ where $\seg{r,p}$ is strongly connected.
    \item Store $\seg{r,p}$ and do division on the children of $p$.
\end{enumerate}

Let all the segments so obtained be the set $\subdiv{\TT}$. There always exists a division since all segments can be trivial (containing only one node). Lemma \ref{lemma:small_division} implies the existence of a small division whose size is bounded by $\exp(\size{\GG})$. Before that, we introduce \emph{the topmost SCC division}. Given a complete, nontrivial $\TT$ rooted by $r$, there exists node $p\in \TT$ satisfying: 
\begin{enumerate}
    \item Segment $\seg{r,p}$ is strongly connected.
    \item Nonterminals in $\ppath{r,p}$ are \emph{unique}. For $n_1,n_2\in \TT$, if $n_1 \in \ppath{r,p}$ and $n_2 \notin \ppath{r,p}$, then $\sigma(n_1)\neq \sigma(n_2)$.
\end{enumerate}
To achieve this, consider the production graph $\PG_{\TT}$ constructed by rules used in $\TT$. Let the topmost SCC be $\Chi_{\mrm{top}}(\PG_{\TT})$. We initialize the node pointer $p$ as $r$, and check the rules on $\TT$ iteratively:
\begin{enumerate}
    \item If $p$ has a fully-degenerate rule on $\TT$, or $p$ has a terminal child, then terminate.
    \item If $p$ has a partially-degenerate rule $\sigma({p})\rightarrow \sigma(c_1)\sigma(c_2)$, update ${p}$ by $c_1$ if $\sigma(c_1)\in \Chi_{\mrm{top}}(\PG_{\TT})$ and update ${p}$ by $c_2$ if $\sigma(c_2)\in \Chi_{\mrm{top}}(\PG_{\TT})$. Go back to step (1).
\end{enumerate}
The node $p$ obtained at termination is the desired one. Strong connectivity follows directly from the definition of $\Chi_{\mrm{top}}(\PG_{\TT})$ and uniqueness is guaranteed by the definition of fully-degenerate rules.

\begin{lemma}
    \label{lemma:small_division}
    Given a TGVAS $\GG$ and a complete derivation tree $\TT$. There exists a division $\subdiv{\TT}$ with $|\subdiv{\TT}| \leq \EXP({\size{\GG}})$.
\end{lemma}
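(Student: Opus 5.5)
We build the division by applying the topmost SCC division recursively and bound the number of segments by induction on the SCC structure of the production graph $\PG$.

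Let $r$ be the root of $\TT$. We take the node $p$ given by the topmost SCC procedure on $\TT$ and store $\seg{r,p}$. We then recurse on the subtrees rooted at the children of $p$. There are at most two such children, and a terminal child contributes nothing. When the procedure stops at $p$, either $p$ has a terminal child or $p$ carries a fully-degenerate rule $\sigma(p)\rightarrow \sigma(c_1)\sigma(c_2)$. In the second case neither $\sigma(c_1)$ nor $\sigma(c_2)$ lies in $\Chi_{\mrm{top}}(\PG_{\TT})$.

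The key monotonicity fact is as follows. By the uniqueness property, no nonterminal of $\ppath{r,p}$ occurs outside the path. Hence $\sub{c_i}$ uses none of the nonterminals of the top SCC $\Chi_{\mrm{top}}(\PG_{\TT})$ that occur on the path, and $\sigma(c_i)$ lies strictly below that SCC in $\PG$. Every nonterminal reachable from $\sigma(c_i)$ is reachable from $\sigma(r)$ in $\PG_\TT\subseteq \PG$. So $\PG_{\sub{c_i}}$ involves only nonterminals that are strict descendants, in the DAG of SCCs of $\PG$, of the SCC containing $\sigma(r)$.

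A subtlety is that $\Chi_{\mrm{top}}(\PG_{\TT})$ may be smaller than the SCC of $\sigma(r)$ in $\PG$. Still, $\sigma(c_i)$ cannot reach $\sigma(r)$ within $\PG_{\TT}$. I would strengthen this to: the nonterminals of $\sub{c_i}$ exclude $\sigma(r)$ and everything on the path. One then has to argue a strict decrease of a suitable potential. I would use the potential $h(\TT)=$ the number of distinct nonterminals occurring in $\TT$. Since $\sigma(r)\in\Chi_\TT$ but $\sigma(r)\notin \Chi_{\sub{c_i}}$ by uniqueness, and $\Chi_{\sub{c_i}}\subseteq \Chi_\TT$, we get $h(\sub{c_i})\le h(\TT)-1$. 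This potential argument avoids needing the SCC reasoning at all.

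Let $N(h)$ be the maximum number of segments produced on trees of potential at most $h$. Then $N(0)=0$ (a lone terminal), and
\[
N(h)\le 1+2N(h-1),
\]
so $N(h)\le 2^{h}-1$. Since $h(\TT)\le |\Chi|\le |\GG|\le \size{\GG}$, we obtain $|\subdiv{\TT}|\le 2^{\size{\GG}}=\EXP(\size{\GG})$.

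It remains to check that this recursive procedure really is a division in the sense of the definition, which is immediate. The topmost SCC procedure applies to every nontrivial complete subtree, and each stored segment $\seg{r,p}$ is strongly connected.

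The main obstacle is the claim $\sigma(r)\notin\Chi_{\sub{c_i}}$, which must be derived carefully from the uniqueness property of the topmost SCC division. Uniqueness states that nonterminals on $\ppath{r,p}$ differ from those of all nodes off the path. Every node of $\sub{c_i}$ is off the path, and $r$ is on it, so the claim follows. I would double-check that uniqueness genuinely holds for the fully-degenerate stopping case and for partially-degenerate steps, using thinness: a repeated path nonterminal off the path would give $X\Rightarrow\alpha X\beta X\gamma$. The rest is the routine recurrence above.
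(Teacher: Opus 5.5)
Your proof is correct and essentially matches the paper's: you apply the topmost SCC division recursively, use uniqueness to show that $\Chi_{\sub{c_i}}$ is strictly smaller than $\Chi_{\TT}$ because $\sigma(r)$ drops out, and use binary branching to bound the count by $2^{|\Chi_{\TT}|}$. Your recurrence $N(h)\le 1+2N(h-1)$ just makes the paper's depth-times-branching count explicit, and you were right to drop the preliminary detour through the SCCs of $\PG$, since $\sigma(c_i)$ need only lie below $\Chi_{\mrm{top}}(\PG_{\TT})$, not strictly below the $\PG$-SCC of $\sigma(r)$.
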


\begin{proof}
    Applying the topmost SCC division on $\TT$ yields the first segment $\seg{r,p}$. If $p$ has two children $c_1,c_2$, divide $\sub{c_1}$ and $\sub{c_2}$ recursively. By uniqueness,  $\Chi_{\sub{c_1}}, \Chi_{\sub{c_2}} \subsetneqq \Chi_{\TT}$ and the recursion depth has a limit of $|\Chi_{\TT}|$. We have $|\subdiv{\TT}|\leq 2^{|\Chi_{\TT}|} \leq \EXP({\size{\GG}})$.
\end{proof}

\subsection{KLM Components and KLM Trees}

Thanks to the existence of a small division, it suffices to depict the derivation behavior of strongly-connected segments. We next introduce the KLM component and the KLM tree, and explain how they faithfully capture derivation trees. The KLM components and KLM trees are syntactic objects given by the TGVAS grammar, whereas the segments and derivation trees are semantic objects.

\begin{definition}
    A KLM component is the tuple: \[\CC = \left(P,Q,\RhoSCC,\RhoL,\RhoR, \rhoD \right).\] Nonterminals $P,Q\in \Chi$ are the source and target symbols, respectively. $\RhoSCC,\RhoL,\RhoR \subseteq \Rho$ are three sets of production rules. $\rhoD \in \Rho$ is a single production rule.
\end{definition}

We are only interested in the KLM component that captures a strongly-connected segment. Figure~\ref{fig:KLM_components} shows how a strongly-connected segment (left, denoted by $\seg{p,q}$, and nonterminals $C,D$ not included) is captured by a KLM component (right). The nonterminals along $\ppath{p,q}$ form the sequence: $\{X,Y,Z,X,Y,Z\}$. The exit rule is $Z\rightarrow CD$. The idea of ``capturing'' is straightforward: let the rule sets $\RhoSCC, \RhoL,\RhoR$ collect production rules occurring in $\ppath{p,q}$, on its left, and on its right, respectively. The definition is as below.

\begin{definition}
    Given the KLM component $\CC$ and a strongly-connected segment $\seg{p,q}$. $\CC$ captures $\seg{p,q}$ with the exit rule $\sigma(q) \rightarrow \sigma(c_1)\sigma(c_2)$ if the following statements are valid. 
    \begin{enumerate}
        \item $P=\sigma(p), Q=\sigma(q)$ and $\rhoD=\sigma(q) \rightarrow \sigma(c_1)\sigma(c_2)$.
        \item $\RhoSCC$ is the set of production rules used along $\ppath{p,q}$.
        \item Let $N_{\mrm{L}}, N_{\mrm{R}}$ be two sets of tree nodes. For any parent node $n\in \ppath{p,q}\setminus \{q\}$, its left child $c_1$ and right child $c_2$, if $c_2\in \ppath{p,q}$, let $c_1\in N_{\mrm{L}}$; otherwise, let $c_2 \in N_{\mrm{R}}$. The rule sets $\RhoL,\RhoR$ satisfy the following conditions: 
        \[\RhoL = \bigcup_{n \in N_{\mrm{L}}} \Rho_{\sub{n}}, \RhoR = \bigcup_{n \in N_{\mrm{R}}} \Rho_{\sub{n}}. \]
    \end{enumerate}
    
    We write $\CC \capture \seg{p,q}$ for the capturing relation, omitting the exit rule. Usually $\rhoD$ is clear from context or is immaterial.
\end{definition}

A KLM component $\CC$ is trivial, if $\RhoSCC = \varnothing$. Hence, $\rhoD$ is the only rule in the trivial $\CC$. Observe that $\RhoSCC \cap \RhoL = \RhoSCC \cap \RhoR = \varnothing$ by thinness. The rule sets $\RhoL$ and $\RhoR$ may share identical production rules. In this case, they are treated as different. Nonterminals in $\ppath{p,q},\RhoL,\RhoR$ are denoted by $\ChiSCC, \ChiL, \ChiR$ respectively. Terminals in $\RhoL$ or $\RhoR$ are denoted by $\SigmaL$ or $\SigmaR$. For illustration, in Figure~\ref{fig:KLM_components}, $\ChiSCC = \{X,Y,Z\}$ and $\SigmaL = \{\vec u_A, \vec u_B\}$. In order to distinguish sets like $\ChiSCC$ or $\SigmaL$ belonging to different KLM components, we write $\ChiSCC(\CC)$ or $\SigmaL(\CC)$. 

\begin{figure}[!t]
    \begin{minipage}{0.57\columnwidth}
        \centering
        \includegraphics[width=0.95\columnwidth]{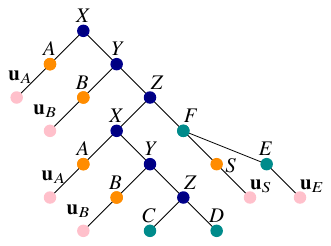}
    \end{minipage}
    \begin{minipage}{0.42\columnwidth}
        \centering
        \includegraphics[width=0.95\columnwidth]{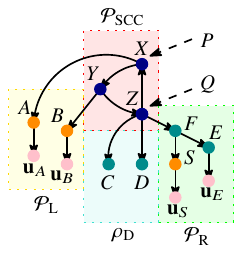}
    \end{minipage}
    \caption{A strongly-connected segment captured by a KLM component.} 
    \label{fig:KLM_components}
\end{figure}

%\emph{Left:} A strongly-connected segment $\seg{p,q}$ under the rule of example \ref{example:derivation_rule}. $\ppath{p,q}$ contains $X,Y,Z,X,Y,Z$ and the exit rule is $Z\rightarrow CD$.
%\emph{Right:} The KLM component $\CC$ capturing the segment. $\RhoSCC$ contains only partially-degenerate rules in $\ppath{p,q}$. $\RhoL$ and $\RhoR$ captures rules on the left or the right side of $\ppath{p,q}$. $\CC$ is similar to the production graph of $\seg{p,q}$. The key difference is that $\RhoL$ and $\RhoR$ are allowed to overlap. For example, if the rule $A\rightarrow \vec u_A$ also appears in the subtree of symbol $F$, it is contained in both $\RhoL$ and $\RhoR$.
%\begin{example}
%    Figure \ref{fig:KLM_components} illustrates a strongly-connected segment $\seg{p,q}$ and the KLM component $\CC\capture \seg{p,q}$. In this example, we have $\ChiSCC = \{X,Y,Z\}$ and $\SigmaL = \{\vec u_A, \vec u_B\}$. A KLM tree is organized by a collection of such capturing relations with a division.
%    \label{example:KLM_component}
%\end{example}

\begin{definition}
    A KLM tree $\KT$ is a binary tree whose nodes are KLM components or terminals. For $\CC\in \KT$, if $\rhoD = Q\rightarrow AB$, $\CC$ has two children $\CC_1,\CC_2$ whose sources $P(\CC_1)=A$ and $P(\CC_2)=B$. If $\rhoD = Q\rightarrow \vec u_{Q}$, $\CC$ has one child $\vec u_{Q}$. Terminal leaves are considered distinct, even if they have the identical values.
\end{definition}

\begin{definition}
    A KLM tree $\KT$ captures a complete derivation tree $\TT$, if there exists a division $\subdiv{\TT}$ and a bijection (maintaining the tree order) from KLM components $\CC\in \KT$ to segments $\seg{p,q}\in \subdiv{\TT}$ such that $\CC \capture \seg{p,q}$.
    The capturing relation is denoted by $\KT\capture \TT$.
\end{definition}

\subsection{The Characteristic System}
The syntax of a KLM component defines a class of possible, strongly-connected segments. To provide a more detailed characterization, we introduce an integer linear programming (ILP) system for each KLM component, known as the \emph{characteristic system}.  The characteristic system of $\CC$ contains following variables:
\begin{enumerate}
    \item Configuration variables. A KLM component ${\CC}$ has four configuration variables $\lPC,\rPC,\lQC,\rQC \in \bbn^d$, standing for the configurations at node $p$ or $q$ with respect to the captured segment $\seg{p,q}$.
    \item Variables of Parikh images of production rules. For each rule $\rho \in \RhoSCC$, $\RhoL$ or $\RhoR$, the variable $\#(\rho)\in \bbn_+$ denotes the Parikh image (how many times it occurs). Notice that identical rules in $\RhoL$ and $\RhoR$ are treated distinctly.
\end{enumerate}

Intuitively, a nonterminal must be produced and consumed the same number of times in order to form a complete derivation tree. Therefore, we introduce the implicit \emph{degree variables}. Given $X\in \ChiSCC$, let $\Rho_{\mrm{in}}(X)\subseteq \RhoSCC$ be the set of rules producing $X$, and $\Rho_{\mrm{out}}(X) \subseteq \RhoSCC$ be the set of rules consuming $X$. We define the degree variables as:
\begin{equation*}
    \begin{aligned}
        \indeg{X} &\defeq& \sum_{\rho\in \Rho_{\mrm{in}}(X)} \#(\rho), \\
        \outdeg{X} &\defeq& \sum_{\rho\in \Rho_{\mrm{out}}(X)} \#(\rho).
    \end{aligned}
\end{equation*}
Given $A \in \ChiL$, we define $\Rho_{\mrm{in}}(A)\subseteq (\RhoL\cup \RhoSCC)$. It covers rules in $\RhoL$ that produce $A$, and rules in $\RhoSCC$ that produce $A$ on the left. For example, $\Rho_{\mrm{in}}(A)$ contains rules $B\rightarrow AA\in \RhoL$ and $X\rightarrow AY\in \RhoSCC$, but not the rule $X\rightarrow YA\in \RhoSCC$. The variables $\indeg{A},\outdeg{A}$ are defined analogously, except that $\indeg{A}$ counts the Parikh image of rules like $B\rightarrow AA$ twice.
For terminals, the in-degrees are equivalent to the Parikh images. Therefore, given a terminal $\vec u \in \SigmaL$ and the rule set $\Rho_{\mrm{in}}(\vec u)\subseteq \RhoL$ producing $\vec u$, we define the Parikh image variable as:
\[\#(\vec u) \defeq \sum_{\rho\in \Rho_{\mrm{in}}(\vec u)} \#(\rho).\]

The characteristic system for a KLM component $\CC$ is denoted by $\EE_{\CC}$, containing the following equations:

\begin{equation*}
\begin{aligned}
        \#(\rho) &\geq 1, \\
        \lPC,\rPC,\lQC,\rQC &\geq \vec 0, \\
        \lPC + \sum_{\vec u \in \SigmaL} \#(\vec u)\cdot \vec u &= \lQC, \\
        \rQC + \sum_{\vec u \in \SigmaR} \#(\vec u)\cdot \vec u &= \rPC, \\
        \indeg{X} + \mathds{I} _{X = P} - \left(\outdeg{X} + \mathds{I}_{X = Q}\right)  &= 0,
\end{aligned}
\end{equation*}
where the logic indicator $\mathds{I}_{\Psi}=1$ if $\Psi$ is true and $\mathds{I}_{\Psi}=0$ if not.

The characteristic system $\EE_{\vec u}$ for a terminal $\vec u$ on the KLM tree contains variables $\vec l^{\vec u},\vec r^{\vec u} \in \bbn^d$ and is defined as:
\begin{equation*}\begin{aligned}
        \vec l^{\vec u}, \vec r^{\vec u} &\geq \vec 0, \\
        \vec l^{\vec u} + \vec u &= \vec r^{\vec u}.
\end{aligned}
\end{equation*}

Basically, the characteristic system of a KLM tree is composed of ILP systems like $\EE_{\CC}$ and $\EE_{\vec u}$. Moreover, the configuration variables should be subject to a global characteristic system in order to guarantee the consistency. The global characteristic system $\EE_{\mrm{Global}}$ is defined as follows. For any $\CC\in \KT$, if it has two children $\CC_1,\CC_2$, we add the equation:
\begin{equation*}
        \left(\lQC, \rQC, \rP^{\CC_1}\right) = \left(\lP^{\CC_1},\rP^{\CC_2},\lP^{\CC_2}\right).
\end{equation*}
Otherwise, it has a terminal child $\vec u$. We write:
\begin{equation*}
        \left(\lQC,\rQC\right) = \left(\vec{l}^{\vec{u}},\vec{r}^{\vec{u}}\right).
\end{equation*}

In some cases, the configuration variables of KLM components (or terminals) are fixed. The characteristic system $\EE_{\mrm{Config}}$ is introduced for such configuration constraints. Considering the reachability problem $\reach(\GG,\vec s,\vec t)$, we have the equation:
\begin{equation*}
    \left(\lP^{\CC}, \rP^{\CC}\right) = \left(\vec s, \vec t\right).
\end{equation*}
The characteristic system for a KLM tree is the union of ILP systems introduced above.

\begin{definition}
    The characteristic system for KLM tree $\KT$ is given by:
    \begin{equation*}\begin{aligned}
        \EE_{\KT} \defeq \left(\bigwedge_{\CC\in \KT} \EE_{\CC}\right) \land \left(\bigwedge_{\vec u\in \KT} \EE_{\vec u}\right) \land \EE_{\mrm{Global}} \land \EE_{\mrm{Config}}.
    \end{aligned}\end{equation*}
\end{definition}

\emph{The size of KLM trees.} The size of a KLM tree $\KT$ also serves as an upper bound for the unary size of $\KT$ and $\EE_\KT$. Let $|\CC|$ be the total number of symbols, rules, variables and equations in all characteristic systems involving $\CC$. Let $\norm{\CC}$ be the largest $l_1$-norm in the systems above. We define $\size{\CC} \defeq |\CC| \cdot \norm{\CC}$. 
Moreover, we define the size of $\KT$ as: 
\[\size{\KT} \defeq \sum_{\CC\in \KT}\size{\CC} + \sum_{\vec u\in \KT} \norm{\vec u}_1.\]

\emph{The initial capturing tree lemma} (Lemma \ref{lemma:initial_capturing_tree}) follows immediately from Lemma \ref{lemma:small_division}, together with the construction of the characteristic system.

\begin{lemma}
    \label{lemma:initial_capturing_tree}
    Given the reachability problem $\reach(\GG,\vec s,\vec t)$ and a instance of a complete, nonnegative derivation tree $\TT$, there exists a small KLM tree $\KT$ such that:
    
    \begin{enumerate}
        \item $\KT \capture \TT,$
        \item $\size{\KT} \leq \EXP(\size{\GG} + \norm{\vec s}_1 + \norm{\vec t}_1).$
    \end{enumerate}
\end{lemma}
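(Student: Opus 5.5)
We will build the KLM tree directly from a small division of $\TT$ and read off each component from the segment it must capture. First, apply Lemma~\ref{lemma:small_division} to $\TT$ to obtain a division $\subdiv{\TT}$ with $|\subdiv{\TT}|\leq \EXP(\size{\GG})$. Concretely, we use the recursive topmost SCC division from its proof. Then every segment $\seg{p,q}$ in it is strongly connected. Moreover, the node $q$ either has a terminal child or has two children $c_1,c_2$ that are the roots of the next segments in the recursion.

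For each such segment we define the KLM component $\CC=(P,Q,\RhoSCC,\RhoL,\RhoR,\rhoD)$ exactly as the capturing definition dictates. We set $P=\sigma(p)$, $Q=\sigma(q)$ and $\rhoD$ equal to the rule applied at $q$. $\RhoSCC$ collects the rules used along $\ppath{p,q}$, and $\RhoL,\RhoR$ collect the rules of the subtrees hanging off the left and right of the path. If $q$ has a terminal child, it receives that terminal as its child; otherwise its two children are the components built for $\sub{c_1}$ and $\sub{c_2}$. This gives a binary tree $\KT$ together with an order-preserving bijection onto $\subdiv{\TT}$ with $\CC\capture\seg{p,q}$ for each pair. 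Hence $\KT\capture\TT$, which is item (1). The point to check is that the tree structure matches: the sources of the children are $\sigma(c_1),\sigma(c_2)$, as the KLM tree definition requires.

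For item (2), observe that each component is a syntactic object over $\GG$. Its rule sets are subsets of $\Rho$, so the symbols and rules in $\CC$ number at most $O(|\GG|)$. The number of variables and equations of $\EE_\CC$ is $\poly{|\GG|}\cdot d$, counting the $4d$ configuration variables and one Parikh variable per rule. The coefficients in $\EE_\CC$ are bounded by $\norm{\GG}$, plus the constant $2$ coming from rules like $B\rightarrow AA$. The only place $\vec s,\vec t$ enter is $\EE_{\mrm{Config}}$, which contributes $\norm{\vec s}_1+\norm{\vec t}_1$ to the norm of the root component. So $\size{\CC}\leq \poly{\size{\GG}+\norm{\vec s}_1+\norm{\vec t}_1}$. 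Each terminal leaf contributes at most $\norm{\GG}$. Summing over the at most $\EXP(\size{\GG})$ components, and at most as many terminal leaves, gives $\size{\KT}\leq \EXP(\size{\GG}+\norm{\vec s}_1+\norm{\vec t}_1)$. If $d$ is not absorbed into $\size{\GG}$, we note that $d\leq \size{\GG}$ for nontrivial terminals.

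The main obstacle is mild bookkeeping. We must make sure each component's size is bounded independently of how large the segment in $\TT$ is. This holds because $\CC$ records only rule sets and not multiplicities; the actual Parikh counts live in the solution, not in the size. We must also confirm that the division's bound is stated in terms of segments, which equal the KLM components. Finally, the tree is nonnegative, but that is not needed for capturing.
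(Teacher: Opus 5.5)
Your proposal is correct and follows the paper's own argument. The paper simply says the lemma follows from the small division of Lemma~\ref{lemma:small_division}: one component is read off per segment, each $\EE_{\CC}$ has size polynomial in $\size{\GG}$, and only the root picks up the $\EE_{\mrm{Config}}$ contribution from $\vec s,\vec t$. One side remark of yours is inaccurate: $d\leq\size{\GG}$ does not follow from having nontrivial terminals, since a unit vector in $\bbz^d$ has $l_1$-norm $1$. This does not affect the result, because under any coordinatewise encoding $d$ is part of the input size, and in the paper's target case $d=1$ the point is moot.
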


\subsection{Ranks}
First, we introduce the \emph{geometric dimension} \cite{DBLP:conf/icalp/FuYZ24_F_d_KLMST,DBLP:conf/concur/Zheng25_geo_2_d,DBLP:conf/lics/GuttenbergCL25_CmeVASS} of KLM components. We may investigate cycles under the rules of a KLM component $\CC$. A \emph{top cycle} under $\CC$, denoted by $\cyc{x,y}$, is a cycle constructed within the syntax of $\CC$ that additionally satisfies $\sigma(x)=\sigma(y) \in \ChiSCC$. 

Let the symbols of the leaves of $\cyc{x,y}$ under DFS traversal (preorder) be the sequence: $\{\vec u_1, \dots, \vec u_{m_1}, \sigma(y), \vec u_{m_1+1}, \dots ,\vec u_{m_2}\}$. The effect of the cycle is defined as a vector in $\bbn^d \times \bbn^d$:
\[\Delta (\cyc{x,y}) \defeq \left(\sum_{i=1}^{m_1} \vec u_i, \sum_{i=m_1+1}^{m_2} \vec u_i\right).\]
Let $ {\mathcal{O}}_{\CC}$ be the set of all possible top cycles under $\CC$. We define the set of effects in ${\mathcal{O}}_{\CC}$ as:
\[\Delta(\mathcal{O}_\CC) \defeq \left\{\Delta(\cyc{x,y}) \mid \cyc{x,y}\in \mathcal{O}_\CC\right\}.\]
The geometric dimension of a nontrivial KLM component $\CC$ is denoted by $\gdim{\CC}$, defined as the dimension of the vector space spanned by the effects of possible top cycles under $\CC$. In formal terms, we define:
\[\gdim{\CC} \defeq \gdim{\Delta( {\mathcal{O}}_{\CC})}.\]
An observation is that $\gdim{\CC}\neq 0$ as long as $\CC$ is nontrivial. Otherwise, there must exist $n_1,n_2 \in \ppath{p,q}$ satisfying $n_1 \prec n_2, \sigma(n_1)=\sigma(n_2)$ and $\Delta(\cyc{n_1,n_2}) = (\vec 0, \vec 0)$, which violates the most-simplified assumption. We conclude that $\gdim{\CC} \in [1,2d]$ for any nontrivial $\CC$.

In a further scenario where a KLM component is decomposed, smaller KLM components may be extracted from the rule sets $\RhoL$ and $\RhoR$. We now introduce \emph{the index of KLM components}, which quantifies how many times such extractions may occur. The index of a nontrivial $\CC$ is defined as the largest index among induced sub-GVASes under $\RhoL$ or $\RhoR$ (instead of $\GG$):
\[\iota(\CC)\defeq \max_{X\in \ChiL \lor X\in \ChiR} \{\iota(\GG_{X})\}.\]
Here we slightly abuse notation by writing $\GG_X$ for the sub-GVAS under $\RhoL$ if $X\in \ChiL$, or under $\RhoR$ if $X\in \ChiR$. If $\GG$ is $k$-indexed, then $\iota(\CC) \in [1,k-1]$ for any nontrivial $\CC$.

\begin{definition}[Rank of KLM components]
    The rank of a nontrivial $\CC$ is the pair: \[\rank{\CC}\defeq \left(\iota(\CC), \gdim{\CC}\right).\]
\end{definition}
    
There are $(2kd-2d)$ different ranks. We compare $\rank{\mathcal{C}}$ under lexicographic order. The rank of $\KT$ is a vector in $\bbn^{2kd-2d}$, where the $i$-th coordinate counts the number of KLM components with the $i$-th largest rank. For example, the largest rank is $(k-1,2d)$, the second largest is $(k-1,2d-1)$, the $(2d+1)$-st largest is $(k-2,2d)$, and so on.

\begin{definition}[Rank of KLM trees]
    The rank of $\KT$ is defined as:
    \[ \rank{\KT} \defeq \left(\left(\sum_{\rank{\mathcal{C}} = (k-1,2d)} 1\right), \dots ,\left(\sum_{\rank{\mathcal{C}} = (1,1)} 1\right)\right).\]
\end{definition}

We also compare $\rank{\KT}$ under lexicographic order. The rank of a KLM component is computable by enumerating simple cycles under its rule sets (to compute the geometric dimension), and by dynamic programming (to compute the index).

\subsection{Solutions and Boundedness}
Given $\KT\capture \TT$, the characteristic system $\EE_{\KT}$ is sound and admits solutions representing complete derivation trees. However, they are not necessarily nonnegative. Moreover, $\EE_{\KT}$ admits at least one solution $\sol_\TT$ standing for $\TT$. A solution is called \emph{minimal} if there is no other solution that is pointwise less than or equal to it. Let $\solmin$ be a minimal solution under $\sol_\TT$. By Lemma~\ref{lemma:pottier}, there exists a size bound for $\solmin$:
\[\norm{\solmin}_{\infty} \leq \exp(\size{\KT}).\]

The homogeneous version of $\EE_\KT$, denoted by $\EE_{\KT}^0$, is obtained by replacing all nonzero constants with $0$. More precisely, we replace the corresponding equations with the following ones:
\begin{description}
  \item[$\EE_\CC$:]
  \[
      \#(\rho) \geq 0,
      \mrm{deg}_{\mrm{in}}(X) - \mrm{deg}_{\mrm{out}}(X) = 0 .
  \]

  \item[$\EE_{\vec u}$:]
  \[
    \vec l_{\vec u} = \vec r_{\vec u}.
  \]

  \item[$\EE_{\mrm{Config}}$:]
  \[
    (\lPC, \rPC) = (\vec 0, \vec 0).
  \]
\end{description}
Let $H$ denote the set of all nonnegative, minimal solutions to $\EE_{\KT}^0$. The set $H$ is known as the \emph{Hilbert basis}. Solutions in $H$ are compositions of cycles, since they fulfill the Euler condition. It is clear that $\sol_{\TT} \in \solmin + \bbn{(H)}$. We define the sum of solutions in $H$ as:
\[\solh \defeq \sum_{\sol\in H} \sol.\]
By Lemma~\ref{lemma:pottier}, we have $\norm{\solh}_{\infty} \leq \exp(\size{\KT})$.

For a certain variable $v$, let $\sol(v)$ denote the value of $v$ in the solution $\sol$. The condition $\solh(v)=0$ means that for any $\sol\in \solmin + \bbn{(H)}$, we have $\sol(v) =\solmin(v)$. Such variable $v$ is called \emph{bounded}. 
A variable is \emph{unbounded} otherwise. Bounded variables have an upper bound of $\exp(\size\KT)$.

\section{Decomposing KLM Trees}
\label{section:decomposing_KLM_tree}
\begin{figure}[!t]
    \centering
    \includegraphics[width = 0.94\columnwidth]{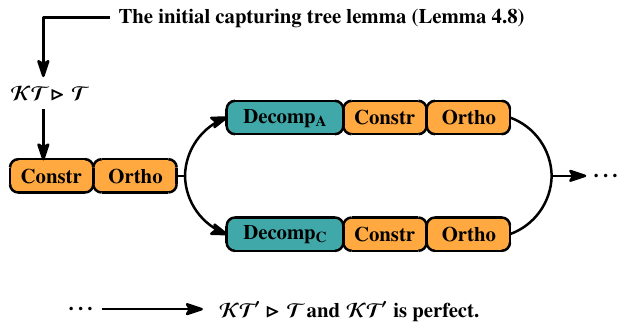}
    \caption{The refinement sequence.}
    \Description{The refinement sequence. Configuration constraining and orthogonalization are always performed collectively. The ``decomposition-cleaning'' loop brings about a strict decrease in rank.}
    \label{fig:refinement_seq}
\end{figure}

In this section, we focus on the one-dimensional case. We aim to convert a KLM tree into a \emph{perfect} KLM tree via a sequence of refinements,
where a perfect KLM tree is a certificate of the reachability. We introduce four refinements: configuration constraining, orthogonalization, algebraic decomposition and combinatorial decomposition, each of which guarantees a perfectness property.

Following Leroux et al.\ in \cite{DBLP:conf/lics/LerouxS19_KLMST}, we distinguish between two types of refinements: the \emph{cleaning operations}, including configuration constraining and orthogonalization, and the \emph{decompositions}, including the others. Applying any refinement on $\KT \capture \TT$ gives a new KLM tree $\NKT \capture\TT$ with $\size{\NKT}\leq \EXP(\size{\KT})$. A cleaning operation does not change the rank of $\KT$. A decomposition strictly decreases the rank, since it decomposes at least one KLM nontrivial component into KLM components with smaller ranks. The refinement sequence is shown in Figure \ref{fig:refinement_seq}. The length of the refinement sequence is limited by Lemma \ref{lemma:bad_seq}.
We refer to Appendix~\ref{section:appendix_decomposing_KLM_tree} for the detailed proofs of the lemmas sketched in the main text.

\subsection{Perfectness}
Given $\KT\capture\TT$, there exists a solution $\solt$ describing $\TT$. Nevertheless, the size of $\solt$ is unbounded, hence the reachability cannot be deduced solely from $\KT$. However, this limitation does not hold for a \emph{perfect} KLM tree. In section~\ref{section:perfect_KLM_tree}, we show that one can directly construct a nonnegative, complete derivation tree from a perfect KLM tree, without knowing the derivation tree it captures. Here, before the definition of the perfectness properties~\cite{DBLP:conf/lics/GuttenbergCL25_CmeVASS}, we introduce several concepts.

\emph{Orthogonality.} Generally, for a nontrivial $\CC\capture \seg{p,q}$, if all top cycles have effect $0$ on the left or right side, then configurations along $\ppath{p,q}$ on that side are entirely determined by their symbols. Such property is called orthogonality \cite{DBLP:conf/icalp/FuYZ24_F_d_KLMST} or rigidity \cite{DBLP:conf/lics/LerouxS19_KLMST}.
\begin{definition}
    Given a nontrivial $\CC\in \KT$. If for every $\cyc{x,y}\in  {\mathcal{O}}_{\CC}$, it holds that $\Delta(\cyc{x,y})[1] =0$, then $\CC$ is left-orthogonal; if $\Delta(\cyc{x,y})[2] =0$, it is right-orthogonal.
\end{definition}

\emph{Pumpability.} For a nontrivial KLM component $\CC$, the pumpability is discussed on configuration variables $\lPC,\lQC,\rPC,\rQC$. Let $\Pump$ be the set of configuration variables to be pumped. If $\lPC$ is bounded and $\CC$ is not left-orthogonal, then $\lPC \in \Pump$. If $\rPC$ is bounded and $\CC$ is not right-orthogonal, then $\rPC \in \Pump$, and analogously for $\lQC, \rQC$. 

Consider a nontrivial $\CC$ with at least one of $\lPC,\rPC$ in $\Pump$. $\CC$ is \emph{forward-pumpable}, if there exists a \emph{forward-pumping cycle} $\cyc{x,y}$ such that:
\begin{enumerate}
    \item $\cyc{x,y}\in \mathcal{O}_\CC$ and $\sigma(x) =\sigma(y) = P(\CC)$.
    \item If both $\lPC,\rPC\in \Pump$, fix the root configuration of $\cyc{x,y}$ by $(\vec l_x, \vec r_x) = (\lPC, \rPC)$. We require $\cyc{x,y}$ be nonnegative and $(\vec l_y, \vec r_y) \geq (\vec l_x+1, \vec l_y + 1)$.
    \item If $\lPC \in \Pump$ and $\rPC \notin \Pump$, fix the root configuration by $(\vec l_x, \vec r_x) = (\lPC, +\omega)$. We require $\cyc{x,y}$ be nonnegative and $\vec l_y \geq \vec l_x +1$. The symmetric case is handled similarly.
\end{enumerate}
The \emph{backward-pumpable} condition is symmetric. For instance, if $\lQC,\rQC\in \Pump$, we fix the leaf configuration by $(\vec l_y, \vec r_y) = (\lQC, \rQC)$ and require $(\vec l_x, \vec r_x) \geq (\vec l_y+1, \vec r_y+1)$. 

A component $\CC$ is \emph{pumpable}, if it is both forward-pumpable (when applicable, i.e., when $\Pump$ contains $\lPC$ or $\rPC$) and backward-pumpable. $\CC$ is called \emph{$f$-pumpable} for function $f(\cdot)$, if the pumping cycles $\cyc{x,y}$ satisfy $|\ppath{x,y}| \leq f(\size{\CC})$. 

We are ready to introduce the perfectness properties and the corresponding refinements. 

%For demonstration purposes, we consider only forward pumpability.

\begin{definition}[Perfectness]
A KLM tree $\KT \capture \TT$ is called \emph{perfect} if it satisfies the following four properties:

\begin{itemize}

\item \emph{Fully constrained.}
Every bounded configuration variable is fixed in the characteristic system
$\EE_{\mrm{Config}}$.

\item \emph{Fully orthogonalized.}
For every left-orthogonal component $\CC$ with bounded $\lPC$ and $\lQC$,
the left configurations are hard-encoded in $\ChiSCC$.
The right-orthogonal case is handled similarly.

\item \emph{Production unboundedness.}
For every nontrivial $\CC$ and every production rule $\rho \in \RhoSCC(\CC) \cup \RhoL(\CC) \cup \RhoR(\CC)$,
the variable $\#(\rho)$ is unbounded.

\item \emph{$\mrm{Exp}$-pumpability.}
Every nontrivial $\CC$ is $\exp$-pumpable.

\end{itemize}
\end{definition}

%\emph{Perfectness 1: fully-constrained property.} A KLM tree $\KT \capture \TT$ is fully constrained on configurations, if every bounded configuration variable is fixed in the characteristic system $\EE_{\mrm{Config}}$.

%\emph{Perfectness 2: fully-orthogonalized property.} A KLM tree $\KT\ \capture \TT$ is fully left-orthogonalized, if for every left-orthogonal $\CC$ with bounded $\lPC, \lQC$, the left configurations are hard-encoded in $\ChiSCC$. A similar argument applies to the right-orthogonal case.

%\emph{Perfectness 3: production unboundedness.} A KLM tree $\KT \capture \TT$ is unbounded on production rules, if for every $\CC$ and every $\rho\in \RhoSCC(\CC)$,$\RhoL(\CC)$ or $\RhoR(\CC)$, the variable $\#(\rho)$ is unbounded.

%\emph{Perfectness 4: $\exp$-pumpability.} A KLM tree $\KT \capture \TT$ is $\exp$-pumpable, if for every $\CC$, it is $\exp$-pumpable. 

\subsection{Configuration Constraining}
In order to meet the fully-constrained property, Lemma~\ref{lemma:constraining} is introduced. Applying Lemma~\ref{lemma:constraining} to $\KT$ is denoted by the refinement operator $\constr$. That is, $\constr(\KT) = \NKT$, where $\NKT$ is the new KLM tree obtained.

\begin{lemma}[Constraining]
    \label{lemma:constraining}
    Given $\KT\capture \TT$. There exists a fully-constrained KLM tree $\NKT$ such that:
    \begin{enumerate}
        \item $\NKT\capture \TT$,
        \item $\rank{\NKT} \leq \rank{\KT}$,
        \item $\size{\NKT} \leq \EXP(\size{\KT})$.
    \end{enumerate}
\end{lemma}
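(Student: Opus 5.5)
The refinement will keep the tree shape and the components of $\KT$ unchanged and only add equations to $\EE_{\mrm{Config}}$. Write $\sol_\TT$ for the solution of $\EE_\KT$ that describes $\TT$; it exists because $\KT\capture\TT$. We then have the decomposition $\sol_\TT\in\solmin+\bbn(H)$, where $\solmin\leq\sol_\TT$ is a minimal solution and $H$ is the Hilbert basis of $\EE^0_\KT$. A configuration variable $v$ is bounded exactly when $\solh(v)=0$, and then every solution in $\solmin+\bbn(H)$ satisfies $v=\solmin(v)$. In particular $\sol_\TT(v)=\solmin(v)$, so $\TT$ has value $\solmin(v)$ at $v$.

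The plan is to define $\NKT$ as $\KT$ with the equations $v=\solmin(v)$ added to $\EE_{\mrm{Config}}$, one for each bounded configuration variable $v$ (any $\lPC,\rPC,\lQC,\rQC$, and likewise $\vec l^{\vec u},\vec r^{\vec u}$ if we treat those as configuration variables). The three conclusions then follow in turn.

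\emph{Capturing.} The division and the bijection to segments are unchanged. $\sol_\TT$ still satisfies $\EE_{\NKT}$ by the remark above, so $\NKT\capture\TT$.

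\emph{Rank.} Components, their rule sets, cycle effects and indices are untouched, so $\rank{\NKT}=\rank{\KT}$.

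\emph{Size.} By Lemma~\ref{lemma:pottier}, $\norm{\solmin}_\infty\leq\exp(\size{\KT})$. We add at most four equations per component, each with constants of that magnitude, so $\size{\NKT}\leq\EXP(\size{\KT})$.

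The main obstacle is showing that $\NKT$ is actually fully constrained. Adding equations changes the solution set, and with it the minimal solutions and the Hilbert basis, so boundedness must be re-examined in $\NKT$. The homogeneous system $\EE^0_{\NKT}$ is $\EE^0_\KT$ together with $v=0$ for the newly fixed $v$. Every $h\in H$ already has $h(v)=0$ for those $v$, so $H$ lies in the new homogeneous cone; conversely, the new cone is contained in the old one. Hence the unbounded variables are the same, and variables that were bounded stay bounded. There is one subtlety: in $\NKT$ we must measure boundedness relative to a minimal solution below $\sol_\TT$, and this may differ from $\solmin$. The fixed values still agree with $\sol_\TT$, however, so every bounded configuration variable is fixed to the value taken in $\TT$. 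If needed, I would iterate the procedure, but I expect a single round to suffice because the two Hilbert bases coincide.
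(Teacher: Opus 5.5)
Your proposal is correct and follows the same route as the paper's proof. You add $v=\solt(v)=\solmin(v)$ to $\EE_{\mrm{Config}}$ for each bounded configuration variable, note that the capture and the rank are unchanged, and bound the new constants with Pottier's lemma. Your extra argument that the Hilbert basis is unchanged, so the new tree is fully constrained after a single round, fills in a step the paper leaves implicit.
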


\begin{proof}
    Assume that in some component $\CC$, the variable $\lPC$ is bounded. Add the following constraint equation to $\EE_{\mrm{Config}}$: \[\lPC= \solt(\lPC).\] By boundedness, we have $\solt(\vec l_P^{\CC}) = \solmin(\vec l_P^{\CC})$, and by Lemma \ref{lemma:pottier}, the value is in $\EXP(\size{\KT})$. The size of the new KLM tree is also in $\exp(\size{\KT})$. Since we make no other modifications, the rank remains unchanged.
\end{proof}

\subsection{Orthogonalization}
Suppose the KLM tree is fully constrained. The orthogonalization is performed on orthogonal KLM components with bounded configuration variables. The refinement operator is $\ortho$.

\begin{lemma}[Orthogonalization]
    Given $\KT\capture \TT$ being fully constrained. There exists a fully-constrained, fully-orthogonalized KLM tree $\NKT$ such that:
    \begin{enumerate}
        \item $\NKT\capture \TT,$
        \item $\rank{\NKT} \leq \rank{\KT},$
        \item $\size{\NKT} \leq \EXP(\size{\KT}).$
    \end{enumerate}
    \label{lemma:orthogonalization}
\end{lemma}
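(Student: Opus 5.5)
The plan is to treat each left-orthogonal component separately; the right-orthogonal case is symmetric. Fix a nontrivial $\CC\in\KT$ that is left-orthogonal and has bounded $\lPC,\lQC$. Since $\KT$ is fully constrained, both values are fixed in $\EE_{\mrm{Config}}$, say $\lPC=a$ and $\lQC=b$, with $a,b\le \EXP(\size{\KT})$.

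The first step is to show that the left configuration at every node of $\ppath{p,q}$ is a function of its symbol. Take $X\in\ChiSCC$ and a node $n\in\ppath{p,q}$ with $\sigma(n)=X$. Strong connectivity gives a path inside $\CC$ from $P$ to $X$, which can be completed, using rules of $\RhoL$ and $\RhoR$, into a partial derivation. Any two such derivations from $P$ to $X$ differ by a combination of top cycles. Left-orthogonality makes every top cycle contribute $0$ on the left, so $\vec l_n=a+\delta_X$, where $\delta_X$ is the left effect of one fixed short derivation from $P$ to $X$. Choosing a simple path gives $|\delta_X|\le |\ChiSCC|\cdot \exp(\size{\CC})$: along such a path, each side subtree can be replaced by a minimal complete derivation of the hanging nonterminal. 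The main obstacle is exactly here. The left effect of a side subtree under $\RhoL$ is not determined by its symbol alone, since the $\RhoL$ subtrees hanging off the path can vary. So instead I would argue on the path itself. Each left-degenerate rule occurrence $X\to AY$ on $\ppath{p,q}$ contributes the displacement of $\sub{c_1}$, and orthogonality applied to two cycles that differ only in this one subtree forces that displacement to be constant for each rule occurrence type. Consequently $\vec l_n$ depends only on $\sigma(n)$, and it equals $a+\delta_X$, bounded by $a+|\ChiSCC|\cdot$ (the max side effect). Consistency also gives $a+\delta_Q=b$.

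With these values in hand, the second step hard-encodes them. I replace each $X\in\ChiSCC$ with the annotated nonterminal $(X,\vec l_X)$ and rewrite $\RhoSCC$ accordingly, keeping only annotated rules consistent with the left effect of the corresponding side child. $P$ and $Q$ become $(P,a)$ and $(Q,b)$. The sets $\RhoL,\RhoR$ and $\rhoD$ stay the same, except that $\rhoD$'s left-hand side is renamed. Because every $\vec l_X$ is at most $\EXP(\size{\KT})$, the annotated component has size at most $\EXP(\size{\CC})$, and after summing over components, $\size{\NKT}\le \EXP(\size{\KT})$.

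Next, capture and rank. $\TT$ is still captured, because the segment's nodes have exactly the annotated left values, so relabeling gives the bijection. The annotated components are still strongly connected, since their top cycles project to cycles with zero left effect, which return to the same annotation. The index is unchanged because $\RhoL,\RhoR$ are untouched. The geometric dimension cannot increase, because the effects of annotated top cycles form a subset of the old ones. Hence $\rank{\NKT}\le\rank{\KT}$. If strong connectivity breaks, the component splits into several components, each with dimension at most the old one. I would handle this by reapplying the topmost SCC division of Lemma~\ref{lemma:small_division}, which only lowers the rank.

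Finally, the new characteristic system may introduce new bounded configuration variables, so I would apply $\constr$ (Lemma~\ref{lemma:constraining}) once more at the end. This restores full constraint without changing the rank, and the composition keeps the size within $\EXP(\size{\KT})$.
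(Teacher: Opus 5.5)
The relabeling part of your proposal is sound. Your first step is the paper's uniform-displacement / uniform-configuration argument, and annotating $\ChiSCC$ with $\vec l_X$ matches steps (1)--(2) of the paper's proof.

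\textbf{The gap.} You deliberately keep $\RhoL$ unchanged. The paper's orthogonalization has a third step, and that step carries the real content:
\begin{itemize}
\item $\RhoL(\CC)$ is cleared and replaced by \emph{certificate rules} $A\rightarrow \vec l_Y-\vec l_X$, one for each left-degenerate rule $(_{\vec l_X}X)\rightarrow A(_{\vec l_Y}Y)$;
\item $\capture$ is relaxed so that such a rule stands for the whole left subtree $\sub{c_1}$.
\end{itemize}
The one-line perfectness clause only says ``hard-encoded in $\ChiSCC$'', but the rest of the paper relies on this replacement. For a left-orthogonal $\CC$, every top cycle has left effect $0$. So no pumping can ever raise the left configurations along $\ppath{p,q}$; they stay pinned at the small values $\vec l_X$. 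In Lemma~\ref{lemma:KT as certificate}, the left subtrees of your $\NKT$ would then be read off from $\solmin+b\cdot\solh$, and nothing ensures they stay nonnegative. A perfect tree built on your construction is therefore not a reachability certificate.

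\textbf{The missing idea.} Once both endpoint configurations are fixed, the left subtree under $(_{\vec l_X}X)\rightarrow A(_{\vec l_Y}Y)$ is precisely a witness for $\reach(\GG_A,\vec l_X,\vec l_Y)$ with $\iota(\GG_A)<\iota(\GG)$. The left side thus becomes an oracle call that is checked recursively (Step~6 of Algorithm~\ref{alg:guessKT}). This recursion is what drives the induction on $k$ in Theorem~\ref{theorem:2k}. Accordingly, your rank step should say that the index does not increase because $\RhoL$ is replaced by terminal rules, not that it is unchanged.

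\textbf{Minor remarks.}
\begin{itemize}
\item Strong connectivity cannot break. The annotation is a function of the symbol, so the annotated $\RhoSCC$-graph is isomorphic to the original. Your fallback to the topmost SCC division is unnecessary, and that division is defined for complete trees, not segments.
\item The extra final application of $\constr$ is harmless.
\end{itemize}
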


\begin{restatable}{lemma}{LemmaDecidabilityOfOrthogonal}
    \label{lemma:decidability_of_orthogonal}
    The left-orthogonality of a KLM component $\CC$ (and the right-orthogonality as well) is decidable in \textup{\EXPSPACE}.
\end{restatable}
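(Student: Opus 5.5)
I would reduce left-orthogonality to the solvability of a linear system over $\bbq^d$ with polynomially many equations. Such a system can be checked in polynomial time (Gaussian elimination over rationals), which is well inside \EXPSPACE{}. The right-orthogonal case is symmetric, with $\RhoR$ and right-hanging children.

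\emph{Step 1: rigidity of left-hanging subtrees.} For every $A \in \ChiL$, let $D(A)\subseteq \bbz^d$ be the set of displacements of complete derivation trees of $A$ that use only rules of $\RhoL$. Nonnegativity plays no role here, since top cycles are purely syntactic. I claim that left-orthogonality forces $|D(A)|=1$ for every $A\in\ChiL$.
\begin{itemize}
\item Every rule of $\RhoSCC$ lies on some top cycle, because $\ChiSCC$ is strongly connected via $\RhoSCC$.
\item Every $A\in\ChiL$ is reachable, inside $\RhoL$, from a left child of some such rule.
\item If $A$ had two derivations with different displacements, I would plug each into the same top cycle. The two resulting cycles would have different left effects, so at least one left effect is nonzero.
\end{itemize}
Conversely, $|D(A)|=1$ for all $A$ holds iff there is an assignment $c:\ChiL\to\bbq^d$ with $c_A=c_B+c_C$ for every $A\rightarrow BC\in\RhoL$ and $c_A=\vec u$ for every $A\rightarrow \vec u\in\RhoL$. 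The direction from the assignment to uniqueness is an induction on derivation trees. For the other direction, the unique values themselves form such an assignment, using properness: every nonterminal has a complete derivation.

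\emph{Step 2: a potential on the spine.} Given the constants $c_A$, the left effect of a top cycle is the sum of $c_A$ over the left-hanging children along its path $\ppath{x,y}$. So it is the weight of a closed walk in the graph on $\ChiSCC$ with the following edges:
\begin{itemize}
\item $X\rightarrow AY$ gives an edge $X\to Y$ of weight $c_A$;
\item $X\rightarrow YA$ gives an edge $X\to Y$ of weight $\vec 0$.
\end{itemize}
All closed walks of a strongly connected weighted graph have weight $\vec 0$ iff there is a potential $h:\ChiSCC\to\bbq^d$ with $h(Y)=h(X)+\text{weight}$ on every edge. This is the standard argument: fix a base vertex and define $h$ by path weight, which is well defined iff all cycles vanish.

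Hence $\CC$ is left-orthogonal iff the joint linear system in the unknowns $c$ and $h$ is solvable. This system has at most $|\CC|$ equations with coefficients bounded by $\norm{\CC}$.

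\emph{Main obstacle.} The delicate point is the realizability claim in Step 1. I must show that every rule in $\RhoSCC\cup\RhoL$ really occurs in some top cycle under $\CC$, so that a single inconsistent rule already yields a witness cycle with nonzero left effect. I would argue this from the capturing relation together with properness:
\begin{itemize}
\item $\RhoL$ is exactly the rule set of subtrees hanging left of a path whose rules are $\RhoSCC$;
\item each $\RhoL$ nonterminal is therefore derivable from a left child and has a complete $\RhoL$-derivation;
\item $\RhoSCC$ closes into cycles because all its symbols lie in one SCC.
\end{itemize}
I would also take care that identical rules shared by $\RhoL$ and $\RhoR$ are treated separately, as the definitions require. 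If the intended notion of top cycle instead restricts the admissible left subtrees further, I would replace the linear check by computing the semilinear effect sets via Parikh's theorem. Testing whether such a set is contained in $\{\vec 0\}\times\bbz^d$ still fits in \EXPSPACE{}.
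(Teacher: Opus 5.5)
Your argument is correct, but it takes a different route from the paper. The paper splits left-orthogonality into three conditions:
\begin{itemize}
\item every cycle of $\GG_A$ under $\RhoL$ has displacement $0$;
\item all complete derivation trees of $A$ share one displacement $\vec u_A$;
\item every cycle of the spine graph $G(\ChiSCC,E)$ has weight $0$.
\end{itemize}
It checks each condition by brute-force enumeration of simple cycles and of acyclic complete derivation trees. The exponential size of these objects is what gives the \EXPSPACE{} bound.

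You replace both enumerations by linear-algebraic certificates. The first is an additive labelling $c$ of $\ChiL$ consistent with $\RhoL$. It is equivalent to the paper's uniform-displacement condition, which, as the paper notes, subsumes its zero-cycle condition. The second is a potential $h$ on $\ChiSCC$. It is equivalent to the third condition because the spine graph is strongly connected via $\RhoSCC$ alone. This step implicitly uses thinness: hanging subtrees cannot contain symbols of the top SCC.

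Both proofs rest on the same realizability facts supplied by $\CC\capture\seg{p,q}$. The paper uses them tacitly when it builds a top cycle with $A$ derived on the left; you state them explicitly.

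Your route gives a polynomial-time test even under binary encoding, and its solution is directly useful downstream. Since every $A\in\ChiL$ has a complete $\RhoL$-derivation, $c$ is forced, so $c_A$ is exactly the uniform displacement $\vec u_A$. Once $\lPC$ is fixed, $\lPC + h(X) - h(P)$ is exactly the uniform left configuration used in the orthogonalization step. The paper's enumeration stays closer to the definition of top-cycle effects and needs no potential argument, but it pays exponential space for that.
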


First, observe that for any left-orthogonal $\CC$ and any cycle under the rule set $\RhoL(\CC)$, the cycle has a displacement of $0$ (recall that the displacement of a cycle is the sum of terminals). Moreover, whenever a nonterminal (such as $A$) is produced directly by a left-degenerate rule in $\RhoSCC(\CC)$ (such as $X\rightarrow AY$), the displacements of all these subtrees of $A$ are identical. The displacement is called the \emph{uniform displacement}, denoted by $\vec u_A$. Clearly, $\vec u_A$ is bounded by $\exp(\size\CC)$ since it is the displacement of some acyclic, complete derivation tree under $\RhoL(\CC)$.

%To prove the lemma, we introduce {\color{red}\textbf{uniform displacement}}. 
%Given $A\in\ChiL$, let $\GG_A$ be the induced sub-GVAS of $A$ under the rule set $\RhoL$. In $\GG_A$, every complete derivation tree $\TT_A$ must have the same displacement. Such effect can be achieved by introducing a pseudo-rule $A \rightarrow \vec u_A$ where $\vec u_A$ denotes the \emph{uniform displacement}. 

\begin{proof}[Sketch of Proof]
    First, we check whether such uniform displacements exist. Then, we construct a weighted, directed graph $G(\ChiSCC, E)$ by mapping left-degenerate rules like $X\rightarrow AY \in \RhoSCC(\CC)$ to the edge $(X,Y,\vec u_A)$, and right-degenerate rules like $X\rightarrow YB\in \RhoSCC(\CC)$ to the edge $(X,Y,0)$. The weight sum of any cycle in $G$ must be $0$. All checks can be done in \EXPSPACE.
\end{proof}

If $\CC\capture \seg{p,q}$ is left-orthogonal, $\lPC$ and $\lQC$ share the same boundedness. If both $\lPC$ and $\lQC$ are bounded, the left configuration of any $n\in \ppath{p,q}$ is also bounded by $\exp(\size\CC)$ and can be determined solely from $\sigma(n)$. To see this, consider the graph $G(\ChiSCC,E)$ constructed above. The left configuration is obtained through an acyclic path in $G$, which is adding at most $|\ChiSCC|$ uniform displacements to $\lPC$. This value, denoted by $\vec l_{\sigma(n)}$, is called the \emph{uniform configuration}.

For $X\in \ChiSCC$, the uniform left configuration $\vec l_X$ is effectively computable. We intend to replace symbol $X$ with the new symbol $(_{\vec l_X}X)$ where $\vec l_X$ is hard-encoded, such as $(_3X), (_5Y)$. For the right-orthogonal case, new symbols are $(X_3),(Y_5)$ and so on. Since the uniform configurations are hard-encoded, for the left-degenerate rule $(_3X)\rightarrow A(_5Y)$, the derivation tree $\TT_A$ must be a nonnegative, complete derivation tree with the root configuration $(3,5)$. The rule set $\RhoL(\CC)$ is no longer relevant. Our attention now shifts to whether such certificate exists. Therefore, rules in $\RhoL(\CC)$ are replaced with rules like $A\rightarrow 5-3$, namely the \emph{certificate rules}. Notice that $5-3=2$ is exactly the uniform displacement $\vec u_A$.

\iffalse
\textbf{Perfectness 2: fully-orthogonalized property.} A KLM tree $\KT\ \capture \TT$ is fully left-orthogonalized, if for every left-orthogonal $\CC\in \KT$ with bounded $\vec l_P^\CC,\vec l_Q^\CC$ the following holds:
\begin{enumerate}
    \item The left configuration $\vec l_X$ is recorded for $X\in \ChiSCC$. i.e., nonterminal $X$ is replaced by $(_{\vec l_X}X)$, such as $(_3X),(_5Y)$ for the one-dimensional case. 
    The starting nonterminal $P$, the targeting nonterminal $Q$ and nonterminals in the rules of $\RhoSCC$ are modified accordingly.
    \item The left rule set $\RhoL$ is cleared and reconstructed. For every left-degenerate rule $(_{\vec l_X}X)\rightarrow A(_{\vec l_Y}Y) \in \RhoSCC$, we have the rule $A \rightarrow \vec u_A\in \RhoL$ where $\vec u_A$ is the uniform displacement. It always holds that $\vec u_A = \vec l_Y - \vec l_X$.
\end{enumerate}

The same requirement applies for the right-orthogonal case. Rules in $\RhoL$ serve as oracles. For a rule $(_{\vec l_X}X) \rightarrow A (_{\vec l_Y} Y) \in \RhoSCC$, the rule $A \rightarrow (\vec l_Y -\vec l_X)\in \RhoL$ is a reachability oracle $\reach(\GG_A,\vec l_X, \vec l_Y)$ with $\iota(\GG_A)< k$. All those oracles are checked in our final reachability algorithm recursively.
\fi

\begin{proof}[Proof for Lemma~\ref{lemma:orthogonalization}]
    Consider a nontrivial $\CC\in \KT$ which is left-orthogonal. Suppose that $\lPC$ and $\lQC$ are bounded. The uniform displacements and the uniform left configurations are computable and bounded by $\exp(\size\CC)$. We reconstruct $\CC$ as follows:
    \begin{enumerate}
        \item For any rule $X\rightarrow AB \in \RhoSCC(\CC)$, replace it with $(_{\vec l_X}X)\rightarrow A(_{\vec l_B}B)$.
        \item Replace $P(\CC)$ with $(_{\vec l_P}P)$ and replace $Q(\CC)$ with $(_{\vec l_Q}Q)$.
        \item The rule set $\RhoL(\CC)$ is cleared. For every left-degenerate rule $(_{\vec l_X}X)\rightarrow A(_{\vec l_Y}Y)$, add the certificate rule $A\rightarrow \vec l_Y-\vec l_X$.
    \end{enumerate}
    The size of the new KLM component (including values like $\vec l_X$) is bounded by $\exp(\size\CC)$. The index does not increase, since rules in $\RhoL(\CC)$ are now simplified. The geometric dimension does not change. We conclude that $\NKT$ satisfies the requirement above.
\end{proof}

In the sequel, configuration issues on the orthogonalized side are no longer considered. The certificate rule $A\rightarrow \vec l_Y -\vec l_X$ serves as a reachability subproblem $\reach(\GG_A, \vec l_X,\vec l_Y)$ with $\iota(\GG_A) \leq \iota(\GG)-1$. In our final algorithm, such subproblems are checked recursively.  Moreover, the definition of $\capture$ is extended. For $\CC \capture \seg{p,q}$ where $\CC$ is left-orthogonalized, on top of all previous conditions, the condition involving $\RhoL$ is replaced by:
\begin{quote}
    For $n\in \ppath{p,q}$ such that $n$ has a left-degenerate rule $\sigma(n) \rightarrow \sigma(c_1)\sigma(c_2)$ on the segment, there exists a certificate rule in $\RhoL(\CC)$ representing $\sub{c_1}$.
\end{quote}

\subsection{Algebraic Decomposition}
\label{subsection:alg}

\begin{figure*}[!t]
    \centering
    \includegraphics[width=0.72\textwidth]{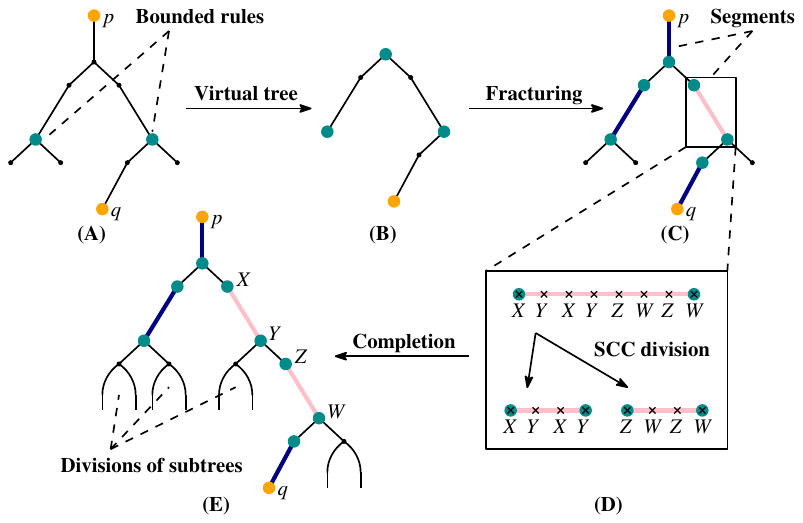}
    \caption{The division of bounded segments.}
    \Description{Read this figure clockwise.}
    \label{figure:alg_decomp}
\end{figure*}
The algebraic decomposition is performed on any nontrivial KLM component $\CC$ with any bounded $\#(\rho)$. The refinement operator is $\decompA$.

\begin{restatable}[Algebraic decomposition]{lemma}{LemmaAlgebraicDecomposition}
    \label{lemma:algebraic decomposition}
    Given $\KT \capture \TT$. If there exists some nontrivial $\CC \in \KT$ with at least one bounded $\#(\rho)$, then there exists another KLM tree $\NKT$ such that:
    \begin{enumerate}
        \item $\NKT\capture \TT,$
        \item $\rank{\NKT} < \rank{\KT},$
        \item $\size{\NKT} \leq \EXP(\size{\KT}).$
    \end{enumerate}
\end{restatable}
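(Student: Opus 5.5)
The plan mirrors the classical KLM decomposition for VASS: a rule with bounded Parikh image occurs only boundedly often in the captured segment, so we cut the segment at those occurrences and re-capture the pieces by components of smaller rank.

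\textbf{Step 1: bound the occurrences.} Fix a nontrivial $\CC\in\KT$ with $\CC\capture\seg{p,q}$ and a rule $\rho$ with $\#(\rho)$ bounded. Boundedness gives $\solt(\#(\rho))=\solmin(\#(\rho))\leq \exp(\size{\KT})$ by Lemma~\ref{lemma:pottier}. So $\rho$ occurs at most exponentially often in the part of $\TT$ captured by $\CC$.

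\textbf{Step 2: case $\rho\in\RhoSCC$.} The occurrences of $\rho$ lie on $\ppath{p,q}$. Cutting $\ppath{p,q}$ at each occurrence splits $\seg{p,q}$ into at most $\exp(\size{\KT})+1$ consecutive subsegments.
\begin{itemize}
\item On each subsegment, apply the topmost SCC division of Lemma~\ref{lemma:small_division} to the rules actually used there.
\item Re-capture each resulting strongly connected piece by a new KLM component. Capture the sibling subtrees hanging off the cut nodes by KLM subtrees built as in Lemma~\ref{lemma:initial_capturing_tree}.
\item Chain the pieces through $\EE_{\mrm{Global}}$. The bounded rule becomes the exit rule $\rhoD$ of the preceding piece, which keeps the tree shape required by the definition of KLM trees.
\end{itemize}
Each new piece either omits $\rho$ from its $\RhoSCC$, or has a strictly smaller SCC. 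Either way its set of top cycles $\mathcal{O}$ is contained in that of $\CC$, so its geometric dimension is at most $\gdim{\CC}$. Its side rule sets are contained in those of $\CC$, so its index is at most $\iota(\CC)$.

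\textbf{Step 3: case $\rho\in\RhoL$ (symmetrically $\RhoR$).} The bounded rule sits inside side subtrees. Every side subtree containing an occurrence of $\rho$ is extracted from $\RhoL$ and replaced by a real child hung off a new cut point on the path. That is, we cut the path at the parent node $n$ of such a side subtree and re-capture the side subtree by its own KLM tree. There are boundedly many such side subtrees and cut points. Each extracted subtree is a derivation of a sub-GVAS of index at most $\iota(\CC)\leq k-1$, so its own components have index strictly below $\iota(\CC)$.

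\textbf{Step 4: strict rank decrease.} This is the main obstacle. For the pieces of the path, I would show $\gdim$ drops strictly, or the index drops, whenever $\RhoSCC$ loses a rule. The intended argument is that, by thinness and strong connectivity, every top cycle of $\CC$ can be taken to use all of $\RhoSCC$. Since $\#(\rho)$ is bounded while the Hilbert basis $H$ witnesses unbounded cycles, the cycle effects available without $\rho$ span a proper subspace. For side rules, I would argue that removing the bounded rule from $\RhoL$ lowers $\iota$ of the remaining path components, or lowers their dimension.

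If the strict decrease is not clean per piece, I would fall back on the lexicographic order on $\rank{\KT}$. One component of rank $\rank{\CC}$ is replaced by exponentially many components of strictly smaller rank. This is exactly a decrease in the lexicographic vector order, regardless of how many new components appear.

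\textbf{Size.} The new tree has at most $\exp(\size{\KT})$ new components, each of size polynomial in $\size{\CC}$ plus the exponential cut counts. Hence $\size{\NKT}\leq\EXP(\size{\KT})$. Capturing $\NKT\capture\TT$ holds by construction.
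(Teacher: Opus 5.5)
Your overall plan (cut $\seg{p,q}$ at the bounded occurrences, re-divide, re-capture) matches the paper's, but there is a genuine gap. You excise the occurrences of a \emph{single} bounded rule $\rho$, and with that construction the strict rank decrease you need in Step~4 is false.

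Here is a counterexample. Take the thin 1-GVAS with rules $X\to AX \mid A'X \mid XB_1 \mid XB_2 \mid 0$, $A\to 1$, $A'\to -1$, $B_1\to 1$, $B_2\to 1$ and root configuration $(0,2)$. Let the path apply $X\to AX$, $X\to XB_2$, $X\to A'X$, $X\to XB_1$ in this order and exit by $X\to 0$; the path configurations are $(0,2),(1,2),(1,1),(0,1),(0,0)$.
\begin{itemize}
\item In $\EE^0_{\KT}$ we get $\rQC+\#(B_1\to 1)+\#(B_2\to 1)=\rPC=0$ with $\rQC\ge 0$. Hence $X\to XB_1$, $X\to XB_2$ and both right rules are bounded.
\item The rules $X\to AX$ and $X\to A'X$ are unbounded.
\item Cutting only at $\rho=X\to XB_1$ leaves a path piece using $X\to AX$, $X\to XB_2$, $X\to A'X$. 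Its top cycles have effects $(1,0),(-1,0),(0,1)$.
\item So that piece has geometric dimension $2=\gdim{\CC}$ and index $1=\iota(\CC)$. All other pieces are trivial, so $\rank{\NKT}=\rank{\KT}$.
\end{itemize}

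Your fallback does not rescue this. A lexicographic decrease of $\rank{\KT}$ requires every new nontrivial component to have rank strictly below $\rank{\CC}$, which is exactly what is in question. Your suggestion that dropping $\rho$ from $\RhoL$ lowers $\iota$ is also unjustified: the index is a maximum over the remaining side nonterminals and typically does not change.

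The missing idea is that all bounded rules of $\CC$, in $\RhoSCC\cup\RhoL\cup\RhoR$, must be excised at once. The paper does this by putting every node of $\seg{p,q}$ that carries any bounded rule, together with $q$, into $N_{\Bounded}$, and fracturing along its LCA-closure. Then every new component whose path lies on $\ppath{p,q}$ uses only unbounded rules, and the strict drop of $\gdim$ follows by contradiction:
\begin{itemize}
\item Suppose $\gdim{\CC^\ddagger}=\gdim{\CC}$. Take a top cycle $\theta$ of $\CC$ that uses \emph{every} rule of $\CC$; it exists because $\CC\capture\seg{p,q}$ and the path is strongly connected. Note that this is existence of one such cycle, not a property of every top cycle.
\item Write $\Delta(\theta)$ in a basis of effects $\vec v_i$ of top cycles of $\CC^\ddagger$, and clear denominators: $\mu\Delta(\theta)+\sum_{\lambda_i<0}\mu_i\vec v_i=\sum_{\lambda_i>0}\mu_i\vec v_i$.
\item Start from $N\cdot\solh$, add the Parikh images of the left-hand cycles and subtract those of the right-hand cycles. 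For $N$ large this is a nonnegative solution of $\EE^0_{\KT}$: cycles preserve the Euler equations, and the net effect is zero, so no configuration variable changes.
\item This solution is positive on a bounded rule, which is a contradiction.
\end{itemize}
The subtraction can be absorbed by $N\cdot\solh$ only because the subtracted cycles use unbounded rules. That is precisely why cutting a single $\rho$ fails. Since $\theta$ contains the side rules too, the same argument handles bounded rules in $\RhoL$ or $\RhoR$, with no appeal to the index.

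Your treatment of the extracted side subtrees is fine and matches the paper's disjoint case: you re-divide them by Lemma~\ref{lemma:small_division} and note that all their components have index $<\iota(\CC)$.
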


Let $\CC\capture \seg{p,q}$. We aim to decompose $\CC$ into smaller KLM components whose ranks strictly decrease. In order to maintain the capturing relation, we have to subdivide $\seg{p,q}$ into smaller segments. We construct a division set $\subdiv{\seg{p,q}}$ containing these segments which renders true the following statements:
\begin{enumerate}
    \item Bounded rules are excluded from segments in $\subdiv{\seg{p,q}}$.
    \item Every segment in $\subdiv{\seg{p,q}}$ is strongly connected.
    \item Let the original division set of $\TT$ be $\subdiv{\TT}$. Then: \[\subdivs{\TT}\defeq  (\subdiv{\TT} \setminus \{\seg{p,q}\}) \cup \subdiv{\seg{p,q}}\] is a new division of $\TT$.
    \item The size $|\subdiv{\seg{p,q}}|$ is bounded by $\exp(\size{\KT})$.
\end{enumerate}
Our main idea is illustrated in Figure~\ref{figure:alg_decomp}. 
Diagram~(A) shows $\seg{p,q}$ with two bounded rules, while diagrams~(B)–(E) illustrate the four steps of the division procedure.

\emph{Step 1: Virtual tree construction.}
In this step, we intend to locate the bounded rules in $\seg{p,q}$. Let the set of bounded rules be $\RhoB\subseteq \RhoSCC\cup \RhoL\cup \RhoR$. Let the node set $N_{\Bounded}$ collect all nodes in $\seg{p,q}$ that have a bounded rule. We additionally include the node $q$ in $N_{\Bounded}$ since $\rhoD$ is used exactly once. Denote the lowest common ancestor (LCA) of nodes $n_1,n_2$ as $\mrm{lca}(n_1,n_2)$. Consider the LCA-closure of $N_{\Bounded}$:
\[\mrm{LCA}(N_{\Bounded}) \defeq N_{\Bounded} \cup \left\{\mrm{lca}(n_1,n_2) \mid n_1,n_2\in N_{\Bounded}\right\}.\]
Nodes in $\mrm{LCA}(N_{\Bounded})$ form the \emph{virtual tree} of $N_{\Bounded}$. In Figure~\ref{figure:alg_decomp}, diagram~(B), the virtual tree contains four nodes: node $q$, two nodes with bounded rules, and their LCA.

\emph{Step 2: Segment fracturing.}
Every edge of the virtual tree is going to be a new segment. For every $q^\ast\in \mrm{LCA}(N_{\Bounded})$, we construct $\seg{p^\ast,q^\ast}$ and decide $p^\ast$ by:

\begin{enumerate}
    \item Let $r\in \mrm{LCA}(N_{\Bounded})$ be the root of the virtual tree. For $q^\ast\in \mrm{LCA}(N_{\Bounded})$ and $q^\ast \neq r$, it has a parent in the virtual tree, denoted by $\mrm{par}(q^\ast)$ .
    \item  Node $\mrm{par}(q^\ast)$ has two children $c_1,c_2$ on $\seg{p,q}$. Let $p^\ast$ be $c_1$ if $q^\ast \in \sub{c_1}$, and $c_2$ otherwise.
    \item For $q^\ast = r$, let $p^\ast = p$.
\end{enumerate}
In this way, bounded rules serve as bridges between segments and are therefore effectively excluded. See Figure~\ref{figure:alg_decomp}, diagram (C).

\emph{Step 3: SCC division.}
In this step, we further subdivide segments $\seg{p^\ast,q^\ast}$ into strongly-connected ones. We construct a directed graph $G$ by merging nodes on $\ppath{p^\ast,q^\ast}$ which share the same symbol. The SCC condensation graph of $G$ is a path. Each SCC is mapped to a subpath, denoted by $\ppath{p^\dagger, q^\dagger}$, and gives a strongly-connected segment $\seg{p^\dagger, q^\dagger}$. In Figure~\ref{figure:alg_decomp}, diagram~(D), the original path $\{X,Y,X,Y,Z,W,Z,W\}$ is divided into $\{X,Y,X,Y\}$ and $\{Z,W,Z,W\}$.

\emph{Step 4: Subtree completion.} 
In any feasible division of a complete derivation tree, each segment has either exactly two children or none. However, for a segment $\seg{p^\dagger,q^\dagger}$ obtained in step 3, the subtree rooted by a child of $q^\dagger$ is not necessarily covered in any segment. Subdivisions of these subtrees should be included in the final division set, as shown in Figure~\ref{figure:alg_decomp}, diagram (E).
\begin{enumerate}
    \item  Let $\subdiv{\seg{p,q}}$ collects all strongly-connected $\seg{p^\dagger, q^\dagger}$ first. We check all $\seg{p^\dagger, q^\dagger}$ as follows.
    \item The children of $q^\dagger$ must be covered. For an uncovered child $c_1$, we can infer that $c_1\notin \ppath{p,q}$. Therefore, $\sub{c_1}$ is a complete derivation tree.
    \item By Lemma \ref{lemma:small_division}, there exists a small division of $\sub{c_1}$. The size bound is $|\subdiv{\sub{c_1}}| \leq \exp(\size{\CC})$. We add all segments in $\subdiv{\sub{c_1}}$ to $\subdiv{\seg{p,q}}$.
\end{enumerate}
The division of a bounded segment is done. The new division of $\TT$ given by:
\[\subdivs{\TT}= (\subdiv{\TT} \setminus \{\seg{p,q}\}) \cup \subdiv{\seg{p,q}}\] is a feasible one. The size of $\subdiv{\seg{p,q}}$ is bounded by Lemma~\ref{lemma:size bound for segment division}, proved in Appendix~\ref{section:appendix_decomposing_KLM_tree}.
\begin{restatable}{lemma}{LemmaSizeofDivision}
    The division set $\subdiv{\seg{p,q}}$ for the algebraic decomposition has a size bound of $\exp(\size{\KT})$.
    \label{lemma:size bound for segment division}
\end{restatable}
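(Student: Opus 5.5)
We count the segments produced by each of the four steps, bounding step by step how many new segments arise.

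\emph{Virtual tree.} The set $N_{\Bounded}$ contains $q$ and every node of $\seg{p,q}$ carrying a bounded rule. Each bounded rule $\rho\in\RhoB$ occurs exactly $\solt(\#(\rho))=\solmin(\#(\rho))$ times, since $\solh(\#(\rho))=0$. By Lemma~\ref{lemma:pottier}, this is at most $\exp(\size{\KT})$. Hence
\[|N_{\Bounded}|\le 1+|\RhoB|\cdot \exp(\size{\KT})\le \exp(\size{\KT}).\]
The LCA-closure of a set of $m$ nodes in a binary tree has at most $2m-1$ nodes, because each added LCA is a branching node of the virtual tree. So $|\mrm{LCA}(N_{\Bounded})|\le \exp(\size{\KT})$.

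\emph{Fracturing.} Step~2 produces exactly one segment $\seg{p^\ast,q^\ast}$ per node of the virtual tree, so there are at most $\exp(\size{\KT})$ such segments.

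\emph{SCC division.} For a fixed $\seg{p^\ast,q^\ast}$, the condensation graph of the merged path is a path whose vertices are distinct SCCs of symbols. Its length is therefore at most the number of distinct nonterminals, namely $|\ChiSCC|\cup|\ChiL|\cup|\ChiR|\le\size{\CC}$. Here I would stress that a single segment $\seg{p^\ast,q^\ast}$ may lie partly inside a subtree hanging off $\ppath{p,q}$ (a node in $N_{\Bounded}$ can belong to $\RhoL$ or $\RhoR$). Even so, its path symbols are nonterminals of $\CC$, and once the path leaves an SCC it cannot return. Step~3 thus multiplies the count by at most $\size{\CC}$.

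\emph{Subtree completion.} This is the step needing the most care. Each $\seg{p^\dagger,q^\dagger}$ has at most two children of $q^\dagger$ that may be uncovered. I would argue that every uncovered child $c$ roots a complete subtree $\sub{c}$ containing no node of $N_{\Bounded}$: otherwise some $q^\ast$ would lie below $c$, and $c$ would lie on a path of a fractured segment and hence be covered. For each such $c$, Lemma~\ref{lemma:small_division} gives a division of $\sub{c}$ of size at most $2^{|\Chi_{\sub{c}}|}\le\exp(\size{\GG})\le\exp(\size{\KT})$. The size of $\KT$ dominates $|\Chi|$, since the component contains the relevant symbols.

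Adding these contributions, $|\subdiv{\seg{p,q}}|$ is at most the number of virtual-tree nodes, times $\size{\CC}$ for the SCC split, times $\bigl(1+2\exp(\size{\KT})\bigr)$ for completions. A product of exponentials in $\size{\KT}$ is again of the form $\exp(\size{\KT})$, which gives the claimed bound. The main obstacle is the completion step. There I must justify that uncovered children are genuinely complete subtrees disjoint from all other pieces, so that divisions are not double counted, and that Lemma~\ref{lemma:small_division} applies with the exponent measured in the number of distinct nonterminals rather than the (unbounded) size of $\sub{c}$.
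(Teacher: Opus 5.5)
Your proposal is correct and follows the paper's proof essentially step for step. The Pottier bound on each bounded $\#(\rho)$ gives $|N_{\Bounded}|\le\exp(\size{\KT})$, the virtual tree has at most $2|N_{\Bounded}|-1$ nodes, each fractured segment splits into at most $\size{\CC}$ strongly-connected pieces, and each piece has at most two completed subtrees divided via Lemma~\ref{lemma:small_division}.

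The only slip is the detour $\exp(\size{\GG})\le\exp(\size{\KT})$, which need not hold, since $\KT$ only mentions symbols used in the captured tree. Instead, as the paper does, use $\Rho_{\sub{c}}\subseteq\RhoL(\CC)$ or $\Rho_{\sub{c}}\subseteq\RhoR(\CC)$ to get $|\Chi_{\sub{c}}|\le\size{\CC}$ directly. This is what your remark that the component contains the relevant symbols actually amounts to.
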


\begin{proof}[Sketch of Proof for Lemma \ref{lemma:algebraic decomposition}]
    Given $\KT \capture \TT$. Suppose that $\CC$ contains at least one bounded $\#(\rho)$. Subdividing $\seg{p,q}$ gives $\subdiv{\seg{p,q}}$. For each segment $\seg{p^\ddagger, q^\ddagger}\in \subdiv{\seg{p,q}}$, we systematically construct a new KLM component $\CC^\ddagger \capture \seg{p^\ddagger, q^\ddagger}$ according to the definition of the capturing relation. We distinguish these new KLM components by two cases:
    \begin{enumerate}
        \item \emph{The overlapping case.} If $\ppath{p^\ddagger, q^\ddagger}\subseteq \ppath{p,q}$, the rule sets $\RhoSCC(\CC^\ddagger)$, $\RhoL(\CC^\ddagger)$ and $\RhoR(\CC^\ddagger)$ are the subsets of the counterparts in $\CC$.  If $p^\ddagger = p$ or $q^\ddagger =q$, $\CC^{\ddagger}$ inherits the configuration constraining equations in $\EE_{\mrm{Config}}$. We have $\size{\CC^\ddagger} = O(\size{\CC})$.
        \item \emph{The disjoint case.} Otherwise, $\ppath{p^\ddagger, q^\ddagger} \cap \ppath{p,q} = \varnothing$. The new KLM component $\CC^\ddagger$ contains only rules in $\RhoL(\CC)$ or $\RhoR(\CC)$. We also have $\size{\CC^\ddagger} = O(\size{\CC})$.
    \end{enumerate}
    By Lemma~\ref{lemma:size bound for segment division}, it is immediate that $\size{\NKT} \leq \exp(\size{\KT})$. 
    We analyze the decrease in the rank according to the two cases above. For the disjoint case, it immediately follows that $\iota(\CC^{\ddagger}) \leq  \iota(\CC^{\ddagger}) -1$. For the overlapping case, $\iota(\CC^{\ddagger}) \leq  \iota(\CC^{\ddagger})$, and we provide a sketch of $\gdim{\CC^{\ddagger}} < \gdim{\CC}$ by contradiction. Assume that $\gdim{\CC^{\ddagger}} = \gdim{\CC}$. We can construct a top cycle in $\mathcal{O}_{\CC}$ with at least one bounded rule, whose effect is in $\bbq(\Delta(\mathcal{O}_{\CC^\ddagger}))$. Therefore, the Parikh image of the bounded rule is represented as a linear combination of unbounded ones, yielding a contradiction. See Appendix~\ref{section:appendix_decomposing_KLM_tree} for a detailed proof.
\end{proof}

\subsection{Combinatorial Decomposition}
The combinatorial decomposition is performed on any nontrivial, fully-constrained and fully-orthogonalized KLM component $\CC$ with $\Pump\neq \varnothing$. The refinement operator is $\decompC$.

\begin{restatable}[Combinatorial decomposition]{lemma}{LemmaCombinatorialDecomposition}
    Given a fully-constrained and fully-orthogonalized KLM tree $\KT \capture \TT$. If $\KT$ is not $\exp$-pumpable, then there exists another KLM tree $\NKT$ such that
    \begin{enumerate}
        \item $\NKT\capture \TT,$
        \item $\rank{\NKT} < \rank{\KT},$
        \item $\size{\NKT} \leq \EXP(\size{\KT})$.
    \end{enumerate}
    \label{lemma:combinatorial decomposition}
\end{restatable}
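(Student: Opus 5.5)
The plan is to mimic the classical combinatorial decomposition for VASS, using the one-dimensionality together with the $\EXPSPACE$ coverability result for 1-GVAS. Fix a nontrivial component $\CC\capture\seg{p,q}$ with $\Pump\neq\varnothing$ that is not $\exp$-pumpable. By symmetry, assume that forward pumping fails and that $\lPC\in\Pump$. Then $\lPC$ is bounded, hence fixed by full constraining to a value of size $\exp(\size{\KT})$, and $\CC$ is not left-orthogonal. The goal is a \emph{smallness claim}: there is a bound $B\le \exp(\size{\KT})$ such that every node $n\in\ppath{p,q}$ satisfies $\vec l_n<B$. When both $\lPC,\rPC\in\Pump$, the corresponding claim is that $\min(\vec l_n,\vec r_n)<B$ along the path, or more precisely that the pair stays in a region one coordinate of which is bounded. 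I expect to obtain this by a finer case split following the two- versus one-coordinate definition of forward pumping.

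To prove the smallness claim, I would argue by contradiction. Let $n$ be the first node on $\ppath{p,q}$ with $\vec l_n\ge B$. On $\ppath{p,n}$ all left values are below $B$. Whenever two nodes on this prefix carry equal symbols and equal left values, I contract the cycle between them; this is legal because the right side is either $+\omega$ or is handled by the case split. The result is a nonnegative prefix from $P$ to $\sigma(n)$ whose path length is at most $|\ChiSCC|\cdot B$ and whose left value strictly increases. Strong connectivity then gives a short path in $\RhoSCC$ from $\sigma(n)$ back to $P$. Its hanging left subtrees can be taken to be small complete derivations under $\RhoL$ (for instance acyclic ones, of displacement at most $\exp(\size{\CC})$). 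Since $\vec l_n\ge B$ and $B$ is chosen larger than the total negative drop of these subtrees, the return path stays nonnegative and arrives at $P$ with left value greater than $\lPC$. Concatenating the two pieces gives a forward-pumping cycle of length $\exp(\size{\KT})$, contradicting the assumption that $\CC$ is not $\exp$-pumpable. The subtle point is nonnegativity of the hanging subtrees inside the contracted prefix. These subtrees are copied from $\TT$ unchanged, with the same left input value, so they remain nonnegative. For the right-coordinate case I would invoke the coverability bound of \cite{DBLP:conf/icalp/LerouxST15_Coverability}, which gives exponential-size witnesses for the lower-index sub-GVASes under $\RhoL,\RhoR$, so that $B$ can be chosen uniformly.

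Given the smallness claim, the decomposition is analogous to the proof of Lemma~\ref{lemma:orthogonalization}. I would replace each $X\in\ChiSCC$ by the symbols $(_{v}X)$ for $v<B$ and rewrite each rule of $\RhoSCC$ accordingly. A left-degenerate rule $(_vX)\rightarrow A(_{w}Y)$ turns its $A$-subtree into a certificate rule $A\rightarrow w-v$, i.e., a reachability subproblem of smaller index. The path $\ppath{p,q}$ then reads as a path in the production graph of the new symbols, and I would reuse Steps~3 and~4 of the algebraic decomposition: split the path by the SCCs of the merged graph, and complete the uncovered complete subtrees with Lemma~\ref{lemma:small_division}. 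This keeps the number of new segments at $\exp(\size{\KT})$ and preserves $\NKT\capture\TT$. Every top cycle of a new component repeats an encoded left value, so its left effect is $0$. Since $\CC$ was not left-orthogonal, the spanned space loses a dimension, so $\gdim{}$ drops while $\iota$ does not increase; disjoint components of the kind appearing in Step~4 have strictly smaller index. Hence $\rank{\NKT}<\rank{\KT}$. The backward case is symmetric, using the bottom-up computation of configurations from $(\lQC,\rQC)$.

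The main obstacle is the smallness claim when both $\lPC$ and $\rPC$ are in $\Pump$. There the contracted prefix must increase both coordinates simultaneously, and the right configurations along the path are determined from above in the opposite order. My first attempt would be to bound only the coordinate whose growth fails and to treat the other as $+\omega$. This needs a careful argument that, in dimension one, failure of two-sided pumping forces one side to stay small along the whole path. I would obtain this argument from a Rackoff-style induction over which coordinates are still below the threshold.
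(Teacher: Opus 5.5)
Your architecture is the paper's: bound one side of the configurations along $\ppath{p,q}$ via the contrapositive of a pumping lemma, hard-encode that side, redo the SCC division and subtree completion, and obtain the drop in $\gdim{\cdot}$. That drop holds because the new overlapping components are orthogonal on the encoded side while $\CC$ was not; your proper-subspace argument handles the paper's two cases $\gdim{\CC}\in\{1,2\}$ at once. Your one-sided argument is sound. It is slightly more direct than the paper's route through the graph labelled by the functions $\delta_A$: contracting only cycles with zero left effect leaves all downstream left values unchanged, and hence also the copied subtrees of $\TT$. Two small corrections. First, state $B$ and the cycle length in terms of $\size{\CC}$, which is legitimate because full constraining puts $\lPC$ into $\size{\CC}$; $\exp$-pumpability is measured against $\size{\CC}$, not $\size{\KT}$. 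Second, you need no witness-size bound from the coverability result, because only $|\ppath{x,y}|$ is bounded.

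The genuine gap is the two-sided case $\lPC,\rPC\in\Pump$, which you postpone; there it is the whole content of the lemma (the paper's two-sided pumping lemma). You need a dichotomy: either $\vec l_n<b_1$ for every $n\in\ppath{p,q}$, or $\vec r_n<b_2$ for every $n$. A pointwise bound on $\min(\vec l_n,\vec r_n)$ does not let you encode a single side.

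Your contraction technique does not extend to this case. Let $n_i$ be the first node with $\vec l_{n_i}\ge b_1$ and $n_j$, $j\ge i$, the first with $\vec r_{n_j}\ge b_2$. Between $n_i$ and $n_j$ the subtrees copied from $\TT$ may bring the left value back to $0$. Moreover, deleting a cycle of non-positive right effect shifts every downstream left value by that cycle's left effect, which is uncontrolled. So ``treat the left as $+\omega$'' is unjustified.

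The missing idea is that each hanging subtree acts on one side only. The argument then runs as follows.
\begin{enumerate}
\item The prefix up to $n_i$ is short by the most-simplified assumption, since both coordinates are bounded there and the root is fixed to $(\lPC,\rPC)$.
\item After $n_i$, replace the left subtrees by acyclic ones, which lose at most $\Delta$ per step, and keep the right subtrees.
\item Delete cycles of right effect $\le 0$. Downstream right values only grow, and by monotonicity the right subtrees stay nonnegative.
\item Either a simple cycle of positive right effect appears within $|\ChiSCC|$ steps after $n_i$, and you repeat it about $\rPC+\Delta|\ChiSCC|+1$ times; or the cleaned tail has fewer than $|\ChiSCC|$ steps and still ends with right value at least $b_2$.
\item In both cases a returning path closes the cycle.
\end{enumerate}
This requires coupling the thresholds, e.g.\ $b_1\ge\lPC+\Delta|\ChiSCC|(\rPC+\Delta|\ChiSCC|+3)+1$, and symmetrically for $b_2$ when $j<i$; all bounds remain in $\exp(\size{\CC})$. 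Without such an argument the smallness claim, and hence the lemma, is not established when both $\lPC$ and $\rPC$ lie in $\Pump$.
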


Lemma~\ref{lemma:one sided pumping} and Lemma~\ref{lemma:two sided pumping} establish the following fact: if there exists a sufficiently large configuration, then $\exp$-pumpability holds. Actually, it is the contrapositives of Lemmas~\ref{lemma:one sided pumping} and \ref{lemma:two sided pumping} that are useful for the proof of Lemma~\ref{lemma:combinatorial decomposition}.

\begin{restatable}[One-sided forward-pumping]{lemma}{LemmaOneSidedPumping}
    Let $\KT$ be fully constrained. For a nontrivial $\CC \capture \seg{p,q}$, if $\lPC \in \Pump, \rPC \notin \Pump$, and there exists $n\in \ppath{p,q}$ with its left configuration $\vec l_n$ satisfying $\vec l_n \geq \exp(\size{\CC})$, then we can construct a forward-pumping cycle $\cyc{x,y}$ with length bound $|\ppath{x,y}|\leq \exp(\size{\CC})$.
    \label{lemma:one sided pumping}
\end{restatable}

\begin{restatable}[Two-sided forward-pumping]{lemma}{LemmaTwoSidedPumping}
    Let $\KT$ be fully constrained. For a nontrivial $\CC \capture \seg{p,q}$, if $\lPC,\rPC \in \Pump$, and there exist $n_i,n_j\in \ppath{p,q}$ such that $\vec l_{n_i} \geq \exp(\size{\CC})$ and $\vec r_{n_j} \geq \exp(\size{\CC})$, then we can construct a forward-pumping cycle $\cyc{x,y}$ with length bound $|\ppath{x,y}|\leq \exp(\size{\CC})$.
    \label{lemma:two sided pumping}
\end{restatable}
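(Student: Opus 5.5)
The plan is to reduce the two-sided statement to two one-sided pumping phenomena, one on each side, living in the same strongly-connected structure of $\CC$. I would then glue them into a single top cycle rooted at $P(\CC)$ that increases both configurations.

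\emph{Preliminaries.} Since $\KT$ is fully constrained and $\lPC,\rPC\in\Pump$, both values are fixed and at most $\exp(\size{\KT})$. Moreover, $\CC$ is neither left- nor right-orthogonal.

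\emph{Step 1: a threshold for side subtrees.} By the \EXPSPACE{} coverability bound for 1-GVAS~\cite{DBLP:conf/icalp/LerouxST15_Coverability}, I fix a threshold $B=\exp(\size{\CC})$ with the following property. For every $A\in\ChiL\cup\ChiR$ that has a complete nonnegative derivation at all, there is a complete derivation tree of $A$ of size $\le B$. It is nonnegative from every start value $\ge B$, and its displacement is bounded in absolute value by $B$. Consequently, whenever the current left (resp.\ right) configuration at a path node is $\ge B\cdot\exp(\size\CC)$, every short side subtree hanging at that node can be executed safely. I would take this product as the constant $\exp(\size{\CC})$ in the hypothesis.

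\emph{Step 2: extract an increasing cycle on each side.} Let $n$ be the deeper of $n_i,n_j$. The prefix $\seg{p,n}$ is nonnegative, because it is part of the nonnegative $\TT$. Along $\ppath{p,n}$, the left configuration starts at $\lPC\le\exp(\size\KT)$ and reaches $\ge\exp(\size\CC)$ at $n_i$. The right configuration, read top-down, starts at $\rPC$ and reaches $\ge\exp(\size\CC)$ at $n_j$.

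I treat the left side as a 1-dimensional VASS-like walk over $\ChiSCC$. Its transitions are left-degenerate rules, with effects given by left side subtrees, which act as oracles. A Rackoff-style argument on the first node where $\vec l$ crosses the threshold then yields two nodes $m_1\prec m_2$ on the path with the following properties: $\sigma(m_1)=\sigma(m_2)$, $\vec l_{m_2}>\vec l_{m_1}$, both lie below the threshold crossing, and the path between them is shortened to length $\exp(\size\CC)$. The shortening removes inner cycles and replaces side subtrees by the small witnesses of Step 1. The same argument gives a right cycle between $m_1'\prec m_2'$ with $\vec r_{m_2'}>\vec r_{m_1'}$.

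Each shortened cycle has a bounded effect on the other side, of absolute value at most $\exp(\size\CC)$. It may be negative there.

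\emph{Step 3: glue into one forward-pumping cycle at $P$.} By strong connectivity, $\sigma(m_1)$, $\sigma(m_1')$ and $P$ are joined by paths of length $\le|\ChiSCC|$ inside $\RhoSCC$, with short side subtrees. I build $\cyc{x,y}$ as follows:
\begin{itemize}
\item start at $P$ with configuration $(\lPC,\rPC)$;
\item follow the original (shortened) prefix down to $m_1$;
\item iterate the left-increasing cycle $a$ times;
\item move to $\sigma(m_1')$ and iterate the right-increasing cycle $b$ times;
\item return to $P$.
\end{itemize}
I choose $a\gg b\gg 1$, each at most $\exp(\size\CC)$, so that the net left and right effects of the whole cycle are both $\ge1$.

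Nonnegativity on the left is checked top-down. The left configurations grow along the $a$ iterations before any right-phase rule consumes left resources, so they stay above $B$ wherever small witnesses are used.

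Nonnegativity on the right is the subtle point. Right side subtrees are evaluated bottom-up, from the leaf configuration $\vec r_y$. I would therefore argue from $y$ upward: fix $\vec r_y\ge\rPC+1$ and check that the right configuration stays $\ge B$ along the return path and the $b$ right iterations. This holds because going upward through the right cycle only decreases $\vec r$ by bounded amounts, while the iterations raise it by about $b$. The left cycle's possibly negative right effect, of size at most $a\cdot\exp(\size\CC)$, is compensated by taking $b$ large relative to it. Since the right cycle may in turn hurt the left side by a bounded amount, $a$ must be chosen again relative to $b$. I would resolve this circular dependency by interleaving the left and right blocks in the order that respects the top-down evaluation of $\vec l$ and the bottom-up evaluation of $\vec r$.

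\emph{Main obstacle.} The main obstacle is this simultaneous bookkeeping. Left and right are evaluated in opposite directions, so the two compensations cannot be done greedily. I would first try to place all right-pumping blocks below all left-pumping blocks, and then bound the cross-effects by $\exp(\size\CC)$ per block. The total length is a sum of polynomially many $\exp$-bounded pieces, which keeps $|\ppath{x,y}|\le\exp(\size{\CC})$.
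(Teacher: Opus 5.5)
\textbf{There is a genuine gap in your Step 3.} The ``circular dependency'' you flag cannot be fixed by ordering the blocks. Once the side subtrees are fixed (you replace them by small witnesses), the top-down effect of the glued cycle is the sum of the effects of its blocks, however they are interleaved. Suppose the extracted left cycle has effect $(g_1,-e_1)$ and the right cycle has effect $(-e_2,g_2)$, with $g_1,g_2\geq 1$ and $e_1,e_2\geq 0$. Then $a(g_1,-e_1)+b(-e_2,g_2)$ has both coordinates positive for some $a,b$ only if $g_1g_2>e_1e_2$. For effects $(1,-2)$ and $(-2,1)$, iterating is useless. Your choice $a\gg b$ already makes the net right effect negative as soon as $e_1\geq 1$. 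Nothing in your Rackoff-style extraction rules this out: the first left-increasing and first right-increasing simple cycles on the path can be of this kind even when another cycle of the component pumps both sides. The opposite evaluation directions are also not the real obstacle. Fixing $(\vec l_x,\vec r_x)$ determines all configurations, and through the mirrored grammar $\GG_B^{\mrm{Mir}}$ and $\delta_B$ the right configuration is propagated top-down exactly like the left one. The obstacle is the cross-effect.

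\textbf{The missing idea} is to \emph{not} pump the side that crosses its threshold first, but to bank it using the real prefix.
\begin{itemize}
\item Let $i$ and $j$ be the first indices on $\ppath{p,q}$ with $\vec l_{n_i}\geq b_1$ and $\vec r_{n_j}\geq b_2$ respectively, and say $i\leq j$.
\item Before $n_i$ both coordinates stay below $(b_1,b_2)$, so the most-simplified assumption gives $i\leq|\ChiSCC|(b_1+1)(b_2+1)$. The prefix up to $n_i$ is kept verbatim and delivers a left value $\geq b_1$ at bounded length.
\item From $n_i$ on, work in the graph $G$, where each edge changes only one side, via $\delta_A$ or $\delta_B$. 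Delete cycles with right gain $\leq 0$; this is harmless for the right by monotonicity.
\item Either a simple cycle with right gain $\geq 1$ appears, and it is iterated $b_4=\rPC+\Delta|\ChiSCC|+1$ times. Or the remaining short sequence already ends with right value $\geq b_2$.
\item Each edge costs at most $\Delta$ on the left, because $\delta_A(s)\geq s-\Delta$. So $b_1\geq \lPC+\Delta|\ChiSCC|(b_4+2)+1$ pays for the iterations and for the fastest returning path to $P$.
\end{itemize}
The constants are fixed in the acyclic order: $b_4$, then $b_1,b_2$, then the bound on $i$, then the length bound. This order is exactly what removes your circularity. The right-pumping cycle is sought only after the left is already huge, so its left cost is paid from the bank rather than by another cycle.

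\textbf{A secondary issue.} Do not replace the side subtrees of a gaining cycle by small witnesses, since the gain may come entirely from a side subtree. Keep the supremum semantics of $\delta_A,\delta_B$ for gains, and use fastest degenerate trees only to bound losses. The lemma bounds only $|\ppath{x,y}|$, not the side subtrees.
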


The main pumping technique for VASS is the \emph{Rackoff lemma}~\cite{rackoff}, based on induction. By appending a short pumping cycle for $(n-1)$ counters after a sufficiently large configuration, we obtain a pumping cycle for $n$ counters. However, this does not hold directly for GVAS. For $\cyc{x,y}$, being short (i.e., $|\ppath{x,y}|$ is small) does not imply that the effect is limited. Another choice is to discuss the pumpability on the relation-extended VASS model~\cite{DBLP:conf/lics/GuttenbergCL25_CmeVASS}, which fits well with the \emph{backwards coverability algorithm}, a straightforward yet computationally expensive brute-force algorithm.

For 1-TGVAS, we combine the techniques described above and exploit the \EXPSPACE{} upper bound for coverability. First, we introduce the \emph{forward-coverability upper bound function}:
\begin{definition}
    Given $A\in \ChiL$ and let $\GG_A$ be the induced sub-GVAS under $\RhoL$. The forward-coverability upper bound function $\delta_A(s) : \bbnww \rightarrow \bbnww$ is given by:
    \begin{enumerate}
        \item $\delta_A(-\omega) \defeq -\omega, \delta_A(+\omega) \defeq +\omega$.
        \item If $\cover(\GG_A, s, 0)$ rejects, then $\delta_A(s) \defeq -\omega$.
        \item Otherwise, $\delta_A(s) \defeq \sup\{t\in \bbn \mid \cover(\GG_A, s, t) \text{ accepts}\}$.
    \end{enumerate}
\end{definition}

Clearly, $\delta_A(s)$ is the largest value reachable from $s$. Notice that there exists $\Delta \leq \exp(\size\CC)$ such that for $s\geq \Delta$, $\delta_A(s)\geq s - \Delta$. That is true as long as $\Delta$ is greater than the absolute displacement of any acyclic, complete derivation tree. For $B\in \ChiR$, the upper bound function is defined on the \emph{mirrored} version of $\GG_B$, where $X\rightarrow CD$ is replaced by $X\rightarrow DC$ and $X\rightarrow \vec u$ is replace by $X\rightarrow -\vec u$.

Let the identical function be $\mrm{Id}:\bbn \rightarrow \bbn, \mrm{Id}(x)\defeq x$. We construct a directed graph $G(\ChiSCC, E)$ whose edges are labeled by relations on $\bbn^2\times \bbn^2$:

\begin{enumerate}
    \item For a left-degenerate rule $X\rightarrow AY\in \RhoSCC$, we add the edge $(X,Y)$ labeled by $(\delta_A(\cdot), \mrm{Id}(\cdot))$.
    \item For a right-degenerate rule $X\rightarrow YB\in \RhoSCC$, we add the edge $(X,Y)$ labeled by $(\mrm{Id}(\cdot), \delta_B(\cdot))$.
\end{enumerate}
Graph $G$ is actually a relation-extended 2-VASS defined in~\cite{DBLP:conf/lics/GuttenbergCL25_CmeVASS}. The pumpability problem for $\CC$ is equivalent to a coverability problem in $G$. To be specific, we consider the one-sided pumping case. The existence of an $\exp$-pumping cycle is equivalent to the existence of a cycle in $G$ satisfying:
\begin{enumerate}
    \item The cycle is from $P$ to $P$.
    \item The start configuration is $(\lPC, +\omega)$ and the final configuration is no less than $(\lPC+1, +\omega)$.
    \item The length is bounded by $\exp(\size\CC)$.
\end{enumerate}
By such conversion, it suffices to study the cycles in $G$. We give the sketched proof for the one-sided pumping lemma.

\begin{proof}[Sketch of Proof for Lemma \ref{lemma:one sided pumping}]
    If a path achieves a left configuration greater than $\lPC + |\ChiSCC|\cdot \Delta + 1$, then directly returning to $P$ yields a pumping cycle. By requiring $\vec l_n$ to be sufficiently large, and eliminating cycles on $\ppath{p,n}$, we obtain a pumping cycle whose length is bounded by $\exp(\size\CC)$.
\end{proof}

\begin{restatable}{lemma}{LemmaDecidabilityofPumping}
    For a nontrivial $\CC$ with $\Pump\neq \varnothing$, the $\exp$-pumpability is decidable in \textup{\EXPSPACE} under unary encoding.
    \label{lemma:pumpability is decidable}
\end{restatable}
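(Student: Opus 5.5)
The idea is to reduce $\exp$-pumpability of $\CC$ to the existence of a short nonnegative cycle in the relation-labelled graph $G(\ChiSCC,E)$ constructed above, and then to decide that with a nondeterministic exponential-space search. Savitch's theorem ($\mathbf{NEXPSPACE}=\EXPSPACE$) removes the nondeterminism. Because of unary encoding, $\size{\CC}$ bounds the number of symbols and the magnitudes of $\lPC,\rPC,\lQC,\rQC$. Since $\KT$ is fully constrained and $\Pump\neq\varnothing$, every variable in $\Pump$ is fixed in $\EE_{\mrm{Config}}$ to a number of at most exponential value; its binary representation therefore has polynomial size, and so does any configuration value below $\exp(\size{\CC})$ or any counter up to that bound.

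\textbf{Step 1: evaluating edge labels.} For an edge labelled $\delta_A$ (or mirrored $\delta_B$) with input $s$, I need to decide whether $\delta_A(s)\geq t$ for a given $t$, and whether $\delta_A(s)=+\omega$. The first question is exactly $\cover(\GG_A,s,t)$, which by~\cite{DBLP:conf/icalp/LerouxST15_Coverability} is in \EXPSPACE{} in the size of $\GG_A$ plus the binary sizes of $s,t$. Both sizes are polynomial here, so the test runs in $\EXPSPACE$ with respect to $\size{\CC}$. For $+\omega$ I would use the standard unboundedness characterisation: $\delta_A(s)=+\omega$ iff $\delta_A(s)\geq T$ for some threshold $T\leq\exp(\size{\CC})$, obtained from a pumping argument for 1-GVAS coverability. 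If that is unavailable, I would simply cap values at a threshold $M=\exp(\size{\CC})$ large enough for the pumping conditions, and treat anything $\geq M$ as ``large''. Monotonicity of $\delta_A$ means that reasoning with lower bounds is sound: a path only needs to reach a value at least a target, so it suffices to guess $t$ and verify $\delta_A(s)\geq t$.

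\textbf{Step 2: the search.} For the forward direction, guess step by step a path in $G$ from $P$ to $P$ of length at most $L=\exp(\size{\CC})$. The start configuration is $(\lPC,\rPC)$, or $(\lPC,+\omega)$ in the one-sided case. The current configuration is stored as a pair of binary numbers capped at $M$ (or $+\omega$), together with a step counter of polynomially many bits. At each edge, guess the next value $t\leq M$ on the non-identity side, check $\cover$ by the \EXPSPACE{} subroutine, and reject if it fails (which means the value is $-\omega$, i.e.\ nonnegativity is violated). Finally, accept iff the configuration dominates $(\lPC+1,\rPC+1)$, with the appropriate one-sided variant. Nonnegativity of the subtrees $\TT_A$ is exactly what $\cover(\GG_A,\cdot,\cdot)$ certifies, and the path configurations themselves stay in $\bbn$ by construction. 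This realises the equivalence, stated before the lemma, between pumping cycles in $\CC$ and covering cycles in $G$. Backward pumpability is handled symmetrically, running the search from $(\lQC,\rQC)$ on the reversed graph with mirrored upper-bound functions. Each test is run only for the variables actually in $\Pump$.

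\textbf{Main obstacle.} The delicate step is justifying that capping at $M$ is harmless, i.e.\ that a genuine pumping cycle exists iff one exists whose intermediate values are witnessed by coverability targets at most $M$. I would argue monotonicity first: replacing the exact value by a smaller lower bound only makes later $\delta$-steps smaller, so any accepting run still corresponds to a real cycle, which gives soundness. For completeness, the final requirement is only $\geq \lPC+1\leq\exp(\size{\CC})$, and each $\delta$-step decreases a large value by at most $\Delta$. Hence, along a path of length $L$, any value above $\lPC+1+L\cdot\Delta$, which is still exponential, can be truncated to that bound without affecting acceptance. Choosing $M$ this way, all numbers have polynomially many bits, and the whole procedure uses exponential space.
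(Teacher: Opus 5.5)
Your proposal is correct and follows essentially the same route as the paper. Both reduce $\exp$-pumpability to the existence of a $P$-to-$P$ cycle of length at most $\exp(\size{\CC})$ in the relation-labelled graph $G$, evaluate the $\delta_A,\delta_B$ labels by \EXPSPACE{} coverability queries truncated at $\vec l_P^{\CC}+L\cdot\Delta+1$ (justified by $\delta_A(s)\geq s-\Delta$), and then search for such a cycle.

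The differences are cosmetic: you guess the path nondeterministically and invoke Savitch's theorem, whereas the paper enumerates cycles deterministically in the truncated graph $G'$. You also spell out why capping is sound and complete, correctly using a minimum where the paper's displayed truncation writes $\max$, and you treat the backward case, which the paper leaves implicit.
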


\begin{proof}[Sketch of Proof]
    We intend to construct $G$ and enumerate cycles with length bound $\exp(\size\CC)$. Despite the fact that $\delta_A(s)$ may not be effectively computable, it is sufficient to compute:
    \[\max\left\{\delta_A(s), \lPC + \exp(\size\CC) \cdot \Delta + 1\right\},\]
    which is obtained by enumerating $t\leq \exp(\size\CC)$ and ask the coverability problem $\cover(\GG_A,s,t)$.
\end{proof}

Let $\KT \capture \TT$ be fully constrained and fully orthogonalized. Consider a nontrivial $\CC\capture\seg{p,q}$ that is not $\exp$-forward-pumpable. We can assume that $\forall n\in \ppath{p,q}, \vec l_n\leq v_{\mrm{B}}$ for $v_{\mrm{B}} = \exp(\size{\CC})$. We construct the division set $\subdiv{\seg{p,q}}$ first.

\emph{Step 1: Configuration encoding.} We construct an identical segment $\seg{p^\ast,q^\ast}$ except that bounded configurations are hard-encoded in nonterminals in $\ppath{p^\ast,q^\ast}$. New nonterminals are in $[0,v_{\mrm{B}}] \times \ChiSCC$. For $n^\ast\in \ppath{p^\ast, q^\ast}$ and its counterpart $n\in \ppath{p,q}$, we change the symbol by $\sigma(n^\ast) = (_{\vec l_n} \sigma(n))$, like $(_3X)$.

\emph{Step 2: SCC division.} Subdivide $\seg{p^\ast, q^\ast}$ by strong connectivity. The same as described in Subsection~\ref{subsection:alg}.

\emph{Step 3: Subtree completion.} Also the same. The size of the divisions set $\subdiv{\seg{p,q}}$ is bounded by Lemma~\ref{lemma:size bound for segment division two}.

%It holds that:
%\[\subdivs{\TT}\defeq  (\subdiv{\TT} \setminus \{\seg{p,q}\}) \cup \subdiv{\seg{p,q}}\] is a new division of $\TT$.

\begin{restatable}{lemma}{LemmaSizeofDivisionTwo}
    The division set $\subdiv{\seg{p,q}}$ for the combinatorial decomposition has a size bound of $\exp(\size{\CC})$.
    \label{lemma:size bound for segment division two}
\end{restatable}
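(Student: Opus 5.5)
The bound will come from counting the pieces produced by each of the three steps. I will show that the number of strongly-connected segments from Step~2 is at most $\exp(\size{\CC})$. Then I will show that each uncovered child in Step~3 contributes at most $\exp(\size{\CC})$ further segments, by Lemma~\ref{lemma:small_division}. This follows the pattern of Lemma~\ref{lemma:size bound for segment division}, except that Step~1 produces no ``bridging'' virtual-tree nodes; the only new ingredient is the symbol blow-up caused by configuration encoding.

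\emph{Bounding Step~2.} After Step~1, every nonterminal on $\ppath{p^\ast,q^\ast}$ lies in $[0,v_{\mrm{B}}]\times\ChiSCC$. Hence the number of distinct symbols is at most $(v_{\mrm{B}}+1)\cdot|\ChiSCC| \leq \exp(\size{\CC})$, using $v_{\mrm{B}}=\exp(\size{\CC})$. As in Subsection~\ref{subsection:alg}, merge nodes with equal symbols to obtain a graph $G$ whose condensation is a path. Each SCC of $G$ corresponds to exactly one contiguous subpath $\ppath{p^\dagger,q^\dagger}$. This holds because once the path leaves an SCC of the condensation it can never return: a return would put the two SCCs in the same strongly connected component. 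So the number of segments from Step~2 is at most the number of SCCs of $G$, which is at most the number of distinct encoded symbols, i.e. at most $\exp(\size{\CC})$.

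\emph{Bounding Step~3.} Each $\seg{p^\dagger,q^\dagger}$ has at most two children of $q^\dagger$, so there are at most $2\exp(\size{\CC})$ uncovered children $c$. Each such $c$ lies off $\ppath{p,q}$, so $\sub{c}$ is a complete derivation tree over rules of $\RhoL(\CC)\cup\RhoR(\CC)$. The one exception is the children of $q$ itself, which are not part of $\seg{p,q}$ and are already handled by the original division, so they are excluded. By Lemma~\ref{lemma:small_division}, applied to the sub-GVAS generated by these rules, whose size is at most $\size{\CC}$, each $\sub{c}$ admits a division of size at most $2^{|\ChiL|+|\ChiR|}\leq \exp(\size{\CC})$. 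Summing, $|\subdiv{\seg{p,q}}| \leq \exp(\size{\CC}) + 2\exp(\size{\CC})\cdot\exp(\size{\CC}) = \exp(\size{\CC})$, since $\exp$ absorbs polynomial factors in the exponent.

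\emph{Main obstacle.} The delicate point is the claim that each SCC of the encoded graph yields a \emph{single} contiguous subpath, and hence that Step~2 gives at most one segment per SCC. This rests on the condensation being a path, together with the fact that merging equal encoded symbols turns any revisit into a cycle. I would check this carefully, noting that distinct original nodes with the same encoded symbol $(_{\vec l}X)$ are identified. I also need every Step~2 segment to be strongly connected within its own production graph, so that it is a legitimate member of the division. Left-orthogonalized sides would need the extended capturing condition, but they do not affect the count.
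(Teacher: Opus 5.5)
Your proposal is correct and follows essentially the same route as the paper. Both arguments bound the encoded nonterminals by $(v_{\mrm{B}}+1)\cdot|\ChiSCC| \leq \exp(\size{\CC})$, which caps the number of strongly-connected segments from the SCC division, and then apply Lemma~\ref{lemma:small_division} to each uncovered child. The one cosmetic difference is that you allow two uncovered children per segment, whereas the paper notes there is at most one: for each $q^\dagger \neq q$, one child lies on $\ppath{p,q}$, and the children of $q$ are already covered by the original division. This does not change the $\exp(\size{\CC})$ bound.
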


\begin{proof}[Sketch of Proof for Lemma \ref{lemma:combinatorial decomposition}]
    Given a fully-constrained and fully-orthogonalized $\KT\capture \TT$. Let the $\CC\capture \seg{p,q}$ be nontrivial and not $\exp$-forward-pumpable. According to our assumption, we have $\forall n\in \ppath{p,q}, \vec l_n\leq v_{\mrm{B}}$. The division set is $\subdiv{\seg{p,q}}$. For $\seg{p^\ddagger,q^\ddagger}\in\subdiv{\seg{p,q}}$, we construct a new KLM component $\CC^\ddagger \capture \seg{p^\ddagger,q^\ddagger}$ and carry out the following case analysis:
    \begin{enumerate}
        \item \emph{The overlapping case.} If $\ppath{p^\ddagger,q^\ddagger}\subseteq \ppath{p,q}$, then we construct a left-orthogonalized $\CC^\ddagger$. Left configurations in $\ppath{p,q}$ are hard-encoded in nonterminals. For instance, for $n\in \ppath{p,q}$ which has a left-degenerate rule $\sigma(n)\rightarrow \sigma(c_1)\sigma(c_2)$, we add the rule $(_{\vec l_n}\sigma(n)) \rightarrow \sigma(c_1)(_{\vec l_{c_2}} \sigma(c_2))$ to $\RhoSCC(\CC^\ddagger)$. Consequently, we add the certificate rule $\sigma(c_1) \rightarrow \vec l_{c_2}-\vec l_n$ to $\RhoL(\CC^\ddagger)$. The variables $\lP^{\CC^\ddagger}, \lQ^{\CC^\ddagger}$ are fixed in $\EE_{\mrm{Config}}$. As a result, $\size{\CC^\ddagger} \leq \exp(\size\CC)$.
        \item \emph{The disjoint case.} Otherwise, $\CC^\ddagger$ lies in $\RhoL(\CC)$ or $\RhoR(\CC)$ and the construction is straightforward. We have $\size{\CC^\ddagger} = O(\size{\CC})$. 
    \end{enumerate}
    The rank analysis is similar to the algebraic decomposition. For the overlapping case, we have $\iota(\CC^\ddagger)\leq \iota(\CC)$ and $\gdim{\CC^\ddagger} < \gdim{\CC}$. For the disjoint case, $\iota(\CC^\ddagger)\leq \iota(\CC)-1$. We can then derive that $\rank{\NKT} < \rank{\KT}$ and $ \size{\NKT} \leq \exp(\size{\KT})$.
\end{proof}

\section{Perfect KLM Trees}
\label{section:perfect_KLM_tree}
In this section, given the reachability proble $\reach(\GG,s,t)$, we denote the size of input by $M \defeq (\size{\GG} + |s| + |t|)$. Binary encoding or unary encoding is irrelevant. We define the following operators with the composition of four basic refinement operators.
\begin{equation*}
    \begin{gathered}
            \clean \defeq \ortho\circ \constr, \\ \refA \defeq \clean\circ\decompA, \\
            \refC \defeq \clean\circ\decompC.
    \end{gathered}
\end{equation*}
For any $\KT$, $\clean(\KT)$ is another KLM tree that is fully-constrained and fully-orthogonalized. If $\KT$ is not unbounded on production rules, we apply $\refA(\KT)$; if $\KT$ is not $\exp$-pumpable, we apply $\refC(\KT)$. In either case, the resulting KLM tree has a smaller rank. Applying the refinement sequence in Figure~\ref{fig:refinement_seq} gives a perfect KLM tree.

\begin{restatable}{theorem}{TheoremSmallKT}
    Given an 1-TGVAS $\GG$ with index $k$ and the reachability problem $\reach(\GG,s,t)$. If there exists a complete, nonnegative derivation tree $\TT$, then there exists a perfect KLM tree $\KT$ such that:
    \begin{enumerate}
        \item $\KT \capture \TT$,
        \item $\size{\KT} \leq g(M)$ for some $g\in \FFF_{2k-1}$.
    \end{enumerate}
    \label{theorem:existence of small perfect KT}
\end{restatable}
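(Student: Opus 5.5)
We start from Lemma~\ref{lemma:initial_capturing_tree}, which gives an initial KLM tree $\KT_0 \capture \TT$ with $\size{\KT_0} \leq \exp(M)$. We then set $\KT_1 \defeq \clean(\KT_0)$ and iterate as in Figure~\ref{fig:refinement_seq}. Given a fully-constrained, fully-orthogonalized $\KT_i \capture \TT$, there are three cases.
\begin{enumerate}
    \item If some nontrivial component has a bounded $\#(\rho)$, we set $\KT_{i+1} \defeq \refA(\KT_i)$.
    \item Otherwise, if $\KT_i$ is not $\exp$-pumpable, we set $\KT_{i+1} \defeq \refC(\KT_i)$.
    \item Otherwise, $\KT_i$ is perfect and we stop.
\end{enumerate}
Lemmas~\ref{lemma:constraining}, \ref{lemma:orthogonalization}, \ref{lemma:algebraic decomposition} and \ref{lemma:combinatorial decomposition} give three facts for every step:
\begin{itemize}
    \item $\KT_{i+1}\capture\TT$, so capture is preserved;
    \item $\rank{\KT_{i+1}} < \rank{\KT_i}$, since cleaning does not increase the rank and the decomposition strictly decreases it;
    \item $\size{\KT_{i+1}} \leq f(\size{\KT_i})$ for a fixed $f(x)=\exp(x)$, taken monotone with $f(x)\geq x$; this uses that a composition of three exponential blowups is again $\EXP(\cdot)$.
\end{itemize}
Termination yields a perfect KLM tree capturing $\TT$, so it remains to bound its size.

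The key quantitative step bounds the length $L$ of the sequence via Lemma~\ref{lemma:bad_seq}. In dimension $d=1$ the rank lives in $\bbn^{2k-2}$: there are $k-1$ possible indices and $2$ geometric dimensions, with coordinates ordered lexicographically. A lexicographically strictly decreasing sequence is a bad sequence, because if $i<j$ and $\rank{\KT_i}\leq\rank{\KT_j}$ pointwise then $\rank{\KT_i}\leq_{\mrm{lex}}\rank{\KT_j}$, contradicting the strict decrease. For the control, every coordinate of $\rank{\KT_i}$ is at most $\size{\KT_i}\leq f^{i}(\exp(M))$. So $(\rank{\KT_i})_i$ is $(f,m)$-controlled with $m=\exp(M)$ and $f\in\FFF_2$. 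Lemma~\ref{lemma:bad_seq} with $\gamma=2$ and $n=2k-2$ then gives $L\leq h(m)$ for some $h\in\FFF_{2k-1}$.

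Finally, $\size{\KT_L}\leq f^{L}(m)\leq f^{h(m)}(m)$. Since $f\in\FFF_2$, iterating it $h(m)$ times is dominated by roughly $F_3(h(m)+m)$. The class $\FFF_{2k-1}$ is closed under composition with $F_{\leq 2k-1}$, and $2k-1\geq 3$ for $k\geq 2$, so this bound lies in $\FFF_{2k-1}$. Composing with $m=\exp(M)$ stays in the class, giving the required $g$. For the trivial case $k=1$ there are no nontrivial components and the initial tree is already perfect.

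The main obstacle is accounting for the control function correctly. Step (1) must be a single application of $f$ even though each refinement multiplies sizes through several exponentials, and the rank coordinates must indeed be bounded by the size. One also has to check that $\refA$ and $\refC$ produce trees to which the next step's lemmas apply, in particular that cleaning preserves the preconditions of the combinatorial decomposition. If the ranks are instead counted with $2kd-2d=2k-2$ coordinates without the index restriction, the exponent still matches $\FFF_{2-1+2k-2}=\FFF_{2k-1}$, which confirms the target bound.
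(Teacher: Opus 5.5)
Your proposal is correct and follows essentially the same route as the paper. You clean the initial tree from Lemma~\ref{lemma:initial_capturing_tree} and iterate $\refA$/$\refC$. You then bound the number of steps by applying Lemma~\ref{lemma:bad_seq} to the lexicographically strictly decreasing, $(f,m)$-controlled rank sequence in $\bbn^{2k-2}$ with $f\in\FFF_2$, which gives $\FFF_{2k-1}$. One small slip: composing three $\EXP(\cdot)$ blowups gives a triple exponential, not $\EXP(\cdot)$. This is harmless, because you only use that $f\in\FFF_2$ is monotone with $f(x)\geq x$, so your explicit bound of $f^{h(m)}(m)$ via $F_3$ still goes through (a step the paper leaves implicit).
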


\begin{proof}

By Lemma~\ref{lemma:initial_capturing_tree}, there exists an initial capturing KLM tree $\KT$ such that $\KT\capture \TT$ and $\size{\KT}\leq \exp(M)$. Applying $\{\refA,\refC\}^*$ on $\clean(\KT)$ gives a sequence of KLM trees: $\{\KT_0,\KT_1,\dots\}$. We prove the length bound of this sequence by constructing a $(f,m)$-controlled bad sequence:
    \begin{enumerate}
        \item Let $f\in \FFF_2$ be a proper function that bounds the size increase in $\refA$ and $\refC$.
        \item Let $m = f(\exp(M)).$
        \item We define $x_i \defeq \rank{\KT_i}$ for $i\geq 1$. We have $x_i\in \bbn^{2k-2}$. It holds that $\norm{x_i}_{\infty} \leq f^i(m)$ since we have $\norm{\rank{\KT_i}}_{\infty} \leq \size{\KT_i}$. By Lemma \ref{lemma:algebraic decomposition} and Lemma \ref{lemma:combinatorial decomposition}, the sequence $\{x_i\}$ is bad and contains no increasing pair.
    \end{enumerate}
It follows from Lemma~\ref{lemma:bad_seq} that the length function of this sequence (as a function of $m$) is in $\FFF_{2k-2+2-1} = \FFF_{2k-1}$. The size of the last element (the perfect KLM tree) can be represented as $g(M)$ for some $g\in \FFF_{2k-1}$. 
\end{proof}

The following fact is important: a perfect KLM tree $\KT$ is a certificate to the reachability problem itself. 

\begin{restatable}{lemma}{LemmaKTasCertificate}
    Let $\KT \capture \TT$ be a perfect KLM tree. If $\EE_{\KT}$ has a solution, then one can construct, without knowing $\TT$ and merely from $\KT$, another complete, nonnegative derivation tree $\NTT$ such that $\KT \capture \NTT$.
    \label{lemma:KT as certificate}
\end{restatable}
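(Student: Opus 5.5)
We follow the classical KLM "perfect sequence implies run" argument, adapted to the tree shape and built bottom-up over $\KT$. Since $\EE_{\KT}$ has a solution and every production variable is unbounded, we first fix a solution $\sol^\ast \in \solmin + \bbn(H)$ of the form
\[\sol^\ast = \solmin + N\cdot \solh,\]
for a sufficiently large $N$ to be chosen at the end. Every rule variable $\#(\rho)$ is then at least $N$. Every unbounded configuration variable is also at least $N$, because an unbounded configuration variable has positive value in $\solh$. Bounded configuration variables keep their values fixed by $\EE_{\mrm{Config}}$, using the fully constrained property. The global equations $\EE_{\mrm{Global}}$ glue the configurations of neighbouring components, so it suffices to build, for each nontrivial $\CC$ separately, a nonnegative segment $\seg{p,q}$ with $\CC\capture\seg{p,q}$ and boundary configurations equal to the values of $\lPC,\rPC,\lQC,\rQC$ in $\sol^\ast$. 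Terminal components are handled directly by $\EE_{\vec u}$. For orthogonalized sides, certificate rules $A\rightarrow \vec l_Y-\vec l_X$ stand for reachability subproblems of smaller index, which we take as given (checked recursively). Along the path, the left configuration is then determined by the hard-encoded symbols and is nonnegative.

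For a single component $\CC$, the Parikh image given by $\sol^\ast$ restricted to $\RhoSCC\cup\RhoL\cup\RhoR$ satisfies the Euler condition on $\ChiSCC$ with source $P$ and target $Q$, and every rule occurs. By strong connectivity of $\ChiSCC$, Euler's theorem for the path graph together with the in/out-degree bookkeeping on $\ChiL$ and $\ChiR$ yields a (possibly negative) segment realizing exactly this Parikh image. We do not use this segment as is. Instead we assemble it as
\[\text{forward pumping}^{\,N'} \;\cdot\; \text{body} \;\cdot\; \text{backward pumping}^{\,N'}.\]
The forward-pumping cycle is provided by $\exp$-pumpability for the sides in $\Pump$; for an unbounded side, the large value $N$ itself plays this role. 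Iterating the forward-pumping cycle $N'$ times from $(\lPC,\rPC)$ stays nonnegative and raises the path configurations on the pumped sides by at least $N'$. Symmetrically, backward pumping from $(\lQC,\rQC)$, read in reverse, raises the configurations above the target. In between we insert a body segment, whose Parikh image is that of $\sol^\ast$ minus $N'$ copies of each pumping cycle. This difference remains a nonnegative Euler-consistent vector as long as $N\gg N'\cdot\exp(\size\CC)$, which is where production unboundedness is used. The subtrees hanging off the path, rooted at $\ChiL$ and $\ChiR$ nonterminals, are distributed so that their counts match $\sol^\ast$; their displacements then add up correctly by the equations $\lPC+\sum\#(\vec u)\vec u=\lQC$ and its right-hand analogue.

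The main obstacle is showing that the body stays nonnegative. On the path, configurations drop by at most the body's length times $\norm{\GG}$ only if the hanging subtrees are controlled, yet those subtrees themselves are derivations of sub-grammars. First, we order the body so that its configurations deviate from the start configuration by at most $\exp(\size\CC)\cdot|\text{body}|$. Second, we choose $N'$ larger than this deviation. This is circular if we are careless, because the body length grows with $N$. To break the circularity, we decompose the body's Parikh image into a bounded part (a short Euler path of length at most $\exp(\size\CC)$) plus copies of simple top cycles from $H$. Each simple cycle is inserted at an occurrence of its base nonterminal on the short path, and is executed at a point where the pumped sides are already high. The cycles are interleaved with the pumping cycles so that each has net effect in the span given by $\gdim{\CC}$ and its drop is compensated. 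The hanging subtrees are built inside each inserted cycle with a nonnegative choice available by the same argument applied inductively on index; the recursion terminates because $\iota(\CC)<k$. Finally, $N$ is chosen as a function of $\size{\KT}$ alone, so $\NTT$ is constructed from $\KT$ without reference to $\TT$.
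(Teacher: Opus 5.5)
\textbf{The gap: ordering the body.} Your overall shape matches the paper's: a solution $\solmin+N\cdot\solh$, a block of forward-pumping cycles, a bounded segment $\gamma(\CC)$ realizing $\#(\solmin)(\CC)$, the homogeneous remainder, then a block of backward-pumping cycles. But the step you yourself flag as the main obstacle is not resolved.

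Splitting the remainder $N\cdot\#(\solh)(\CC)-N'(\#(\ccf)+\#(\ccb))$ into $\Theta(N)$ simple top cycles and inserting them ``where the pumped sides are already high'' gives no invariant. A bounded (pumped) side is raised only to about $N'\ll N$, while individual cycles can have negative effect on that side, so a bad order goes negative. Interleaving with pumping cycles cannot compensate, because there are far fewer of them. The remark that each effect lies in the span of $\Delta(\mathcal{O}_\CC)$ is true of every top cycle by definition and buys nothing. Note also that the remainder is not a sum of top cycles alone: it contains cycles of the lower SCCs in $\ChiL,\ChiR$, which must live inside hanging subtrees.

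\textbf{The missing idea: one fixed residual cycle.} Tie $N$ to $N'$ and package the whole remainder into one fixed cycle.
\begin{itemize}
\item Choose $b_3$ large enough that $b_3\cdot\#(\solh)(\CC)\ge\#(\ccf)+\#(\ccb)+\vec 1$ (the paper takes $b_3=b_1(b_2+1)$).
\item Then $b_3\cdot\#(\solh)(\CC)-\#(\ccf)-\#(\ccb)$ is homogeneous, Euler-consistent and uses every rule. So it is the Parikh image of a single top cycle $\ccr$ based at $Q(\CC)$, whose size $b_4$ does not depend on the repetition count.
\item With $N=b_3N'$, the segment is $\ccf^{N'}\,\gamma(\CC)\,\ccr^{N'}\,\ccb^{N'}$.
\end{itemize}
The linear-interpolation property of duplications then does the work. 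A node in the $i$-th copy of $\ccr$ has a configuration interpolating between its counterparts in the first and last copies. So you only need to check two copies:
\begin{itemize}
\item the first copy, which starts at height $\ge N'-b_2$ after the forward block and $\gamma(\CC)$;
\item the last copy, which ends at height $\ge N'$, reading back from the backward block.
\end{itemize}
On unbounded sides the height comes from $b\cdot\solh\ge b_3N'$ absorbing the drop of the pumping blocks. Taking $N'=b_2+b_4$ closes the argument; this is the paper's choice $b=b_3b_5$, $b_5=b_2+b_4$. The circularity disappears because the required height depends on one copy of $\ccr$, not on the whole body.

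\textbf{Hanging subtrees.} There is no ``same argument applied inductively on index'' on a non-orthogonalized side. Perfectness of $\KT$ says nothing about the sub-grammar under $\RhoL(\CC)$ or $\RhoR(\CC)$. Recursion on index concerns only certificate rules on orthogonalized sides, which are taken as already verified (Step 6 of Algorithm~\ref{alg:guessKT}).

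Non-negativity of a non-orthogonal hanging subtree comes only from the path configuration at its root exceeding (number of nodes)$\times$(largest absolute value) of the bounded piece containing it. That bound is available only because every hanging subtree sits inside one of the fixed-size pieces $\ccf,\ccb,\gamma(\CC),\ccr$. If you distribute the $\RhoL$-rule counts of $\sol^\ast$ freely, a single hanging subtree can contain $\Theta(N)$ terminals and dip by $\Theta(N)$. Meanwhile the path on a pumped side sits only at height about $N'$.
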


\begin{proof}[Sketch of Proof]
    The same technique as the KLM algorithm for VASS. $\EE_{\KT}$ has a minimal solution $\solmin$. Denote the sum of the Hilbert basis by $\solh$. Our goal is to determine a coefficient $b \in \bbn$ such that $\solmin + b\cdot \solh$ depicts a complete, nonegative derivation tree.
    
    We derive the segment non-negativity for each nontrivial $\CC$ independently. Let its forward- and backward-pumping cycles be $\ccf$ and $\ccb$, respectively. The solution $b\cdot \solh$ serves mainly to repeat $\ccf$ and $\ccb$. Suppose $\CC$ is not orthogonal. For any configuration variable, say $\lPC$, either $\lPC \in \Pump$ or $\lPC$ is unbounded. In both cases, choosing $b$ large enough allows $\lPC$ to be lifted to arbitrarily large values. This guarantees the non-negativity of the entire segment. We refer to Appendix~\ref{section:appendix_perfect_KLM_tree} for the detailed choice of $b$ and the arrangement of the segment.

\end{proof}

The perfectness properties are effectively decidable, according to Lemma~\ref{lemma:decidability_of_orthogonal} and Lemma~\ref{lemma:pumpability is decidable}. Now we are ready to prove Theorem~\ref{theorem:2k}. Given the reachability problem $\reach(\GG,s,t)$, we solve it nondeterministically, by guessing a perfect KLM tree $\KT$ with $\size\KT \leq g(M)$ for $g\in\FFF_{2k-1}$. See Algorithm~\ref{alg:guessKT}.  The correctness relies on the following two facts:

\begin{enumerate}
    \item If the answer to $\reach(\GG,s,t)$ is ``no'', then no guessed KLM tree can be perfect. Otherwise, by Lemma~\ref{lemma:KT as certificate}, a certificate derivation tree $\NTT$ could be constructed, contradicting the answer ``no''.
    \item If the answer to $\reach(\GG,s,t)$ is ``yes'', then, by Theorem~\ref{theorem:existence of small perfect KT}, there exists a small, perfect KLM tree $\KT$, which our nondeterministic algorithm can potentially guess.
\end{enumerate}

\SetKwInOut{Input}{Input}
\SetKwInOut{Output}{Output}

\begin{algorithm}[h]
\caption{Nondeterministic algorithm for 1-TGVAS reachability}
\label{alg:guessKT}

\Input{A $k$-indexed 1-TGVAS $\GG$}
\Output{Whether $\reach(\GG,s,t)$ holds}

\tcp{Guess a perfect $\KT$ with $\size\KT\leq g(M)$.}

\textbf{Step 1:} Guess the topological structure of $\KT$\;
\textbf{Step 2:} For each nontrivial $\CC$, guess its orthogonality\;
\If{not orthogonal}{
    Guess rules and symbols according to $\GG$\;
}
\Else{
    Guess nonterminals, hard-encoded configurations and production rules in $\RhoSCC$\;
    Guess certificate rules in $\RhoL$ or $\RhoR$\;
    Guess the remaining rules and symbols\;
}

\textbf{Step 3:} Guess all terminals and trivial KLM components\;
\textbf{Step 4:} Solve $\EE_{\KT}$; if no solution, \Return{false}\;
\textbf{Step 5:} Check perfectness; if not perfect, \Return{false}\;
\textbf{Step 6:} Recursively check certificate rules with Algorithm~\ref{alg:guessKT}\;
\If{any subproblem rejects}{
    \Return{false}\;
}
\Return{true}.

\end{algorithm}

\TheoremTwok*
\begin{proof}
    Algorithm~\ref{alg:guessKT} guesses a perfect $\KT$ with $\size{\KT}\leq g(M)$ for $g\in \FFF_{2k-1}$. The check for certificate rules is carried out by invoking the reachability subproblem. There are at most $g(M)$ subproblems like $\reach(\GG_A,s_A,t_A)$ with $\iota(\GG_A)<k$. The input size is given by $(\size{\GG_A} + |s_A| + |t_A|)$ and is bounded by $g(M)$. We prove that $\reach(\GG,s,t)$ with $\iota(\GG)=k$ is solvable in $g_{k}(M)$ for $g_{k}\in \FFF_{2k-1}$ by the following induction.
\begin{itemize}
\item [$k=1$.]
The GVAS is trivial (in the form of $S_0\rightarrow \vec u$). The reachability problem is solvable in $\FFF_1$.
\item [$k>1$.] We guess a perfect KLM tree in $g(M)$ for $g\in\FFF_{2k-1}$. We check at most $g(M)$ subproblems whose sizes are bounded by $g(M)$ and indices no greater than $k-1$. By assumption, these subproblems can be solved in:
\[g(M)\cdot g_{k-1}(g(M))\]
steps, where $g_{k-1}$ is a function in $\FFF_{2(k-1)-1} = \FFF_{2k-3}$. This function is in $\FFF_{2k-1}$.
\end{itemize} 
In conclusion, the reachability for $k$-indexed 1-TGVAS is in $\FF_{2k}$.
\end{proof}

\section{Conclusion}
\label{section:conclusion}
We have shown that the reachability problem for $k$-indexed 1-TGVAS is in $\FF_{2k}$, improving the previous known $\FF_{6k-4}$. We generalized the KLM decomposition method to KLM trees, providing an effective characterization of the $\bbz$-reachability of TGVAS.

\emph{Future research.} The exact complexity of the reachability problem for 1-GVAS remains open. A \textbf{PSPACE}-hardness is obtained in~\cite{PSPACElowerbound}. An Ackermannian upper bound follows from the reduction from general 1-GVAS to 1-TGVAS~\cite{DBLP:conf/stoc/BiziereC25_BC25}. Achieving a non-Ackermannian upper bound would require non-KLM-based approaches, such as analyzing the smallest derivation tree. For $d$-TGVAS, there is still room for improvement in the existing $\FF_{4kd+2k-4d}$ upper bound. The approximation method in~\cite{DBLP:conf/lics/GuttenbergCL25_CmeVASS} is widely applicable. By combining this approximation technique with the KLM tree, we conjecture an upper bound of $\FF_{2kd}$. Finally, the validity and necessity of the index as a measure remain open questions; for instance, whether there exists a lower bound parameterized by the index is to be investigated.

\bibliographystyle{ACM-Reference-Format}
\bibliography{ref}

\appendix
\section{Proofs in Section \ref{section:GVAS}}
\label{section:appendix_GVAS}
In this section, first we show the equivalence of different definitions of thinness and index. Then, we show the complexity upper bound for $k$-indexed $d$-TGVAS obtained by \cite{DBLP:conf/lics/GuttenbergCL25_CmeVASS}.

\subsection{Thinness and Index}
\DefinitionThinness*

We show that the definition in the main text is equivalent to what we state in the following lemma:

\begin{lemma}
    A GVAS $\GG$ is thin, iff for any $X\in\Chi$, $X$ cannot derive more than one copy of itself.
\end{lemma}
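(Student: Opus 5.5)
We prove the two implications separately, working in the production graph $\PG$. The key observation is that a derivation $X\Rightarrow \alpha X\beta X\gamma$ corresponds to two descendants of a single node with label $X$ that are both labelled $X$ and lie on different branches.

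\emph{($\Leftarrow$) If some nonterminal derives two copies of itself, then $\Rho$ contains a nondegenerate rule.} Suppose $X\Rightarrow \alpha X\beta X\gamma$. Take a derivation tree $\TT$ whose root $r$ is labelled $X$ and which has two leaves $n_1,n_2$ labelled $X$. Let $m=\mrm{lca}(n_1,n_2)$. Then $m$ has two children $c_1,c_2$ with $c_1\preceq n_1$ and $c_2\preceq n_2$, and $m$ uses a binary rule $\sigma(m)\rightarrow\sigma(c_1)\sigma(c_2)$.

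Every node on the path from $r$ to $n_i$ lies in the SCC of $X$ in $\PG$, since each such node is reachable from $X$ and reaches $X$. In particular $\sigma(m)$, $\sigma(c_1)$ and $\sigma(c_2)$ all lie in the SCC of $X$. Hence the rule at $m$ is nondegenerate, and $\GG$ is not thin in the sense of the definition.

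\emph{($\Rightarrow$) If $\Rho$ contains a nondegenerate rule, then some nonterminal derives two copies of itself.} Let $X\rightarrow AB$ be nondegenerate, so $X$, $A$ and $B$ lie in one SCC. Strong connectivity in $\PG$ gives paths $A\leadsto X$ and $B\leadsto X$. Each edge $(Y,Z)$ of $\PG$ between nonterminals comes from a rule $Y\rightarrow ZW$ or $Y\rightarrow WZ$, so it yields a one-step derivation of a sentential form containing $Z$. Following the edges of a path, we get derivations $A\Rightarrow \alpha_1 X\beta_1$ and $B\Rightarrow \alpha_2X\beta_2$. Combining them,
\[X\rightarrow AB\Rightarrow \alpha_1X\beta_1\alpha_2X\beta_2.\]
Only Chomsky-normal-form rules are used, so no properness assumption is needed.

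\emph{Main obstacle.} The only delicate point is the SCC argument in the ($\Leftarrow$) direction. We must check that every node on the root-to-leaf path from $r$ to an $X$-labelled leaf has its symbol in $X$'s SCC. This holds because such a node is reachable from $X$ along tree edges (which are $\PG$ edges) and reaches the leaf $X$ along tree edges. Both children of $m$ are on such paths, which is what forces nondegeneracy.
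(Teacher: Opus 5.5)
Your proof is correct and is essentially the paper's argument. In one direction, the lowest common ancestor of two $X$-labelled leaves carries a nondegenerate rule. In the other, a nondegenerate rule $X\rightarrow AB$ is expanded along production-graph paths from $A$ and $B$ back to $X$. You also check explicitly that every node on the root-to-leaf paths is reachable from $X$ and reaches $X$, so that $\sigma(m)$ also lies in the SCC; the paper leaves this step implicit.

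The only problem is cosmetic: your direction labels are swapped relative to the statement. The part you mark ($\Leftarrow$) is the contrapositive of ($\Rightarrow$), and vice versa. Since both implications are proved, the equivalence still stands.
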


\begin{proof}
    $(\Rightarrow)$ $\GG$ is thin. Suppose that $X$ has the derivation $X\Rightarrow \alpha X\beta X \gamma$. The corresponding derivation tree is $\TT_X$. Let the lowest common ancestor of the two $X$'s be node $p$. $p$ has the rule $\sigma(p)\rightarrow \sigma(c_1)\sigma(c_2)$. Since $\sigma(c_1)$ derives $X$ and $\sigma(c_2)$ derives $X$, $\sigma(c_1),\sigma(c_2),X$ are in the same SCC. The rule $\sigma(p)\rightarrow \sigma(c_1)\sigma(c_2)$ is nondegenerate. Contradiction. We conclude that $X$ cannot derive more than one copy of itself.
    $(\Leftarrow)$ $X$ cannot derive more than one copy of itself. Suppose that there exists a nondegenerate rule of $X$: the rule $X\rightarrow YZ$. Since $X,Y,Z$ are in the same SCC, it is immediate that $Y$ derives $X$ and $Z$ derives $X$. $X$ can derive two $X$'s. Contradiction. There exists no nondegenerate rule.
\end{proof}

Bizi{\`{e}}re et al. point out that ``every thin GVAS has a finite index'' in \cite{DBLP:conf/stoc/BiziereC25_BC25}, Claim 5. We show that every GVAS with a finite index is thin. Here the index is given by: \[\iota(\GG) = \sup\left\{ k_{S_0\Rightarrow \alpha} \mid S_0\Rightarrow \alpha \text{ is optimal} \right\}.\] Given a GVAS $\GG$, $\GG$ has a finite index. We can infer that for any $X\in\Chi$, the index of $\GG_X$ is finite. Assume that $\GG$ is not thin. There exists a nondegenerate rule $X\rightarrow YZ$ where $X,Y,Z$ are in the same SCC of $\PG$.  We have $\iota(\GG_Y) \geq \iota(\GG_X)$ and $\iota(\GG_Z)\geq \iota(\GG_X)$. Then the fact that $\iota(\GG_X)\geq \iota(\GG_Y)+1,\iota(\GG_X)\geq \iota(\GG_Z)+1$ contradicts the finite index. We conclude that every finitely indexed GVAS is thin.

As for thin GVAS, we show the equivalence of different definitions of the index.
\DefinitionIndex*

\begin{lemma}
Given thin GVAS $\GG$, denote a complete derivation as $S_0\Rightarrow \alpha$ where $\alpha \in \Sigma^*$. Then $\iota(\GG)$ defined in \ref{definition:index} satisfies:
\[\iota(\GG) = \sup\left\{ k_{S_0 \Rightarrow \alpha} \mid S_0\Rightarrow \alpha \text{ is optimal} \right\}.\]
\end{lemma}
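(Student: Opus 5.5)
We argue by induction on the SCC structure of $\PG$, processing SCCs from the lowest up to the topmost one, exactly as in the recursive Definition~\ref{definition:index}. Write $\iota^\ast(\GG)$ for the supremum of $k_{S_0\Rightarrow\alpha}$ over complete derivations. The claim is $\iota(\GG)=\iota^\ast(\GG)$, and we prove it simultaneously for all induced sub-GVASes $\GG_X$. The base case, a trivial GVAS $S_0\rightarrow\vec u$, gives value $1$ on both sides. For the inductive step we assume $\iota(\GG_A)=\iota^\ast(\GG_A)$ for every nonterminal $A$ outside the topmost SCC. Such an $A$ is a strictly lower symbol, because by thinness every non-top child of a top rule lies in a lower SCC.

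\emph{Upper bound} ($\iota^\ast\le\iota$). Fix a complete derivation tree $\TT$ of $S_0$ and take $k=\iota(\GG)$. We schedule rewriting as follows. Along the \emph{spine}, the maximal path of top-SCC nonterminals starting at the root, at each node we expand the off-spine child $A$ first and completely, using an optimal order for $\GG_A$. This costs at most $\iota(\GG_A)$ nonterminals, plus the one pending spine nonterminal, and the sum is at most $k$ by the first clause of Definition~\ref{definition:index}. When the spine ends at a fully degenerate rule $X\rightarrow AB$, we first expand the child with the larger index and then the other one. If the indices differ, the peak is $\max\{\iota(\GG_A),\iota(\GG_B)+1\}$, which equals $\max\{\iota(\GG_A),\iota(\GG_B)\}$ when $\iota(\GG_A)>\iota(\GG_B)$. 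If they are equal, the peak is $\iota(\GG_A)+1$. Either way the second clause bounds the peak by $k$. Terminal rules cost nothing.

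\emph{Lower bound} ($\iota^\ast\ge\iota$). Let $X$, a top-SCC nonterminal, and one of its rules be the witness forcing the value $k=\iota(\GG)$. By properness we can build a complete tree that reaches $X$ from $S_0$ and then uses that rule. For the children we plug in subtrees achieving index $\iota(\GG_A)$, which exist by the induction hypothesis; the supremum is attained because indices are integers bounded by a finite value. We then need a lemma: in any derivation order, at the moment the first symbol inside one child's subtree reaches its peak count, the sibling, or a pending spine symbol, is still present. For a left- or right-degenerate rule, the spine nonterminal $Y$ must persist until derivation below it starts, but the other child's subtree could also be rewritten after $Y$'s subtree. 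So the argument instead takes the time $t$ at which the last of the two subtrees begins to be rewritten (or reaches its peak). At $t$, the other subtree still contributes at least one nonterminal unless it is already finished. If it is finished, we use the $A$-subtree's own peak together with the fact that $Y$'s subtree is nonempty. We handle this by noting that $Y$ lies in the same SCC as $X$, so $Y$'s subtree can be chosen to again contain the witness rule. Iterating this choice, by pumping the SCC cycle, yields trees in which \emph{every} order pays $\iota(\GG_A)+1$. The equal-index fully degenerate case is the classical Strahler-number argument.

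I expect the main obstacle to be this lower-bound scheduling argument. An optimal order may interleave the two subtrees arbitrarily, so the proof needs a careful ``whichever finishes last'' timing lemma, stating that the index of a node with children of index $a\ge b$ is at least $\max\{a,b+1\}$ when $a\neq b$ and at least $a+1$ when $a=b$. It must be combined with cycle pumping for the degenerate rules, so that the spine side cannot be exhausted early.
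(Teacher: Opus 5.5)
Your plan (induction over the SCCs of $\PG$ from the bottom up, an upper bound via an explicit rewriting schedule, and tightness via witness trees from the induction hypothesis) follows the same route as the paper. The paper's proof is only a two-line sketch of exactly this.

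\textbf{Upper bound: the schedule for a fully degenerate rule is reversed.} For $X\rightarrow AB$, the step fails as written. In this measure a pending nonterminal costs $1$ and a completed subtree costs $0$. If you expand the child with the larger index first, the other child waits as one pending nonterminal the whole time. The peak is then $\max\{\iota(\GG_A),\iota(\GG_B)\}+1$. Whenever the two indices differ, this exceeds the value $\max\{\iota(\GG_A),\iota(\GG_B)\}$ allowed by the second clause of Definition~\ref{definition:index}. The peak $\max\{\iota(\GG_A),\iota(\GG_B)+1\}$ that you compute belongs to the opposite order: expand the child with the \emph{smaller} index first, while the larger one waits as a single symbol. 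This is the paper's rule (``expand first the nonterminal in a lower SCC, or with a smaller index''). It is the reverse of Sethi--Ullman register allocation, where a finished subtree still occupies a register. With the order swapped, your upper bound goes through.

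\textbf{Lower bound: the timing lemma should look at a different moment.} Your argument reaches the right statement, but ``the time at which the last of the two subtrees begins to be rewritten'' does not work, because the other subtree may already be finished by then. A clean version runs as follows.
Suppose both children of a node carry subtrees of tree index at least $m$. Fix any derivation order and take the first moment at which one of the two subtrees contains $m$ nonterminals. Such a moment exists because counts change by $\pm 1$ per step. The other subtree has not yet reached $m$, so it is not finished. Since the common parent is already expanded, it contributes at least one nonterminal, and the total is at least $m+1$.

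For $X\rightarrow AY$, you do not need iterated pumping. Place a single extra occurrence of the witness rule, with an $A$-subtree of index $\iota(\GG_A)$, inside the $Y$-subtree. This is possible since $Y$ derives $X$, and by properness. That alone gives the $Y$-subtree index at least $\iota(\GG_A)$, so the argument above applies.
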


\begin{proof}
According to our dynamic programming process, $\iota(\GG)$ is an upper bound of nonterminal appearances in optimal derivations. To prove that $\iota(\GG)$ is tight, one can easily induct the tightness from lower SCCs to the top.
\end{proof}

\subsection{Upper Bounds for $k$-Indexed $d$-TGVAS}

Basically, a $d$-VASS is a directed graph whose edges are labeled by vectors in $\bbz^d$. The semantics of an edge is as follows: for an edge $(p,q)$ labeled by $\vec w$, a configuration $p(\vec u)$ can transfer to a configuration $q(\vec v)$ if $\vec u,\vec v\in \bbn^d$ and $\vec u+\vec w = \vec v$.

Guttenberg, Czerwi\'nski and Lasota introduced an extended VASS model in \cite{DBLP:conf/lics/GuttenbergCL25_CmeVASS}. Let $\CC$ be a class of relations of $\bbn^d \rightarrow \bbn^d$. Notice that the relation class $\CC$ here should not be confused with the KLM component $\CC$ introduced later. Then, a $\CC$-extended VASS is a directed graph whose edges are labeled by relations in $\CC$. The semantics of an edge is: for $(p,q)$ labeled by relation $r:\bbn^d\rightarrow \bbn^d, r\in \CC$, a configuration $p(\vec u)$ can transfer to $q(\vec v)$ if $\vec u,\vec v \geq \vec 0$ and $r(\vec u) = \vec v$.

Functions, relations and sets of vectors are equivalent. A function or relation of $\bbn^d\rightarrow \bbn^d$ is equivalent a set on $\bbn^d \times \bbn^d$. A relation $r$ is \emph{monotone}, if for $\vec u,\vec v \in \bbn^d$, as long as $r(\vec u)$ is defined, we have $r(\vec u + \vec v) = r(\vec u) + \vec v$. In the set perspective, a set on $\bbn^d\times \bbn^d$ is monotone, if for $(\vec x, \vec y)\in S$ and any $\vec z \in \bbn^d$, it holds that $(\vec x + \vec z, \vec y + \vec z)\in S$.  A $\CC$-extended VASS is monotone, if every edge relation in $\CC$ is additionally monotone. Monotone $\CC$-extended VASS are abbreviated as $\CC$-m-eVASS.

Guttenberg, Czerwi\'nski and Lasota proposed a universal algorithm to approximate the reachability set of $\CC$-m-eVASS. The concept of ``approximation'' is out of our scope. It is a fact that: if a reachability set can be approximated, the corresponding reachability problem is easily decided. Their algorithm highly depends on the following theorem:

\begin{theorem}[Theorem V.5. in \cite{DBLP:conf/lics/GuttenbergCL25_CmeVASS}]
    Let $\alpha\geq 2$ and let $\CC$ be a class of relations approximable in $\FF_{\alpha}$, containing $\mrm{Add}$ and effectively closed under intersection with semilinear relations in $\FF_{\alpha}$. Then sections of $\CC$-m-eVASS are approximable in $\FF_{\alpha+2d+2}$.
    \label{Theorem:Guttenberg}
\end{theorem}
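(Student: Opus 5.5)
The plan is to follow the KLM-decomposition scheme of Section~\ref{section:decomposing_KLM_tree}, with vector labels replaced by relations from $\CC$. The outcome should be that the reachability relation between the two endpoints of a \emph{section} of a $\CC$-m-eVASS is a finite union of sets, each given by a \emph{perfect} sequence of components. Each such set is semilinear up to the approximation of its edge relations, and this is what ``approximable'' asks for.

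\emph{Setting up components and characteristic systems.} A run of a $\CC$-m-eVASS is split into segments, one per strongly connected component. A component is a tuple consisting of a source configuration, a target configuration, and an SCC whose edges are labelled by relations $r \in \CC$. Since $\CC$ is approximable in $\FF_{\alpha}$, each $r$ can be replaced by a semilinear over-approximation computed within $\FF_{\alpha}$. From this we write an integer linear system in the style of $\EE_{\KT}$:
\begin{itemize}
\item Parikh variables for edges, together with the Euler (in-degree equals out-degree) conditions;
\item for each edge, variables encoding the chosen element of the semilinear approximation of its relation;
\item global equations gluing consecutive components.
\end{itemize}
Monotonicity of the relations is essential here. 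It makes the homogeneous solutions (the Hilbert basis $H$) represent cycles that can be repeated from any sufficiently large configuration, exactly as for additive VASS edges. Closure of $\CC$ under intersection with semilinear relations, together with $\mrm{Add}\in\CC$, lets us express the refined components produced later inside the same class $\CC$.

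\emph{Refinements.} The cleaning steps are the analogues of Lemma~\ref{lemma:constraining} and of the orthogonalization lemma:
\begin{itemize}
\item fix every bounded configuration coordinate;
\item hard-encode the bounded counters into the states.
\end{itemize}
The decompositions are as follows:
\begin{itemize}
\item An \emph{algebraic} decomposition, analogous to Lemma~\ref{lemma:algebraic decomposition}. It is triggered by a bounded edge variable. The bounded edges are removed, the SCC splits, and the dimension of the cycle-effect space strictly drops.
\item A \emph{combinatorial} decomposition, analogous to Lemma~\ref{lemma:combinatorial decomposition}. It is triggered by failure of forward or backward pumpability. In that case some counter stays bounded along every run, so it is encoded into the states and the number of free counters drops.
\end{itemize}
Deciding pumpability is a coverability question in a $\CC$-m-eVASS of smaller effective dimension. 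I would handle it by a Rackoff-style induction on the number of unbounded counters, using the approximations of the edge relations as oracles. Each refinement increases the size by a function in $\FFF_{\alpha}$.

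\emph{Termination and the bound.} The rank of a component is the pair (geometric dimension of its cycle space, number of unbounded counters). The rank of a sequence of components counts how many components have each rank, ordered lexicographically. I would organise this so that the rank vector lives in $\bbn^{2d+3}$, for example one coordinate per possible dimension value in $[0,d]$ for each of two pumping sides, plus one extra coordinate. The refinement sequence then yields an $(f,m)$-controlled bad sequence with $f\in\FFF_{\alpha}$. By Lemma~\ref{lemma:bad_seq} its length lies in $\FFF_{\alpha+(2d+3)-1}=\FFF_{\alpha+2d+2}$. All perfect decompositions, and hence the approximation of the section, can therefore be enumerated within $\FF_{\alpha+2d+2}$.

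\emph{Main obstacle.} The delicate step is the analogue of Lemma~\ref{lemma:KT as certificate}. We must show that a perfect decomposition really realises all solutions of its characteristic system up to the allowed approximation, even though edges carry arbitrary monotone relations and are known only through their approximations. I would first try to prove that in a perfect component, taking $\solmin + b\cdot\solh$ with $b$ large and wrapping it in the forward and backward pumping cycles lifts every intermediate configuration high enough. Monotonicity then guarantees that each relation-edge is enabled with exactly the prescribed effect. The difficulty is that approximations are only over-approximations, so closing this gap may require that perfectness is stated relative to the approximation's own precision.
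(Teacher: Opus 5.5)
This statement has no proof in the paper. It is quoted from \cite{DBLP:conf/lics/GuttenbergCL25_CmeVASS} and used as a black box in Lemma~\ref{Lemma:4kd}, and the paper explicitly declares the notion of approximation out of scope. Your outline therefore has to stand on its own, and it has genuine gaps.

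\textbf{The meaning of ``approximable'' and the missing soundness step.} You read ``approximable'' as ``a semilinear over-approximation is computable in $\FF_\alpha$''. Under that reading every relation is approximable (take $\bbn^{2d}$). The hypothesis would then be empty, and the conclusion could not yield decidability, which is exactly how the paper uses it (``approximation indicates decidability''). The notion must therefore carry some precision guarantee. In the KLM spirit, roughly: the relation contains all sufficiently deep points of the linear sets over-approximating it. That guarantee is both what you must assume of the edge relations and what you must prove for sections.

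Your characteristic system only records that edge pairs lie in the over-approximation. So the pairs selected by $\solmin + b\cdot\solh$ need not belong to the relations at all. Monotonicity does not repair this: it only shifts pairs already in $R$, from $(\vec x,\vec y)$ to $(\vec x+\vec z,\vec y+\vec z)$. It never turns a point of the over-approximation into a point of $R$. Hence the sentence ``monotonicity then guarantees that each relation-edge is enabled with exactly the prescribed effect'' is unfounded. The analogue of Lemma~\ref{lemma:KT as certificate} is the heart of the theorem, and you leave it open, as you acknowledge. In this paper's 1-TGVAS argument the issue never arises, because every edge is either an actual grammar rule or a certificate rule whose exact effect is checked recursively.

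\textbf{The complexity accounting is not derived.}
\begin{enumerate}
\item Your component rank is a pair in $[0,d]\times[0,d]$. Counting components per rank value therefore gives a vector with quadratically many coordinates in $d$, not $2d+3$.
\item The ``two pumping sides plus one extra coordinate'' description is not connected to your definitions. It is also unclear what the two sides are for a $d$-dimensional eVASS: the left/right split of this paper is already absorbed into the theorem's $d$ via Lemma~\ref{lemma:CmeVASSreduction}.
\item You give no argument that either decomposition strictly decreases such a vector. The number $2d+3$ appears to have been chosen so that Lemma~\ref{lemma:bad_seq} returns the target level.
\item The claim that every refinement blows up size by a function in $\FFF_\alpha$ is unsupported. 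The combinatorial decomposition hard-encodes counters up to the threshold beyond which pumping succeeds. For relation-labelled edges a Rackoff-style induction fails, because a short path can change a counter by an unbounded amount (the paper makes this point right after Lemma~\ref{lemma:two sided pumping}). Oracle access to over-approximations cannot even decide coverability.
\item The paper points to backward coverability as the tool in the eVASS setting. In general its cost, analysed via controlled bad sequences in $\bbn^d$, rises in the hierarchy with $d$. That cost enters the control function, and with it the level you obtain.
\end{enumerate}
As it stands, your argument does not establish the bound $\FF_{\alpha+2d+2}$.
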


We explain several notions in the theorem. Firstly, the fast-growing hierarchy defined in \cite{DBLP:conf/lics/GuttenbergCL25_CmeVASS} coincides with our definition of $\FF_d$. Secondly, the ``section'' of $\CC$-m-eVASS stands for the intersection of the reachability relation of $\CC$-m-eVASS and a semilinear set. Moreover, the projection of a section is also called a section. Thirdly, the monotone section of $\CC$-m-eVASS contains the relation $\mrm{Add}$ and is effectively closed under intersection with semilinear sets indeed.

We aim to reduce a $k$-indexed $d$-TGVAS to a $\CC$-m-eVASS by Lemma~\ref{lemma:CmeVASSreduction}. Then, we obtain the complexity $\FF_{4kd+2k-4d}$ upper bound by induction in Lemma~\ref{Lemma:4kd}.

Let the identical function be $\mrm{Id}:\bbn^d \rightarrow \bbn^d$ and $\mrm{Id}(\vec x)\defeq \vec x$. Let the add function be $\mrm{Add}_{\vec u} : \bbn^d \rightarrow \bbn^d$ and $\mrm{Add}_{\vec u}(\vec x) \defeq \vec x + \vec u$. We sometimes concatenate functions, for instance $(\mrm{Id},\mrm{Id})$, which means then identical function of $\bbn^{2d}\rightarrow \bbn^{2d}$. Moreover, the concatenation $(\mrm{Id}, \mrm{Add}_{\vec u})$ stands for a function of $\bbn^{2d}\rightarrow \bbn^{2d}$, which copies the first $d$ digits and add $\vec u$ to the second $d$ digits. The concatenation of monotone functions is still monotone.

Additionally, we define the mirror of TGVAS $\GG(\Chi, \Sigma, \Rho, S_0)$ as $\GG^{\mrm{Mir}} \defeq (\Chi, \Sigma^{\mrm{Mir}}, \Rho^{\mrm{Mir}}, S_0)$ where:
\begin{enumerate}
    \item For $\vec u\in \Sigma$, we have $(-\vec u) \in \Sigma^{\mrm{Mir}}.$
    \item For $X\rightarrow AB \in \Rho$, we have $X\rightarrow BA \in \Rho^{\mrm{Mir}}$. 
    \item For $X\rightarrow \vec u\in \Rho$, we have $X\rightarrow (-\vec u)\in \Rho^{\mrm{Mir}}$.
\end{enumerate}

\begin{lemma}
    The reachability relation of a $k$-indexed $d$-TGVAS ($k\geq 2$) is equivalent to a finite union of reachability section projections of $2d$-dimensional $\CC$-m-eVASSes, where $\CC$ is the class of reachability sections of $(k-1)$-indexed $d$-TGVAS.
    \label{lemma:CmeVASSreduction}
\end{lemma}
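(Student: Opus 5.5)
We reduce the $k$-indexed $d$-TGVAS to a $2d$-dimensional $\CC$-m-eVASS by reading each strongly-connected part of a derivation tree as a path in a graph. The states are the nonterminals of one SCC of $\PG$. The $2d$ counters carry the left and right configurations $(\vec l_n,\vec r_n)$ of the current node $n$ on the spine. Since the TGVAS is thin, every rule in an SCC is partially or fully degenerate. We therefore first apply the topmost SCC division (as in Lemma~\ref{lemma:small_division}). A complete derivation tree from $S_0$ then decomposes into a spine inside $\Chi_{\mrm{top}}$, followed by a fully-degenerate or terminal exit at the bottom node.

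\textbf{Spine edges.} For a left-degenerate rule $X\rightarrow AY$ with $A$ outside the SCC, add an edge $(X,Y)$ labelled by $(R_A,\mrm{Id})$. Here $R_A\subseteq\bbn^d\times\bbn^d$ is the reachability relation of $\GG_A$: the set of root configurations $(\vec l,\vec r)$ of complete nonnegative trees from $A$. A right-degenerate rule $X\rightarrow YB$ is treated symmetrically. Reading the spine top-down, the right component goes from $\vec r_X$ to $\vec r_Y=\vec l_B$, which is backwards. So we use $R_B$ read from $\vec r_X$ to $\vec l_B$, i.e.\ the reachability relation of the mirror $\GG_B^{\mrm{Mir}}$. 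This is why the mirror was introduced.

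\textbf{Exits.} Since $\iota(\GG_A)\le k-1$ by Definition~\ref{definition:index}, each such label lies in $\CC$, provided $\CC$ is closed under mirroring and monotone. Monotonicity holds because adding $\vec z$ to the source of a nonnegative tree keeps it nonnegative. For the exit we consider two cases.
\begin{itemize}
\item A terminal exit $Q\rightarrow\vec u$ forces $\vec r_Q=\vec l_Q+\vec u$. This is a semilinear constraint on the final configuration.
\item A fully-degenerate exit $Q\rightarrow AB$ requires some $\vec m$ with $(\vec l_Q,\vec m)\in R_A$ and $(\vec m,\vec r_Q)\in R_B$. If $\iota(A)\neq\iota(B)$, both indices are at most $k-1$ by the definition. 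We then append one final edge labelled by the composed relation, or equivalently an extra state with two successive edges in a larger dimension.
\end{itemize}

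\textbf{The section and the finite union.} The reachability relation of $\GG$ from $S_0$ is then obtained as follows. Take the reachability relation of the eVASS from state $S_0$ with counters $(\vec s,\vec t)$ to the exit state. Intersect it with the semilinear exit condition (equality $\vec l=\vec r-\vec u$, or the glueing of $\vec l_Q,\vec r_Q$). Finally project away the intermediate coordinates. The finite union ranges over the possible exit rules, and over the SCC from which $S_0$ starts if $S_0$ itself is not in a nontrivial SCC.

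\textbf{Main obstacle.} The delicate case is the fully-degenerate exit with $\iota(A)=\iota(B)=k-1$. This forces $\iota(\GG)\ge k$ and no more, so both subtrees are legitimately of index $k-1$, and the composition $R_A\circ R_B$ is not obviously of index $k-1$. I would handle it by not composing. Instead I add two separate edges $R_A$ and $R_B$ along a two-step path: first apply $R_A$ to the left coordinates, storing $\vec m$ in the left slot, then apply $R_B$ from $\vec m$ and check equality with the stored right value via the final semilinear section. This keeps every edge label a single index-$(k-1)$ relation and stays within dimension $2d$.

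A second check is that the recursive exits of lower SCCs that are not on the top spine are absorbed into the labels $R_A$. This holds since $R_A$ is the full reachability relation of $\GG_A$, a genuine section. Equivalence in both directions is then a routine induction on the spine length: a run of the eVASS yields a tree, and conversely.
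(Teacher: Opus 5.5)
There is a genuine gap, in the exit case you treat as routine. You claim that for a fully-degenerate rule $Q\rightarrow AB$ with $\iota(\GG_A)\neq\iota(\GG_B)$ ``both indices are at most $k-1$ by the definition''. Definition~\ref{definition:index} says the opposite. In that case it only imposes $k\geq\max\{\iota(\GG_A),\iota(\GG_B)\}$, so one child may have index exactly $k$. An example is $S_0\rightarrow AU$ with $U\rightarrow\vec u$, $\iota(\GG_A)=k$, and $S_0$ alone in its SCC.

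The equal-index case, which you single out as delicate, is actually the benign one. There $k\geq\iota(\GG_A)+1$ forces both children to index $\le k-1$. Your two-edge trick handles it correctly: $(R_A,\mrm{Id})$, then $(R_B,\mrm{Id})$, then a section forcing left slot $=$ right slot.

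Consequently your picture ``spine inside $\Chi_{\mrm{top}}$ followed by an exit'' is wrong. The index-$k$ part of a derivation tree can run through a chain of several SCCs, linked by fully-degenerate rules with one child of index $k$. For such children your label $R_A$ is the reachability relation of a $k$-indexed TGVAS, not a member of $\CC$. Your remark that lower SCCs are ``absorbed into the labels $R_A$'' because $R_A$ is a genuine section hides exactly this: it is a section, but of the wrong index.

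The missing idea, on which the paper's proof is built, is to take as states all of $\Chi_{\iota=k}=\{X\mid\iota(\GG_X)=k\}$ rather than a single SCC. Index is non-increasing along derivations. No node can have two children of index $k$, since in both the partially-degenerate and the equal-index fully-degenerate case that would force index $k+1$. Hence the nodes labelled in $\Chi_{\iota=k}$ form a single path $\ppath{r,q}$ from the root.

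Every rule at $p\neq q$ on this path has exactly one child of index $k$ and one off-path child of index $\le k-1$, whether the rule is partially or fully degenerate. It therefore yields an edge labelled $(R(\GG_A),\mrm{Id})$ when the off-path child $A$ is on the left, or $(\mrm{Id},R(\GG_B^{\mrm{Mir}}))$ when the off-path child $B$ is on the right. Only the last node $q$ is an exit, and its rule is terminal or has both children of index $\le k-1$.

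With this state set the rest of your argument goes through: mirroring, monotonicity, the semilinear glueing and projection, and the finite union over exit rules. The paper glues the exit in one step instead, applying $R(\GG_A)$ on the left half and $R(\GG_B^{\mrm{Mir}})$ on the right half and demanding final configuration $(\vec a,\vec a)$. Your sequential version is equally valid.
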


\begin{proof}
    Given a $k$-indexed $d$-TGVAS $\GG$. We define the set of nonterminals:
    \[\Chi_{\iota=k} \defeq \left\{ X\mid X\in \Chi, \iota(\GG_X) = k \right\}. \]
    Consider the reachability problem $\reach(\GG,\vec s, \vec t)$. Suppose that there exists a nonnegative, complete derivation tree $\TT$, whose root is node $r$. Nodes, say $p$, with $\sigma(p) \in \Chi_{\iota = k}$, lie in a path in $\TT$. The path is $\ppath{r,q}$. We construct a $2d$-dimensional $\CC$-m-eVASS $\mathcal{V}$ to depict the configurations in $\ppath{r,q}$. Let the states of $\mathcal{V}$ be $\Chi_{\iota=k}$. The construction is straightforward: 

    \begin{enumerate}
        \item For any $p\in \ppath{r,q}$ and $p\neq q$, if $p$ has the production rule $\sigma(p)\rightarrow \sigma(c_1)\sigma(c_2)$ and the rule is left-degenerate, add edge $(\sigma(p),\sigma(c_2))$ to the edge set of $\mathcal{V}$. The label is determined by the rules used in $\sub{c_1}$. Denote the sub-GVAS induced by them as $\GG_{\sub{c_1}}$. Denote the reachability relation of $\GG_{\sub{c_1}}$ as $R(\GG_{\sub{c_1}})$. The reachability relation is in $\bbn^d \times \bbn^d$. The label for the edge $(\sigma(p),\sigma(c_2))$ is:
        \[(R(\GG_{\sub{c_1}}), \mrm{Id}) \in \bbn^{2d}\times \bbn^{2d}.\]
        \item For $p\in \ppath{r,q}$ and $p\neq q$, if $\sigma(p)\rightarrow \sigma(c_1)\sigma(c_2)$ is right-degenerate, we add the edge $(\sigma(p),\sigma(c_1))$ labeled by:
        \[(\mrm{Id}, R(\GG^{\mrm{Mir}}_{\sub{c_2}})) \in \bbn^{2d}\times \bbn^{2d}.\]
        \item For the node $q$, if $q$ has the rule $\sigma(q)\rightarrow \vec u$ in $\TT$, we create a new state $\mrm{Acc}$ and obtain the edge $(\sigma(q),\mrm{Acc})$. The edge label is:
        \[(\mrm{Add}_{\vec u},\mrm{Id}) \in \bbn^{2d}\times \bbn^{2d}.\]
        \item If $q$ has the rule $\sigma(q)\rightarrow \sigma(c_1)\sigma(c_2)$, we create the state $\mrm{Acc}$ and add the edge $(\sigma(q), \mrm{Acc})$ labeled by:
        \[(R(\GG_{\sub{c_1}}), R(\GG^{\mrm{Mir}}_{\sub{c_2}})) \in  \bbn^{2d}\times \bbn^{2d}.\]
    \end{enumerate}

    The edge relations are monotone. The derivation tree $\TT$ is mapped to a nonnegative run of $\mathcal{V}$ from $S_0$ to $\mrm{Acc}$. The configuration of $S_0$ is $(\vec s,\vec t)$. The configuration of $\mrm{Acc}$ is in the form of $(\vec a,\vec a)$ where $\vec a\in \bbn^d$. Together, a successful run in the reachability relation of $\mathcal{V}$ is the vector $(\vec s,\vec t, \vec a,\vec a)$. Let the semilinear set $S \subseteq \bbn^{4d}$ be:
    \[S\defeq \left\{ (\vec b,\vec c,\vec a,\vec a) \mid \vec a,\vec b,\vec c\in \bbn^d \right\}.\]
    It holds that $S$ is semilinear and monotone. Then, the relation $(\vec s,\vec t)$ is obtained as follows. First, intersect the reachability relation of $\mathcal{V}$ with $S$. Then, project the section into the first $2d$ digits. We conclude that the reachability relation of $\GG$ is a finite union of reachability section projections of $2d$-dimensional $\CC$-m-eVASSes like $\mathcal{V}$. By the definition of partially-degenerate rules and the index, $\CC$ contains only the reachability sections of $(k-1)$-indexed $d$-TGVAS.    
\end{proof}

\begin{lemma}
    The reachability problem for $k$-indexed $d$-TGVAS is in $\FF_{4kd+2k-4d}$.
    \label{Lemma:4kd}
\end{lemma}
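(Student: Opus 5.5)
We argue by induction on the index $k$, using Lemma~\ref{lemma:CmeVASSreduction} and Theorem~\ref{Theorem:Guttenberg}. The induction claim is that reachability sections of $k$-indexed $d$-TGVAS are approximable in $\FF_{\alpha_k}$ with
\[\alpha_k \defeq 4kd+2k-4d.\]
Reachability then inherits the same bound: once the reachability relation is approximated, deciding whether $(\vec s,\vec t)$ belongs to it is easy. The arithmetic we need is $\alpha_k = \alpha_{k-1}+4d+2$. This is exactly the increment that Theorem~\ref{Theorem:Guttenberg} contributes when it is applied to $\CC$-m-eVASS of dimension $2d$, since the dimension parameter $2d$ enters as $2\cdot(2d)+2 = 4d+2$.

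\emph{Base case $k=1$.} A $1$-indexed TGVAS is trivial, of the form $S_0\rightarrow \vec u$. Its reachability relation is the semilinear relation $\mrm{Add}_{\vec u}$ restricted to $\bbn^d$. Any section of it is semilinear and computable in elementary time, hence approximable in $\FF_2$. Since $\alpha_1 = 2$, this matches the formula. If one prefers to start at $k=2$, a direct application of Theorem~\ref{Theorem:Guttenberg} with $\alpha=2$ gives $\FF_{4d+4}=\FF_{\alpha_2}$, so the two starting points agree.

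\emph{Inductive step.} Let $\CC_{k-1}$ be the class of reachability sections of $(k-1)$-indexed $d$-TGVAS. By hypothesis $\CC_{k-1}$ is approximable in $\FF_{\alpha_{k-1}}$. We then check the remaining hypotheses of Theorem~\ref{Theorem:Guttenberg}.
\begin{itemize}
\item $\CC_{k-1}$ contains $\mrm{Add}$: a trivial TGVAS $S\rightarrow\vec u$ has index $1\le k-1$, which takes care of the edges $(\mrm{Add}_{\vec u},\mrm{Id})$ in the construction.
\item $\CC_{k-1}$ contains $\mrm{Id}$ and is closed under the concatenations used in Lemma~\ref{lemma:CmeVASSreduction}. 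A pair of independent relations is a section of the product system.
\item $\CC_{k-1}$ is effectively closed under intersection with semilinear relations. This holds by definition, because sections are already intersections with semilinear sets, and semilinear sets are closed under intersection.
\end{itemize}

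By Lemma~\ref{lemma:CmeVASSreduction}, the reachability relation of a $k$-indexed $d$-TGVAS is a finite union of projections of sections of $2d$-dimensional $\CC_{k-1}$-m-eVASS. The union ranges over the choices of the exit rule at $q$ and the sub-grammars labelling the edges, and there are finitely many such choices. Theorem~\ref{Theorem:Guttenberg} with $\alpha=\alpha_{k-1}$ and dimension $2d$ gives approximability in $\FF_{\alpha_{k-1}+2(2d)+2}=\FF_{\alpha_k}$. Approximability is preserved by projections and by finite unions at no cost in the hierarchy, which closes the induction.

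The main obstacle is the bookkeeping in the inductive step, not any new idea. Three points need care. First, one must confirm that Theorem~\ref{Theorem:Guttenberg}'s dimension parameter is the dimension $2d$ of the eVASS and not $d$. Second, the labels $R(\GG_{\sub{c_1}})$ must actually lie in the class $\CC_{k-1}$. The mirrored grammars $\GG^{\mrm{Mir}}$ preserve both index and thinness, so this holds. Third, the $\FFF_{<\alpha}$ preprocessing, namely constructing the eVASS and its finitely many edge classes, has to stay within budget. It is only exponential, so it is absorbed for $\alpha\ge 3$.
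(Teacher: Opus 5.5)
Your proposal is correct and follows the paper's own argument. The paper also inducts on $k$ with the trivial base case in $\FF_2$, then applies Theorem~\ref{Theorem:Guttenberg} to the $2d$-dimensional $\CC$-m-eVASS from Lemma~\ref{lemma:CmeVASSreduction}, so each level adds $2(2d)+2=4d+2$. Your explicit checks of the theorem's hypotheses (containment of $\mrm{Add}$, invariance of the index under mirroring, closure under semilinear intersection) only spell out what the paper leaves implicit.

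One minor slip: in the base case, ``elementary time'' would only place you in $\FF_3$. The $\FF_2$ claim still holds, because the relation $\mrm{Add}_{\vec u}$ restricted to $\bbn^d$ is a single linear set and can be handled trivially.
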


\begin{proof}
    By the approximation technique introduced in \cite{DBLP:conf/lics/GuttenbergCL25_CmeVASS}, approximation indicates decidability. It is sufficient to prove that the reachability sections of $k$-indexed $d$-TGVAS are approximable in $\FF_{4kd+2k-4d}$. We give the proof by induction.

    \begin{itemize}
        \item [$k=1.$] The TGVAS is trivial and the reachability sections can be approximated in $\FF_2$.
        \item [$k>1.$] Assume that the condition holds for $(k-1)$. Let the class of reachability sections of $(k-1)$-indexed $d$-TGVAS be $\CC$. By Theorem~\ref{Theorem:Guttenberg}, the reachability sections of $\CC$-m-eVASS are approximable in:
        \[\FF_{4(k-1)d + 2(k-1) -4d + 4d +2} = \FF_{4kd+2k-4d}.\]
        Then, according to Lemma~\ref{lemma:CmeVASSreduction}, the reachability sections of $k$-indexed $d$-TGVAS are effectively represented by the reachability sections of $\CC$-m-eVASS. We conclude that the reachability sections of $k$-indexed $d$-TGVAS are approximable in $\FF_{4kd+2k-4d}$.
    \end{itemize}
\end{proof}

The approximation method proposed in \cite{DBLP:conf/lics/GuttenbergCL25_CmeVASS} is significant. For 1-TGVAS, it provides a decent approach (an upper bound of $\FF_{6k-4}$) to the brute-force barrier (known as $\FF_{k}$, or Ackermannian). The brute-force barrier is the size of a possible, finite reachability set. We show that the finite reachability set of $k$-indexed 1-TGVAS is at least $\FF_{k}$ by constructing a $k$-indexed 1-TGVAS that weakly computes the fast-growing function $F_{k}$.

For convenience, we allow a slight violation of the Chomsky normal form. The nonterminal set is $\Chi = \{X_1,X_2,\dots ,X_{k}\}$ where $X_i$ weakly computes $F_i$. For $X_1$, we have the rules: $X_1 \rightarrow -1 X_12 \in$ and $X_1 \rightarrow 0$. For $X_i$ with $i\geq 2$, we have the rules: $X_i\rightarrow -1 X_iX_{i-1}$ and $X_i \rightarrow 1$. Such 1-TGVAS has a finite reachability set with an Ackermannian size.

\section{Proofs in Section \ref{section:decomposing_KLM_tree}}
\label{section:appendix_decomposing_KLM_tree}
In this section, we complete the proof of refinement lemmas in section \ref{section:decomposing_KLM_tree}.

\subsection{Orthogonalization}
We show the decidability of orthogonality and explain the computation of the uniform displacement $\vec u_A$ for $A\in \ChiL$.
\LemmaDecidabilityOfOrthogonal*

\begin{proof}
    Suppose we have $\CC\capture \seg{p,q}$ that is left-orthogonal. The following conditions are jointly equivalent to left-orthogonality:
    \begin{enumerate}
        \item Given $A\in \ChiL$ and the induced sub-GVAS of $\GG_A$ under the rule of $\RhoL$. For any cycle in $\GG_A$, its displacement is $0$.
        \item For such $A$, any complete derivation tree $\TT_A$ has fixed displacement, known as $\vec u_A$. 
        \item Consider the weighted, directed graph $G(\ChiSCC, E)$ constructed by mapping left-degenerate rules like $X\rightarrow AY \in \RhoSCC$ to the edge $(X,Y,\vec u_A)$ and right-degenerate rules like  $X\rightarrow YB\in \RhoSCC$ to the edge $(X,Y,0)$. Any cycle in $G$ has a weight sum of $0$.
    \end{enumerate}

    We show the equivalence. $( \text{Orthogonality} \Rightarrow \text{claims})$ The second claim covers the first one. Let $\TT_A$ and $\TT_A^\ast$ be two complete derivation trees with different displacements. By the definition of $\CC \capture \seg{p,q}$, we can construct a top cycle in $ {\mathcal{O}}_{\CC}$ with $A$ derived on the left at least once. Replace the subtree of $A$ by $\TT_A$ and $\TT_A^\ast$, respectively. The two top cycles obtained have different left effects and cannot both be $0$, contradicting the left-orthogonality. If the first two conditions hold, then any top cycle in $ {\mathcal{O}}_{\CC}$ can be mapped to a cycle in $G$. The third claim is immediate by this mapping. $(\text{Claims}\Rightarrow\text{Orthogonality})$ By these claims, any $A\in \ChiL$ has a fixed displacement. By claim (3)  the left-orthogonality is immediate.

    Next, we explain how to check them. Suppose that the KLM component $\CC$ is input under binary encoding. Claim (1) can be checked by enumerating simple cycles in $\GG_A$ in $\EXP(|\ChiL|)$ space. Claim (2) can be checked by enumerating complete derivation trees without cycles since cycles are useless. It is in $\EXP(|\ChiL|)$ space and the uniform displacement $\vec u_A$ is thereby computed. The uniform displacement has a bound of $\exp(\size\CC)$, which is not tight but enough. Claim (3) can be checked by constructing such $G$ and enumerating simple cycles within $\poly{|\ChiSCC|}$ space. The overall complexity is in $\EXPSPACE$.
\end{proof}

We emphasize that no nontrivial KLM component $\CC$ is both left- and right-orthogonalized. Indeed, a KLM component that is both left- and right-orthogonal has geometric dimension $0$, which contradicts its nontriviality.

\subsection{Algebraic Decomposition}
We prove the size bound of the division of segments.

\LemmaSizeofDivision*
\begin{proof}
    For any bounded derivation variable $\#(\rho)$, we have $\#(\rho) \leq \exp(\size{\KT})$ by Lemma \ref{lemma:pottier}. The number of different variables $\#(\rho)$ in $\CC$ is at most $\size{\CC}$. Therefore, the size $|N_{\Bounded}|$ is bounded by $\exp(\size{\KT})$.

    A virtual tree is the minimal subtree that is closed under the lowest common ancestor operation $\mrm{lca}(\cdot)$. For any $n \in \mrm{LCA}(N_{\Bounded})$, either of the following is true:
    \begin{enumerate}
        \item $n\in N_{\Bounded}$.
        \item $n\notin N_{\Bounded}$ and $n$ has two children in the virtual tree.
    \end{enumerate}
    We can infer that the size $|\mrm{LCA}(N_{\Bounded})|$ is at most $(2\cdot |N_{\Bounded}|-1)$. The worst case happens when all $n\in N_{\Bounded}$ are leaves.

    The segment fracturing step produces at most $\exp(\size{\KT})$ segments of $\seg{p^\ast,q^\ast}$. For each $\seg{p^\ast,q^\ast}$, the SCC division produces at most $\size{\CC}$ strongly-connected segments of $\seg{p^\dagger, q^\dagger}$, since the number of nonterminals is limited by $\size\CC$ and nonterminals do not overlap in different SCCs.

    In the subtree completion step, for any $\seg{p^\dagger, q^\dagger}$, it has at most two uncovered children. Consider an uncovered child $c_1$. We have $c_1\notin \ppath{p,q}$, since $\ppath{p,q}$ is fully fractured (That is, in the segment fracturing step, all nodes in $\ppath{p,q}$ are included in some segment $\seg{p^\ast,q^\ast}$).
    
    Therefore, $\sub{c_1}$ is a complete derivation tree. Rules used in $\sub{c_1}$ is the subset of $\RhoL(\CC)$ or $\RhoR(\CC)$. We construct the induced sub-GVAS $\GG_{\sigma(c_1)}$ and we have $\size{\GG_{\sigma(c_1)}} \leq \size{\CC}$.

    As a result, the number of segments in $\subdiv{\sub{c_1}}$ is limited by $\exp(\size{\CC})$ by Lemma \ref{lemma:small_division}. The overall segments in $\subdiv{\seg{p,q}}$ is bounded by:
    \begin{equation*}
    \begin{aligned}
    |\subdiv{\seg{p,q}}| &\le \exp(\size{\KT}) \cdot \size{\CC} \cdot \exp(\size{\CC}) \\
                          &\le \exp(\size{\KT}) .
    \end{aligned}
    \end{equation*}
\end{proof}

As for the algebraic decomposition, it remains to prove the strict decrease in rank.

\LemmaAlgebraicDecomposition*
\begin{proof}
    In the main text, we constructed new KLM components $\CC^\ddagger$ in order to capture $\seg{p^\ddagger, q^\ddagger}\in\subdiv{\seg{p,q}}$. We showed that $\size{C^\ddagger} = O(\size{\CC})$. The number of $\seg{p^\ddagger, q^\ddagger}$ are bounded by Lemma \ref{lemma:size bound for segment division}. Therefore, we have $\size{\NKT} \leq \exp(\size{\KT})$. 

    In order to prove $\rank{\NKT}< \rank{\KT}$, it suffices to show that for every nontrivial $\CC^\ddagger \capture \seg{p^\ddagger, q^\ddagger}$, $\rank{\CC^\ddagger} < \rank{\CC}$. Newly-constructed $\CC^\ddagger$ are distinguished by whether $\ppath{p^\ddagger, q^\ddagger} \subseteq \ppath{p,q}$ or not.
    
    \emph{The overlapping case.} If $\ppath{p^\ddagger, q^\ddagger} \subseteq \ppath{p,q}$, we can infer that:
    \begin{equation*}\begin{aligned}
        \RhoSCC(\CC^\ddagger) &\subseteq \RhoSCC(\CC), \\
        \RhoL(\CC^\ddagger) &\subseteq \RhoL(\CC), \\
        \RhoR(\CC^\ddagger) &\subseteq \RhoR(\CC).
    \end{aligned}\end{equation*}
    Since removing some production rules does not increase the index and the geometric dimension, it is immediate that $\rank{\CC^\ddagger}\leq \rank{\CC}$. We prove that $\gdim{\CC^\ddagger} < \gdim{\CC}$ by contradiction.

    Assume that $\gdim{\CC^\ddagger} = \gdim{\CC}$. That is, the dimension of the vector space $\bbq(\Delta( {\mathcal{O}}_{\CC^\ddagger}))$ is equal to the dimension of $\bbq(\Delta( {\mathcal{O}}_{\CC}))$, denoted by $d_{\mrm{G}}$. We can infer that $\bbq(\Delta( {\mathcal{O}}_{\CC^\ddagger})) = \bbq(\Delta( {\mathcal{O}}_{\CC}))$. There exists a set of bases $V = \{\vec v_1, \dots ,\vec v_{d_{\mrm{G}}}\}\subseteq \bbn^2$ such that:

    \begin{enumerate}
        \item $V \subseteq \Delta( {\mathcal{O}}_{\CC^\ddagger})$. Every $\vec v_i \in V$ is the effect of a top cycle in $ {\mathcal{O}}_{\CC^\ddagger}$.
        \item $\bbq(V) = \bbq(\Delta( {\mathcal{O}}_{\CC^\ddagger})) = \bbq(\Delta( {\mathcal{O}}_{\CC})).$
    \end{enumerate}

    Since $\CC\capture \seg{p,q}$, we can construct a top cycle $\cyc{x,y}$ containing all production rules in $\RhoSCC(\CC), \RhoL(\CC), \RhoR(\CC)$. We have $\Delta(\cyc{x,y}) \in \bbq(V)$. Consider the representation of the linear combination of $\vec v_i\in V$.
    \[\Delta(\cyc{x,y}) = \sum_{i=1}^{d_\mrm{G}} \lambda_i \vec v_i, \; \lambda_i \in \bbq.\]
    We obtain an equation with coefficients in $\mathbb{N}_+$ by removing zero coefficients, moving the terms with negative $\lambda_i$ to the left, and multiplying by a common denominator.
    \[\mu\Delta(\cyc{x,y}) +\sum_{\lambda_i<0} \mu_i\vec v_i  = \sum_{\lambda_i >0} \mu_i \vec v_i, \; \mu,\mu_i \in \bbn_+. \tag{*}\label{eq:star}\]
    By the segment fracturing step, $\CC^\ddagger$ contains no bounded rules. The left-hand side of (\ref{eq:star}) is the total effect of some cycles, where $\cyc{x,y}$ contains at least one bounded rule. The right-hand side of (\ref{eq:star}) is the total effect of some cycles in $\CC^\ddagger$ (or $0$). Equation (\ref{eq:star}) indicates that bounded rules in $\CC$ are unbounded. Contradiction. We have $\gdim{\CC^\ddagger} < \gdim{\CC}$.

    \emph{The disjoint case.} If $\ppath{p^\ddagger, q^\ddagger} \subseteq \ppath{p,q}$ does not hold, by the property of the virtual tree we can infer that $\ppath{p^\ddagger, q^\ddagger} \cap \ppath{p,q} = \varnothing$. The new component $\CC^\ddagger$ lies in the rule set $\RhoL(\CC)$ or $\RhoR(\CC)$. For any nonterminal $X^\ddagger$ occurring in $\ppath{p^\ddagger, q^\ddagger}$, we have $\iota(\GG_{X^\ddagger}) \leq \iota(\CC)$ since either $X^\ddagger\in \ChiL(\CC)$ or $X^\ddagger\in \ChiR(\CC)$. By the definition of index, for any $A^\ddagger \in \ChiL(\CC^\ddagger)$ or $A^\ddagger \in \ChiR(\CC^\ddagger)$, we have $\iota(\GG_{A^\ddagger}) \leq \iota(\GG_{X^\ddagger}) -1$. The index of $\CC^\ddagger$ is the largest index of such $\GG_{A^\ddagger}$, which means $\iota(\CC^\ddagger) \leq \iota(\CC) -1$.

    By the two cases above, we conclude that $\rank{\CC^\ddagger} < \rank{\CC}$ for nontrivial $\CC^\ddagger$ and $\rank{\NKT} < \rank{\KT}$.
\end{proof}

We did not show the algebraic decomposition for the orthogonalized cases explicitly, since we just treat nonterminals with hard-encoded configurations like $(_3X)$ or $(Y_5)$ as normal symbols. It is unnecessary to add equations to limit their configurations, since they are always fixed and are constrained in the next configuration constraining step.

\subsection{Combinatorial Decomposition}

Here, we discuss pumpability. We complete proofs in the main text, considering only the forward-pumpability.

Consider $\CC \capture \seg{p,q}$. In this subsection, we require $\CC$ be fully constrained. That is, the bounded values $\lPC$ or $\rPC$ are included in $\size\CC$. Given $A\in \ChiL$ and $(s, t) \in \bbn\times \bbn$. The coverability problem $\cover(\GG_A, s, t)$ is decidable in {\EXPSPACE} \cite{DBLP:conf/icalp/LerouxST15_Coverability}. We introduce the coverability upper bound function. Recall the mirror of TGVAS introduced in Appendix~\ref{section:appendix_GVAS}.

\begin{definition}
    Given $A\in \ChiL$ and let $\GG_A$ be the induced sub-GVAS under $\RhoL$. The forward-coverability upper bound function $\delta_A(s) : \bbnww \rightarrow \bbnww$ is defined as:
    \begin{enumerate}
        \item $\delta_A(-\omega) \defeq -\omega, \delta_A(+\omega) \defeq +\omega$.
        \item If $\cover(\GG_A, s, 0)$ rejects, then $\delta_A(s) \defeq -\omega$.
        \item Otherwise, $\delta_A(s) \defeq \sup\{t\in\bbn \mid \cover(\GG_A, s, t) \text{ accepts}\}$.
    \end{enumerate}
\end{definition}
\begin{definition}
    For $B\in \ChiR$, the function $\delta_B(s) : \bbnww \rightarrow \bbnww$ is defined according to $\GG_B^{\mrm{Mir}}$:
    \begin{enumerate}
        \item $\delta_A(-\omega) \defeq -\omega, \delta_A(+\omega) \defeq +\omega$.
        \item If $\cover(\GG_B^{\mrm{Mir}}, s, 0)$ rejects, then $\delta_B(s) \defeq -\omega$.
        \item Otherwise, $\delta_B(s) \defeq \sup\{t\in\bbn \mid \cover(\GG_B^{\mrm{Mir}}, s, t) \text{ accepts}\}$.
    \end{enumerate}
\end{definition}

The forward-coverability upper bound function is monotone. For example, given $\delta_A(\cdot)$ and $s,t \in \bbnww$, if $s\leq t$ then $\delta_A(s)\leq \delta_A(t)$. The \emph{truncated} version of function $\delta_A(\cdot)$ (or $\delta_B(\cdot)$) are effectively computable. Given $\GG_A, s\in \bbn$ and the bound value $b\in \bbn$, the truncated function: \[\min\{\delta_A(s), b\}\] is computable in \EXPSPACE\  by enumerating $t \in [0,b]$ and ask the coverability problem $\cover(\GG_A, s, t)$.

The following lemma provides a lower bound for the coverability upper bound function:

\begin{lemma}
    There exists some $\Delta \leq \exp(\size{\CC})$ such that for every acyclic, complete derivation tree under $\RhoL$, its displacement is no less than $(-\Delta)$. For $s\in \bbn, s\geq \Delta$, we have $\delta_A(s) \geq s - \Delta$. The same $\Delta$ applies for $\RhoR^{\mrm{Mir}}$ and $\delta_B(\cdot)$.
    \label{Lemma:Lower bound for delta} 
\end{lemma}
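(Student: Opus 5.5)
The plan is to define $\Delta$ directly as the maximal absolute displacement over acyclic complete derivation trees under $\RhoL$, bound it by a size argument, and then get the lower bound on $\delta_A$ by exhibiting one explicit witness tree for coverability.

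\emph{Step 1: bounding the displacement of acyclic trees.} Call a complete derivation tree under $\RhoL$ acyclic if no root-to-leaf path repeats a nonterminal. Every such path then has length at most $|\ChiL|+1$. Since the tree is binary, it has at most $2^{|\ChiL|+1}$ leaves. Each leaf is a terminal of absolute value at most $\norm{\CC}$, so the displacement of the tree is at least
\[-2^{|\ChiL|+1}\cdot\norm{\CC}.\]
Set $\Delta$ to the maximum of this quantity computed for $\RhoL$ and for $\RhoR^{\mrm{Mir}}$. Mirroring only negates terminals and swaps children, so the same bound holds there. In both cases $\Delta\leq \exp(\size{\CC})$.

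\emph{Step 2: choosing a witness.} Fix $A\in\ChiL$ and $s\geq\Delta$. The sub-GVAS $\GG_A$ is proper, so $A$ has a complete derivation tree. Any complete tree can be made acyclic by repeatedly contracting a repetition $X\Rightarrow\cdots X\cdots$ on a path, i.e.\ replacing the subtree at the upper $X$ by the subtree at the lower $X$. Hence some acyclic complete $\TT_A$ exists. I would rather take $\TT_A$ of minimal height, which is automatically acyclic.

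\emph{Step 3: nonnegativity, the main obstacle.} The root of $\TT_A$ gets configuration $(s,s+e)$, where $e\geq-\Delta$ is its displacement. Every left configuration $\vec l_n$ equals $s$ plus the sum of the terminals to the left of $n$ in preorder. Those terminals are leaves of the same acyclic tree, so there are at most $2^{|\ChiL|+1}$ of them. Their sum is therefore at least $-\Delta$, and every configuration is at least $s-\Delta\geq 0$. Thus $\TT_A$ is nonnegative. The subtle point is that Step 1 must bound all partial prefix sums of leaves, not only the total; that is why $\Delta$ is defined via the leaf-count bound rather than as the minimal total displacement.

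\emph{Conclusion.} $\TT_A$ witnesses $\cover(\GG_A,s,s+e)$ with $s+e\geq s-\Delta\geq 0$, so $\delta_A(s)\geq s-\Delta$. Applying the same argument to $\GG_B^{\mrm{Mir}}$ gives the claim for $\delta_B$.
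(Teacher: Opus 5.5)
Your proof is correct, and its skeleton matches the paper's: an acyclic tree has height at most the number of nonterminals, hence exponentially many leaves, hence displacement at least $-\exp(\size{\CC})$.

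The one substantive difference is the choice of $\Delta$, and there your version is the better one. The paper sets $\Delta$ to the largest absolute value of the \emph{total} displacement of an acyclic tree, and it never argues that $\delta_A(s)\geq s-\Delta$. With that literal choice the second claim can fail, because small subtree displacements do not control prefix sums. For instance, take the rules $X_{2i}\rightarrow N X_{2i+1}$ and $X_{2i+1}\rightarrow X_{2i+2}P$ for $i<m$, together with $X_{2m}\rightarrow 0$, $N\rightarrow -c$ and $P\rightarrow c$. Every subtree has displacement in $[-c,c]$. Yet the unique tree from $X_0$ has leaf word $(-c)^m\,0\,c^m$, so it goes negative from $s=c$ as soon as $m\geq 2$.

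Your choice $\Delta=2^{|\ChiL|+1}\cdot\norm{\CC}$ bounds every prefix sum of the leaf word. That is exactly what nonnegativity of the intermediate configurations needs, and you check it explicitly. So your argument proves the statement at the point where the paper's sketch has a gap.

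Two small slips:
\begin{itemize}
\item You mean the maximum of the \emph{absolute value} of $-2^{|\ChiL|+1}\cdot\norm{\CC}$, not of the quantity itself.
\item A tree of minimal height need not be acyclic, since a repeated nonterminal on a branch shorter than the tallest one does not affect the height. Take a tree with the fewest nodes instead, or simply rely on your contraction argument, which already suffices.
\end{itemize}
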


\begin{proof}
    Given $A \in \ChiL$, any acyclic, complete derivation tree under the proper induced sub-GVAS $\GG_A$ has height bound $\size{\CC}$. Therefore, it has at most $\exp(\size{\CC})$ leaves. Its displacement is no less than $-\exp(\size{\CC})$. Such acyclic, complete derivation tree is called the \emph{fastest degenerate tree}. Take $\Delta$ as the biggest absolute value of the displacements of these fastest degenerate trees.
\end{proof}

Intuitively, in the forward-pumpability, when we  construct the pumping cycle $\cyc{x,y}$, a left-degenerate rule $X\rightarrow AY\in \RhoSCC$ produces subtree rooted by $A$, whose displacement should be maximized. Therefore, we can replace $A$ with the upper bound function $\delta_A(\cdot)$.

The \emph{forward-coverability relation extension graph} of KLM component $\CC\capture \seg{p,q}$ is an edge-labeled, directed graph $G(\ChiSCC, E)$. The label on the edge is a relation on $\bbn^2\times \bbn^2$. Let the identical function be $\mrm{Id}:\bbn \rightarrow \bbn$ and $\mrm{Id}(x)\defeq x$. The graph $G$ is constructed by:
\begin{enumerate}
    \item For a left-degenerate rule $X\rightarrow AY\in \RhoSCC$, add the edge $(X,Y)$ labeled by $(\delta_A(\cdot), \mrm{Id}(\cdot))$ to $E$.
    \item For a right-degenerate rule $X\rightarrow YB\in \RhoSCC$, add the edge $(X,Y)$ labeled by $(\mrm{Id}(\cdot), \delta_B(\cdot))$ to $E$.
\end{enumerate}

Given $(v_1,v_2)\in E$, we denote the relation by $\mrm{Rel}_{(v_1,v_2)}$. Actually, $G$ is a monotone, relation-extended-2-VASS defined in \cite{DBLP:conf/lics/GuttenbergCL25_CmeVASS}. The forward-pumpability of $\CC$ is equivalent to the forward-coverability on $G$. We explain this on the one-sided pumpability case.

Suppose $\lPC\in \Pump, \rPC\notin \Pump$ and there exists a pumping cycle $\cyc{x,y}$ with $\vec l_x = \lPC$ and $\vec l_y\geq \vec l_x+1$. Let the length $|\ppath{x,y}|$ be $h$. Then, there exists a \emph{pumping sequence} in $G$, denoted by $\{v_1, v_2,\dots v_{h}\}$, satisfying:
\begin{enumerate}
    \item $v_1 = v_h = P$, representing that the pumping cycle $\cyc{x,y}$ satisfies $\sigma(x) = \sigma(y)=P$.
    \item Every $v_i$ is labeled with a configuration $(\vec l_{v_i},\vec r_{v_i}) \in \bbnww \times \bbnww$. It holds that $(\vec l_{v_1},\vec r_{v_1}) = (\vec l_x,\vec r_x) = (\lPC, +\omega)$ and $(\vec l_{v_h},\vec r_{v_h}) \geq (\vec l_y, \vec r_y)$.
    \item For $v_i$ and its successor $v_{i+1}$, their configurations satisfy $(\vec l_{v_{i+1}},\vec r_{v_{i+1}}) = \mrm{Rel}_{(v_{i}, v_{i+1})} (\vec l_{v_{i}},\vec r_{v_{i}})$.
\end{enumerate}
The existence is immediate from the fact that $\delta_A(\cdot),\delta_B(\cdot)$ are the supremum of displacements of derivation trees.

On the other hand, if such $\{v_1,\dots, v_h\}$ exists, we can construct a pumping cycle $\cyc{x,y}$ with the corresponding coverability certificates. Notice that for the infinite case, i.e. for some $i\in[1,h]$ such that $\vec l_{v_i} < +\omega$ and $\vec l_{v_{i+1}} = +\omega$, we only need a complete derivation tree with displacement no less than $(\lPC + h\cdot \Delta + 1)$ to guarantee the pumpability by Lemma \ref{Lemma:Lower bound for delta}.

\emph{The fastest returning path.} Given $v\in G$, a configuration $(\vec l_v,\vec r_v)$ and a target $u\in G$. If $(\vec l_v,\vec r_v) \geq ( a + \Delta\cdot |\ChiSCC|, b + \Delta\cdot |\ChiSCC|)$ for $a,b\in \bbnw$, then there exists a \emph{fastest returning path} $\{v_1,v_2,\dots v_m\}$ with:
\begin{enumerate}
    \item $v_i\in G, v_1 = v, v_m = u$.
    \item The path is acyclic, i.e. $m\leq |\ChiSCC|$.
    \item The configuration $(\vec l_u, \vec r_u) \geq (a,b)$.
\end{enumerate}

The existence of a fastest returning path is immediate by Lemma \ref{Lemma:Lower bound for delta}.

\LemmaOneSidedPumping*
\begin{proof}
    First, we distinguish different bounds which are all written as $\exp(\size\CC)$ in the statement. We refer to $\vec l_n\geq b_1$ and $|\ppath{x,y}|\leq b_2$. We show the existence of these bounds.
    
    Consider the subpath $\ppath{p,n}$ with $|\ppath{p,n}|=h$. We construct the sequence $\{v_1,v_2,\dots, v_h\}$ in graph $G$ according to $\ppath{p,n}$. We have:
    \begin{enumerate}
        \item $v_1=\sigma(p)=P,v_h=\sigma(n)$.
        \item $(\vec l_{v_1},\vec r_{v_1}) = (\lPC, +\omega)$.
        \item $(\vec l_{v_h},\vec r_{v_h}) \geq  (\vec l_n, +\omega)$.
        \item $(\vec l_{v_{i+1}},\vec r_{v_{i+1}}) = \mrm{Rel}_{(v_{i}, v_{i+1})} (\vec l_{v_{i}},\vec r_{v_{i}})$.
    \end{enumerate}

    Then, we shorten the sequence by removing unnecessary cycles. If there exists $i<j\leq h$ such that $v_i=v_j$ and $\vec l_{v_j} - \vec l_{v_i} \leq 0$, the subsequence $\{v_{i+1},\dots, v_j\}$ is useless and thus is removed. Let the new sequence be $\{v_1^\ast,v_2^\ast,...,v_{h^\ast}^\ast\}$. The new configuration $\vec l_{v_{h^\ast}^\ast}$ is no less than $\vec l_n$. Here are two cases:
    \begin{enumerate}
        \item $\{v_1^\ast,v_2^\ast,...,v_{h^\ast}^\ast\}$ is acyclic. We have $h^\ast\leq |\ChiSCC|$. Appending the fastest returning path (back to $P$, the starting symbol of $\CC$) after $\{v_1^\ast,v_2^\ast,...,v_{h^\ast}^\ast\}$ yields a pumping sequence as long as we restrict: \[b_1 \geq \lPC +\Delta\cdot |\ChiSCC| + 1.\]
        The length of the pumping sequence is bounded by $2\cdot |\ChiSCC|$. We require $b_2\geq 2\cdot |\ChiSCC|$.
        \item $\{v_1^\ast,v_2^\ast,...,v_{h^\ast}^\ast\}$ contains at least one cycle. Suppose the first simple cycle is $\{v_i^\ast,\dots,v_j^\ast\}$ with $i<j \leq h^\ast$. We have $i\leq |\ChiSCC|$ and $j\leq 2\cdot |\ChiSCC|$. We have $\vec l_{v_j^\ast} \geq \vec l_{v_i^\ast} +1$. Repeating the cycle $\{v_{i+1}^\ast,\dots,v_j^\ast\}$ after $\{v_1^\ast, v_2^\ast,\dots ,v_{i}^\ast\}$ for $(\lPC + \Delta\cdot |\ChiSCC| +1)$ times gives a sequence with length bounded by:
        \[b_3\defeq |\ChiSCC| + |\ChiSCC|\cdot(\lPC + \Delta\cdot |\ChiSCC| +1). \]
        Appending the fastest returning path back to $P$ yields a short pumping sequence. We require $b_2 \geq b_3 + |\ChiSCC|$.
    \end{enumerate}

    By the conversion from a pumping sequence to a pumping cycle, there exists a pumping cycle with length bounded by $b_2$. All the requirements above give reasonable $b_1,b_2\leq \exp(\size\CC)$.
\end{proof}

The two-sided pumpability case is more complicated. However, the equivalence between the existence of a pumping sequence and the existence of a pumping cycle still holds.

\LemmaTwoSidedPumping*
\begin{proof}
    We distinguish the exponential bounds. We refer to $\vec l_{n_i}\geq b_1, \vec r_{n_j}\geq b_2$ and $|\ppath{x,y}|\leq b_3$.
    
    Let $h = |\ppath{p,q}|$. Nodes on $\ppath{p,q}$ are denoted by the sequence $\{n_1,n_2,\dots, n_h\}$ with $n_1 =p$ and $ n_h=q$. W.l.o.g., let $i$ be the smallest one with $\vec l_{n_i}\geq b_1$. Let $j$ be the smallest one with $\vec r_{n_j}\geq b_2$. Assume that $i\leq j$. By this assumption, for every $m\leq i$, we can infer that $(\vec l_{n_m},\vec r_{n_m}) \leq (b_1,b_2)$. By the most-simplified assumption of derivation trees, we have:
    \[i\leq |\ChiSCC|\cdot(b_1+1)\cdot (b_2+1). \]

    Then, we construct the pumping sequence $\{v_1,v_2,\dots\}$ on the forward-coverability relation extension graph $G$. Let $v_1 = \sigma(n_1),\dots,v_{j} = \sigma(n_{j})$. We have:
    \begin{enumerate}
        \item $(\vec l_{v_1},\vec r_{v_1}) = (\lPC,\rPC)$.
        \item For $m\in [1,j-1]$, it holds: \[(\vec l_{v_{m+1}},\vec r_{v_{m+1}}) = \mrm{Rel}_{(v_{m}, v_{m+1})} (\vec l_{v_{m}},\vec r_{v_{m}}).\]
    \end{enumerate}

    Next, we construct a new sequence $\{v_1^\ast,\dots,v_{h^\ast}^\ast\}$. Initialize the new sequence by $h^* = i-1$ and $v_1^\ast=v_1,\dots,v_{i-1}^\ast = v_{i-1}$. Scan $v_m$ for $m\geq i$ iteratively:
    \begin{enumerate}
        \item Add $v_m$ to the bottom. Now $v^\ast_{h^\ast} = v_m$.
        \item If there exists $m^\ast\in[i,h^\ast-1]$ such that $v^\ast_{m^\ast} = v^\ast_{h^\ast}$, the subsequence $\{v^\ast_{m^\ast},\dots, v^\ast_{h^\ast}\}$ forms a cycle. The right effect of this cycle is $\gamma = \vec r_{v^\ast_{h^\ast}} - \vec r_{v^\ast_{m^\ast}}$. If $\gamma \leq0$, the cycle is useless. We remove $\{v^\ast_{m^\ast+1},\dots, v^\ast_{h^\ast}\}$.
        \item Otherwise, $\gamma >0$. We found a simple cycle that lifts the right configuration by at least $1$. It holds that $h^\ast - i +1\leq |\ChiSCC|$. We define the following bound:
        \[b_4 \defeq \rPC + \Delta\cdot |\ChiSCC| + 1.\]
        Repeat the cycle $\{v^\ast_{m^\ast+1},\dots, v^\ast_{h^\ast}\}$ for $b_4$ times after the sequence $\{v^\ast_{1},\dots, v^\ast_{m^\ast}\}$. The right configuration is no less than $b_4$. We restrict $b_1$ by:
        \[b_1 \geq \lPC + \Delta \cdot|\ChiSCC|\cdot (b_4 + 2)+1.
        \]
        Appending the fastest returning path back to $P$ yields a pumping sequence. It has a length bound $i + |\ChiSCC|\cdot(b_4+2)$. We require $b_3 \geq i + |\ChiSCC|\cdot(b_4+2)$.
    \end{enumerate}

    If we can not find any cycle with $\gamma>0$, all cycles after $v^\ast_{i-1}$ are removed. The sequence $\{v^\ast_{1},\dots, v^\ast_{h^\ast}\}$ is short. We have $h^\ast < i + |\ChiSCC|$. Since we removed cycles with $\gamma \leq 0$, we have $\vec r_{v^\ast_{h^\ast}} \geq b_2$ by the monotonicity of $\delta_A(\cdot),\delta_B(\cdot)$. In this case,  we require:
    \begin{equation*}\begin{aligned}
        b_1 &\geq \lPC + 2\cdot \Delta\cdot |\ChiSCC|+1,\\
        b_2 &\geq \rPC + \Delta \cdot |\ChiSCC| +1.
    \end{aligned}\end{equation*}
    Appending the fastest returning path after $\{v^\ast_{1},\dots, v^\ast_{h^\ast}\}$ yields a pumping sequence with length bounded by $i + 2\cdot |\ChiSCC|$. We require $b_3 \geq i + 2\cdot |\ChiSCC|$.

    By the conversion of a pumping sequence to a pumping cycle, there exists a pumping cycle with length bounded by $b_3$. None of the requirements above conflict with another. We have $b_1,b_2,b_3\leq \exp(\size\CC)$.
\end{proof}

Next, we prove the decidability of the $\exp$-forward-pumpability.

\LemmaDecidabilityofPumping*

\begin{proof}
    Consider the two-sided forward-pumpability. We have $\lPC,\rPC\in \Pump$.
    If there is a pumping cycle $\cyc{x,y}$ with its length $h\defeq |\ppath{x,y}|$ bounded by $\exp(\size\CC)$, then there exists a pumping sequence $\{v_1,\dots,v_{h}\}$ on the forward-coverability relation extension graph $G$. If there exists any $v_i$ with $\vec l_{v_i} \geq \lPC+h\cdot \Delta + 1$, we know that $\CC$ is $\exp$-forward-pumpable on $\lPC$ by Lemma \ref{Lemma:Lower bound for delta}. The case for $\vec r_{v_i}$ and $\rPC$ is symmetric.
    
    Instead of constructing $G$, we construct the truncated forward-coverability relation extension graph $G'$ by replacing function $\delta_A(s)$ (and $\delta_B(s)$ for the right-side case) in $G$ with the truncated one: \[\max\left\{\delta_A(s), \lPC+h\cdot \Delta + 1\right\}.\]
    
    The truncated graph $G'$ is effectively computable, since the coverability problem of 1-GVAS is in \EXPSPACE \ \cite{DBLP:conf/icalp/LerouxST15_Coverability}. Then we enumerate cycles on $G'$ with length bounded by $h$ to check the pumpability. The overall complexity is within an exponential space of $\size\CC$. Since $\size\CC$ is an estimation of the size under unary encoding, we conclude that the $\exp$-pumpability is decidable in \EXPSPACE{} under unary encoding.
\end{proof}

As for the combinatorial decomposition, we first prove the size bound of the new division set. The conclusion is slightly different (the bound becomes $\exp(\size\CC)$, instead of $\exp(\size\KT)$), since we require $\KT$ be fully constrained. The values of $\lPC,\rPC$ are taken into account in $\size\CC$.

\LemmaSizeofDivisionTwo*
\begin{proof}
    In the configuration encoding step, the set of nonterminals is bounded by $|\ChiSCC(\CC)| \cdot (v_{\mrm{B}}+1)$ where $v_{\mrm{B}}=\exp(\size\CC)$. Therefore, the SCC division step produces at most $\exp(\size\CC)$ strongly-connected $\seg{p^\dagger,q^\dagger}$. In the subtree Completion step, each of these strongly-connected $\seg{p^\dagger,q^\dagger}$ has at most one uncovered child. By Lemma \ref{lemma:small_division}, this uncovered child produces at most $\exp(\size\CC)$ segments. The overall segments are no more than $\exp(\size\CC)$.
\end{proof}

For Lemma~\ref{lemma:combinatorial decomposition}, it remains to show the strict decrease in rank.

\LemmaCombinatorialDecomposition*
\begin{proof}
    In the main text, we obtained the set $\subdiv{\seg{p,q}}$. For $\seg{p^\ddagger,q^\ddagger
    }\in \subdiv{\seg{p,q}}$, we constructed $\CC^\ddagger \capture \seg{p^\ddagger, q^\ddagger}$. Provided that $\ppath{p^\ddagger,q^\ddagger} \subseteq \ppath{p,q}$, we have to prove $\rank{\CC^\ddagger} < \rank{\CC}$. We claim that $\gdim{\CC^\ddagger} < \gdim{\CC}$.

    If $\gdim{\CC} =2$, then the claim is immediate, since $\CC^\ddagger$ is left-orthogonalized. Its geometric dimension is at most $1$.

    If $\gdim{\CC} =1$, we will prove the triviality of $\CC^\ddagger$ by contradiction. Assume that $\CC^\ddagger$ is not trivial. Consequently, the set of possible top cycles $ {\mathcal{O}}_{\CC^\ddagger}$ is not empty. For any top cycle $\cyc{x^\ddagger,y^\ddagger}$ in $ {\mathcal{O}}_{\CC^\ddagger}$, it holds $\sigma(x^\ddagger) =\sigma(y^\ddagger)$. We denote $\sigma(x^\ddagger) = (_{\vec l_X}X)$. This top cycle can be mapped to a top cycle $\cyc{x,y}$ in $ {\mathcal{O}}_{\CC}$. It holds that $\vec l_x = \vec l_X$ and $\vec l_y = \vec l_X$. Equivalently, we have $\Delta(\cyc{x,y})[1] = 0$.
    
    Since the dimension of $\bbq(\Delta( {\mathcal{O}}_{\CC}))$ is $1$, this vector space can be represented as $\bbq( \{\lambda\})$ for some $\lambda \in \bbq^2, \lambda \neq (0,0)$. Since the original KLM component $\CC$ is not left-orthogonal, we have $\lambda[1]\neq 0$. Since $\Delta(\cyc{x,y}) \in \bbq(\Delta( {\mathcal{O}}_{\CC}))$, we can infer that $\Delta(\cyc{x,y})[2] = 0$. This indicates that all top cycles in $ {\mathcal{O}}_{\CC}$ have effect $(0,0)$, contradicting the nontrivial assumption.
\end{proof}

\section{Proofs in Section \ref{section:perfect_KLM_tree}}
\label{section:appendix_perfect_KLM_tree}
In this section, we prove Lemma~\ref{lemma:KT as certificate}. We show that a perfect KLM tree itself is a certificate to the reachability problem. That is, given a KLM tree $\KT$ without knowing the derivation tree $\TT$ it captures, we are still able to construct another complete, nonnegative derivation tree $\NTT$. To accomplish this, we introduce the following concepts and techniques first.

\emph{Appending segments.} Given two segments $\seg{x, y}, \seg{x^\dagger, y^\dagger}$. If $\sigma(y) = \sigma(x^\dagger)$, then the segment $\seg{x^\dagger, y^\dagger}$ can be appended after $\seg{x, y}$. This operation replaces the leaf node $y$ by $\seg{x^\dagger, y^\dagger}$. Given a cycle $\cyc{x,y}$, we denote by $\cyc{x,y}^m$ the cycle obtained by appending $\cyc{x,y}$ to itself for $(m-1)$ times. We call $\cyc{x,y}^m$ an \emph{$m$-duplication} of $\cyc{x,y}$.

\emph{Non-negativity of duplications.} Consider the $m$-duplication of a cycle: $\cyc{x,y}^m$. Let the $m$ identical cycles be: \[\cyc{x_1, y_1}, \dots ,\cyc{x_m,y_m}\] where $y_{i-1} = x_i$. We fix the root configuration $(\vec l_{x_1}, \vec r_{x_1}) \geq (0,0)$ and the leaf configuration $(\vec l_{y_m}, \vec r_{y_m}) \geq (0,0)$. We claim that $\cyc{x,y}^m$ is a nonnegative cycle, if the following conditions hold: 
\begin{enumerate}
    \item For any node $n_1\in \cyc{x_1,y_1}$, we have $(\vec l_{n_1},\vec r_{n_1}) \geq (0,0)$.
    \item For any node $n_m\in \cyc{x_m,y_m}$, we have $(\vec l_{n_m},\vec r_{n_m}) \geq (0,0)$.
\end{enumerate}

To see this, consider any node $n_i \in \cyc{x_i,y_i}$ for $i\in [1,m]$. Its configuration $(\vec l_{n_i},\vec r_{n_i})$ is the linear i
nterpolation of $(\vec l_{n_1},\vec r_{n_1})$ and $(\vec l_{n_m},\vec r_{n_m})$. More precisely, we have:
\[(\vec l_{n_i},\vec r_{n_i}) = \frac{m-i}{m-1} (\vec l_{n_1},\vec r_{n_1}) + \frac{i-1}{m-1} (\vec l_{n_m},\vec r_{n_m}).\]
It is immediate that the configuration of $n_i$ is nonnegative.

\emph{Non-negativity of segments.} Given a segment $\seg{x,y}$. Denote the number of nodes as $|\seg{x,y}|$. Denote the largest absolute value as $\norm{\seg{x,y}}$. Intuitively, as long as the configuration of node $x$ (or $y$) satisfies:
\[\vec l_x,\vec r_x \geq |\seg{x,y}|\cdot \norm{\seg{x,y}},\]
then $\seg{x,y}$ is a nonnegative segment.

We make a review on the pumpability. For a nontrivial $\CC \in \KT$, we analysis the orthogonality of $\CC$ and the set $\Pump$, and focus on the $\exp$-pumping cycles.
\begin{enumerate}
        \item If $\CC$ is left-orthogonal, we check whether $\rPC \in \Pump$ and $\rQC \in \Pump$. If $\rPC\in \Pump$, by $\exp$-pumpability, there exists a forward-pumping cycle $\cyc{x_1,y_1}$ such that:
        \begin{itemize}
            \item $\cyc{x_1,y_1}$ has root configuration $(\vec l_{x_1}, \vec r_{x_1}) = (+\omega, \rPC)$.
            \item $\cyc{x_1,y_1}$ is nonnegative.
            \item $\cyc{x_1, y_1}$ lifts $\rPC$ by at least $1$. That is, the configuration of $y_1$ satisfies $(\vec l_{y_1},\vec r_{y_1}) \geq (+\omega, \rPC+1)$.
        \end{itemize}
        
        \item If $\CC$ is left-orthogonal and $\rQC \in \Pump$, there exists a backward-pumping cycle $\cyc{x_2,y_2}$ such that:

        \begin{itemize}
            \item $\cyc{x_2,y_2}$ has leaf configuration $(\vec l_{y_2}, \vec r_{y_2}) = (+\omega, \rQC)$.
            \item $\cyc{x_2,y_2}$ is nonnegative.
            \item $\cyc{x_2, y_2}$ lifts $\rQC$ by at least $1$. That is, the configuration of $x_2$ satisfies $(\vec l_{x_2},\vec r_{x_2}) \geq (+\omega, \rQC+1)$.
        \end{itemize}
        \item If $\CC$ is right-orthogonal, the case is symmetric.
        \item If $\CC$ is not orthogonal, it is possible that the pumping cycles are required to lift configurations on both sides. For example, suppose $\lPC, \rPC\in \Pump$, There exists a forward-pumping cycle $\cyc{x_1,y_1}$ such that:

        \begin{itemize}
            \item $\cyc{x_1,y_1}$ has root configuration $(\vec l_{x_1}, \vec r_{x_1}) = (\lPC, \rPC)$.
            \item $\cyc{x_1,y_1}$ is nonnegative.
            \item $\cyc{x_1, y_1}$ lifts both $\lPC, \rPC$ by at least $1$. It holds that $(\vec l_{y_1},\vec r_{y_1}) \geq (\lPC+1, \rPC+1)$.
        \end{itemize}
        If $\CC$ is not orthogonal and $\lQC,\rQC\in \Pump$, the case is also symmetric.
\end{enumerate}

Denote the forward-pumping cycle by $\theta_1 = \cyc{x_1,y_1}$ if such cycle exists. Otherwise, let $\theta_1 = \varnothing$. By this assumption, we denote the backward-pumping cycle by $\theta_2 = \cyc{x_2,y_2}$. To emphasize that the pumping cycles $\theta_1,\theta_2$ are from $\CC$, we write $\ccf, \ccb$.

\LemmaKTasCertificate*
\begin{proof}
    We construct such $\NTT$ from a solution of $\EE_{\KT}$. Let one of the minimal solution of $\EE_{\KT}$ be $\solmin$. Let the Hilbert basis (i.e., the set of all nonnegative, minimal, homogeneous solutions) be $H$. Recall the definition of the sum of solutions in $H$:
    \[\solh = \sum_{\sol\in H} \sol.\]
    
    Let $\#(\sol)$ be the vector of all Parikh image variables in solution $\sol$. That is, we omit all configuration variables in $\sol$. Moreover, when considering the Parikh image vector of $\CC$, we write $\#(\sol)(\CC)$. We denote the Parikh image vector of $\ccf,\ccb$ as $\#(\ccf),\#(\ccb)$. We have $\#(\solh)\geq \vec 1$, since all Parikh image variables are unbounded.
    
    Since $\KT$ is fully constrained, for any solution $\sol$ and any bounded configuration variable $v$, we have $\sol(v) = \solmin(v)$. We construct $\NTT$ according to the solution:
    \[ \solmin + b \cdot \solh. \]
    And the rest of the proof is to determine the coefficient $b$.

    \emph{Step 1: covering the pumping cycles.} For every nontrivial $\CC$, denote the number of nodes in $\ccf$ by $|\ccf|$ and the largest absolute value of terminals in $\ccf$ by $\norm{\ccf}$. We obtain the following bound $b_1$ by:
    \[b_1 = \max_{\CC \in \KT}\{|\ccf|\cdot \norm{\ccf}+ |\ccb| \cdot \norm{\ccb} \} + 1. \]
    Clearly, $b_1\cdot \#(\solh)$ covers the Parikh images of $\#(\ccf), \#(\ccb)$ for all $\CC$.

    \emph{Step 2: organizing the residual rules.} Consider the Parikh image $\#(\solmin)(\CC)$. Since it fulfills the Euler equations in $\EE_\CC$, it is the Parikh image of some segment. Denote the segment by $\seg{x_3,y_3}$ or $\gamma(\CC)$. We have $\sigma(x_3)= P(\CC)$ and $\sigma(y_3) = Q(\CC)$. Let the bound $b_2$ be: 
    \[b_2 = \max_{\CC\in \KT} \{|\gamma(\CC)|\cdot \norm{\gamma(\CC)}\} + 1, \]
    where $|\gamma(\CC)|,\norm{\gamma(\CC)}$ are defined analogously. Moreover, we denote: \[b_3 = b_1 \cdot (b_2+1).\]
    The residual Parikh image is defined as: \[\#_{\mrm{res}}(\CC) \defeq b_3 \cdot \#(\solh)(\CC) - \#(\ccf) - \#(\ccb),\]
    which is the Parikh image of a top cycle in $ {\mathcal{O}}_\CC$. To see this, notice that every production rule in $\RhoSCC,\RhoL,\RhoR$ appears at least once in $\#_{\mrm{res}}(\CC)$ since $b_3 > b_1$. Moreover, for any nonterminal in $\#_{\mrm{res}}(\CC)$, its in-degree and out-degree coincide. Denote the top cycle with the Parikh image $\#_{\mrm{res}}(\CC)$ by $\cyc{x_4, y_4}$ or $\ccr$. Since $\sigma(x_4)$ and $\sigma(y_4)$ can be any nonterminal in $\ChiSCC$, we assume that $\sigma(x_4) = \sigma(y_4) = Q(\CC)$. The upper bound of the sizes of $\ccr$ for all $\CC$ is given by:
    \[b_4 = \max_{\CC\in\KT} \{|\ccr| \cdot \norm{\ccr}\}.\]

    \emph{Step 3: the pumping step.}  We construct the strongly-connected segment $\seg{p,q}$ with $\CC \capture \seg{p,q}$ by appending segments as:
    We are going to decide how many times $\ccf,\ccb,\ccr$ are duplicated in order to guarantee the non-negativity of $\seg{p,q}$. Let the bound $b_5$ be:
    \[b_5 = (b_2+b_4).\]
    We claim that:\[\seg{p,q} = \ccf^{b_5} \gamma(\CC) \ccr^{b_5} \ccb^{b_5}\] is a nonnegative segment. Consequently, we have: \[b = b_3 \cdot b_5\] according to the Parikh image.
    
    We check the configuration non-negativity with case analysis as follow:
    
    \emph{On the orthogonal side.} Consider the left-orthgonal case. All left configurations in $\ppath{p,q}$ are hard-encoded. For a left-degenerate rule $(_{\vec l_X}X)\rightarrow A(_{\vec l_Y}Y) \in \RhoSCC(\CC)$, we have the certificate rule $A\rightarrow \vec l_Y-\vec l_X \in \RhoL(\CC)$. We replace certificate rules in $\seg{p,q}$ with a complete, nonnegative derivation tree. All configurations to the left of $\ppath{p,q}$ are nonnegative. 
    
    \emph{On the non-orthogonal side.} We will illustrate the non-negativity for $\CC$ which is neither left- nor right-orthogonal. We discuss the boundedness of $\lPC$ and $\rPC$. If $\lPC$ is unbounded, we have $\lPC \notin \Pump$ and:
    \begin{equation*}\begin{aligned}
        \vec l_p &= \solmin(\lPC) + b\cdot \solh(\lPC) \\
        &\geq \solmin(\lPC) + b.
    \end{aligned}\end{equation*}
    Otherwise, we have $\lPC \in \Pump$ and $\vec l_p = \solmin(\lPC)$. The case is symmetric for $\rPC$.
    
    First, we prove the configuration non-negativity of $\ccf^{b_5}$. W.l.o.g., suppose that $\lPC$ is unbounded and $\lPC$ is bounded. That is, $\rPC \in \Pump$ and $\lPC\notin \Pump$. The configuration satisfies: \[(\vec l_p, \vec r_p) \geq (\solmin(\lPC) + b, \solmin(\rPC)).\]
    At the end of $\ccf^{b_5}$, the configuration is at least: \[(\solmin(\lPC) + b - b_1\cdot b_5, \solmin(\rPC) + b_5).\]
    The lower bound above is no less than $(b_1 \cdot b_2\cdot b_5, b_5)$. 
    
    Since $\ccf$ is a pumping cycle, it is always repeatable if the starting configuration is $(+\omega, \rPC)$. In fact, any duplication of $\ccf$ guarantees the non-negativity of configurations to the right of $\ppath{p,q}$. Moreover, by the following inequality:
    \[|\ccf|\cdot \norm{\ccf} \cdot b_5 \leq b_1\cdot b_5 \leq b_1\cdot b_2\cdot b_5,\] 
    the $b_5$-duplication of $\ccf$ is also nonnegative for any configuration to the left of $\ppath{p,q}$. We conclude that $\ccf^{b_5}$ is nonnegative.

    Because of the same reason, the duplication $\ccb^{b_5}$ is also nonnegative. In general, regardless of whether the variables $\lPC, \rPC,\lQC,\rQC$ belong to $\Pump$, the configuration at the end of $\ccf$, or symmetrically, at the beginning of $\ccb$, is at least $(b_5,b_5)$.

    Next, we show the non-negativity in $\gamma(\CC)$.The configuration at the beginning of $\gamma(\CC)$ is at least $(b_5, b_5)$. Since: \[|\gamma(\CC)| \cdot \norm{\gamma(\CC)}\leq b_2 \leq b_5,\] the segment $\gamma(\CC)$ is also nonnegative. The configuration at the end of $\gamma(\CC)$ is at least:
    \[(b_5- b_2, b_5-b_2 ) = (b_4, b_4).\]

    At last, we prove the non-negativity of the duplication $\ccr^{b_5}$. Denote these identical cycles by the sequence:
    \[\{(\ccr)_1, \dots ,(\ccr)_{b_5}\}.\]
    It suffices to show the non-negativity of $(\ccr)_1$ and $(\ccr)_{b_5}$. At the beginning of $(\ccr)_1$, the configuration is at least $(b_4,b_4)$. At the end of $(\ccr)_{b_5}$, the configuration is at least $(b_5,b_5)$. Immediately, it follows from:
    \[ |\ccr|\cdot \norm{\ccr} \leq b_4 \leq b_5\]
    that both $(\ccr)_1$ and $(\ccr)_{b_5}$ are nonnegative.

    We have constructed $\seg{p,q}$ for every nontrivial $\CC$ such that $\CC \capture \seg{p,q}$. $\seg{p,q}$ is nonnegative, and its strong connectivity directly follows from the construction: the top cycle $\ccr$ covers all nonterminals in $\ChiSCC$. 

    The rest of the KLM tree is easy to obtain. We construct a trivial segment for each trivial KLM component. We connect KLM components and terminals by the exit rules ($\rhoD$). Let $\NTT$ be the complete derivation tree under the solution:
    \[\solmin + b\cdot \solh.\]
    We conclude that $\NTT$ is nonnegative and $\KT \capture \NTT$.
\end{proof}

\end{document}